\newtheorem{theorem}{Theorem}[section]
\newtheorem{lemma}[theorem]{Lemma}
\newtheorem{remark}[theorem]{Remark}
\newtheorem{claim}{Claim}[section]
\newtheorem{definition}{Definition}[section]
\newtheorem{corollary}[theorem]{Corollary}
\def\eod{\vrule height 6pt width 5pt depth 0pt}
\newenvironment{proof}{\noindent {\bf Proof:} \hspace{.4em}}
                      {\hspace*{\fill}{\eod}}
\newenvironment{proofof}[1]{\noindent {\bf Proof of #1:} \hspace{.4em}}
                           {\hspace*{\fill}{\eod}} 
\newcommand{\aut} {{\tt aut}}
\newcommand{\Ve}[1] {V_{#1}}
\newcommand{\Ed}[1] {E_{#1}}
\newcommand{\pair}[2] {\langle #1,#2\rangle}
\newcommand{\QB} {B}
\newcommand{\sig} {{\tt sig}}
\newcommand{\cnt} {{\tt cnt}}
\newcommand{\calC} {{\cal C}}
\newcommand{\calL} {{\cal L}}
\newcommand{\inpQ} {{\cal Q}}
\newcommand{\treeQ} {{\cal T}}
\newcommand{\sat} {{\tt Satellite}}
\newcommand{\SQ} {{\sf SQ}}
\newcommand{\val} {{\sf val}}
\newcommand{\PS} {{\sf PS}}
\newcommand{\DB} {{\sf DB}}
\newcommand{\Pp} {P^{+}}
\newcommand{\Pm} {P^{-}}
\newcommand{\cnth} {\cnt^*}
\newcommand{\hi} {{\rm hi}}
\newcommand{\Lbytwo} {\lfloor L/2 \rfloor}
\newcommand{\ceil}[1] {\lceil #1 \rceil}
\newcommand{\half}{\frac 12}
\newcommand{\eone} {p}
\newcommand{\etwo} {q}
\newcommand{\flower} {{\sf Flower}}
\newcommand{\eat}[1] {}
\newcommand{\PSE} {\PS}
\newcommand{\DBE} {\DB}
\newcommand{\NodeJoin} {\textbf{NodeJoin}}
\newcommand{\EdgeJoin} {\textbf{EdgeJoin}}
\newcommand{\e}{\epsilon}
\begin{document}
\title{Subgraph Counting: Color Coding Beyond Trees}
\author[1]{Venkatesan~T.~Chakaravarthy}
\author[2]{Michael~Kapralov}
\author[1]{Prakash~Murali}
\author[3]{Fabrizio~Petrini}
\author[3]{Xinyu~Que}
\author[1]{Yogish~Sabharwal}
\author[3]{Baruch~Schieber}
\affil[1,3]{IBM Research}
\affil[1]{\{vechakra, prakmura, ysabharwal\}@in.ibm.com}
\affil[3]{\{fpetrin, xque, sbar\}@us.ibm.com}
\affil[2]{EPFL}
\affil[2]{michael.kapralov@epfl.ch}
\maketitle
\begin{abstract}
  The problem of counting occurrences of query graphs in a large
  data graph, known as subgraph counting, is fundamental to
  several domains such as genomics and social network
  analysis. Many important special cases (e.g. triangle counting) have received significant attention.
  Color coding is a very general and powerful algorithmic technique for subgraph counting. 
%  On a data graph with $n$ vertices and query graph with $k$ vertices, 
%  color coding runs in time $2^{O(k)} n^{O(t)}$, where $t$ is the {\em treewidth} of the query graph. 
  Color coding has been shown to be effective in several applications, but scalable implementations 
  are only known for the special case of {\em tree queries} (i.e. queries of treewidth one). 
  
  In this paper we present the first efficient distributed implementation for color coding that goes beyond tree queries: 
  our algorithm applies to any query graph of treewidth $2$. 
  Since tree queries can be solved in time linear in the size of the data graph, our contribution is the first step
  into the realm of colour coding for queries that require superlinear running time in the worst case.
  This superlinear complexity leads to significant load balancing problems on graphs with heavy tailed degree distributions.
  Our algorithm structures the computation to work around high degree nodes in the data graph, 
  and achieves very good runtime and scalability on a diverse collection of data and query graph pairs as a result. 
  We also provide theoretical analysis of our algorithmic  techniques, 
  showing asymptotic improvements in runtime on random graphs with 
  power law degree distributions, a popular model for real world graphs.
  \end{abstract}

  \section{Introduction}
Graphs serve as common abstractions for real world data, 
making graph mining primitives a critical tool for analyzing real-world
networks.  Counting the number of occurrences of a query graph in
a large data graph (subgraph counting, often referred to as motif counting) is an important problem with
applications in a variety of domains such as bioinformatics,
social sciences and spam detection (e.g.~\cite{irs, sociology-1, milo}).

Subgraph counting and its variants have received a lot of attention in the literature.
Substantial progress has been achieved for the case of small queries such as triangles or 4-vertex subgraphs:  
not only have very efficient algorithms been developed
(e.g.~\cite{triangleCohen, JhaSP15, triangle2, SuriV11}), but also
theoretical explanation of their performance on  popular graph models has been obtained ~(see \cite{BFMPSW14} and references therein).

Some of the recent work has addressed larger queries \cite{madduri,madduri4,BH13,graft,6413912}, 
but our understanding here is far from complete. Even for reasonably large graphs (a million edges) and small queries (e.g. $5$-cycles), 
the number of solutions tend to be enormous, running into billions.  
This explosion in the search space makes the subgraph counting problem very hard even 
for moderately large queries. Theoretically, the fastest known algorithm for counting occurrences of a 
$k$-vertex subgraph in an $n$-vertex data graph runs in time $n^{\omega k/3}$, 
where $O(n^\omega)$ is the time complexity of matrix multiplication (currently $\omega\approx 2.38$).
This improves upon the trivial algorithm with runtime $n^k$, 
but is prohibitively expensive even for moderate size queries.

\begin{figure}
\centering
\includegraphics[width=2in]{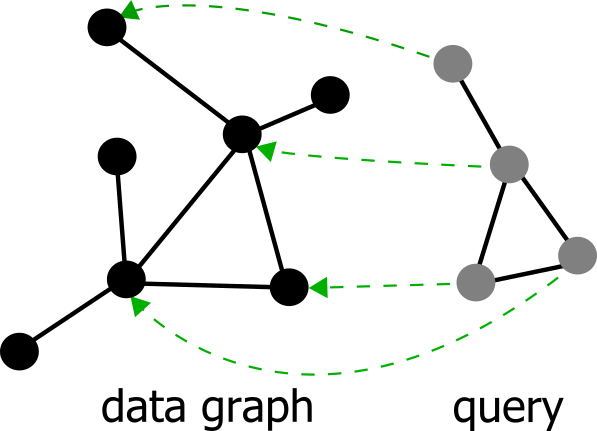}\hspace{1in}
\includegraphics[width=2in]{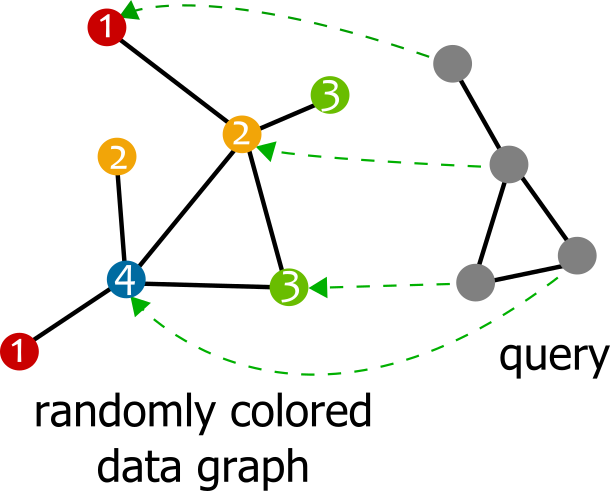}
\caption{Illustration of a match  (left) and a colorful match (right)}
\label{fig:match}
\end{figure}

To address the above issue, Alon et al.~\cite{AYZ95} proposed 
the {\em color coding} technique. Here, given a $k$-node query, 
we assign random colors between $1$ and $k$ to the vertices of the data graph, 
and count the number of occurrences of the query that are {\em colorful}, 
meaning the vertices matched to the query have distinct colors. See Figure \ref{fig:match}.
The count is scaled up appropriately to get an estimate on the actual number of occurances. 
The accuracy is then improved by repeating the process over multiple random colorings and taking the average.
Restricting the search to colorful matches leads to pruning of the search space and improved efficiency.
Using this method, Alon et al. obtained faster algorithms for cetain queries such as 
paths, cycles, trees and bounded treewidth graphs.

The power of color coding as a very general counting technique together with the importance of subgraph counting 
in various applications (as mentioned above) makes it important to design practically efficient and scalable implementations.
In a different work, Alon et al.~\cite{AlonDHHS08} applied the color coding technique
for counting the occurrences of {\em treelets} (tree queries) in biological networks. 
Color coding allowed them to handle tree queries up to size $10$ in protein interaction networks, 
extending beyond the reach of previously known approaches~\cite{PrzuljCJ04,HormozdiariBPS07, GrochowK07}. 
Recently, Slota and Madduri~\cite{SM13,madduri4} presented FASCIA, 
an efficient and scalable  distributed implementation of subgraph counting (via color coding), 
again for the case of  treelet queries. However, despite considerable interest in non-tree queries from several 
application domains (see the experimental section for details), 
the technque has not been explored for more general settings. 
In this work we present the first efficient distributed implementation of {\em color coding beyond tree queries}.

As part of their original color coding solution, Alon et al.~\cite{AYZ95} presented
faster algorithms for certain special classes of queries. 
They showed that if the query is a tree, then colorful subgraph counting 
can be solved in time $O(2^k m)$, i.e. in time {\em linear} in the size of the data graph. 
They extended the algorithm to show that if the query is close to a tree, 
specifically has {\em (small) treewidth $t$},  a running time of $O(2^k n^{t+1})$ can be achieved.  
Treewidth~\cite{tw} is a widely adopted measure of the intrinsic complexity of a graph.
Intuitively, it measures how close the topology of a given graph is to being a tree:
tree queries have treewidth $1$, and a cycle is the simplest example
of a treewidth 2 query. 
The above algorithm, restricted to trees, forms the basis
for the previously-mentioned treelet counting implementations \cite{SM13,madduri4,AlonDHHS08}.

While the runtime of the above algorithm is {\em linear} for the case of trees (i.e. {\em acyclic queries}), 
it becomes {\em at least quadratic} for query graphs of treewidth 2 and beyond. 
This phenomenon also manifests itself in practice:
on real world graphs with even moderately skewed degree distribution load imbalance is observed
and the running time tends to have quadratic dependence on the maximum degree of the graph.
Thus, even triangles (the smallest cyclic query) are harder to handle, and
have received considerable attention from the research community (as mentioned earlier).

The goal of this paper is to study the colorful subgraph counting problem on {\em queries of treewidth $2$}, 
taking the first step in the realm of color coding with cyclic queries. 
The class of queries of treewidth $2$ is quite rich. In particular, it contains all trees, cycles, series-parallel graphs and beyond. 
Figure~\ref{query-pic} shows treewidth $2$ queries (used in our experimental evaluation) drawn from real-world studies
on biological, social and collaboration networks \cite{Middendorf01032005,wu2012cwaunmcar,PhysRevE}.

To the best of our knowledge, the previously-mentioned algorithm
\cite{AlonDHHS08} 
is the best known algorithm for treewidth $2$ queries, and we use it as our baseline.
We rephrase this algorithm within our framework and devise a distributed implementation. 
The rephrased algorithm becomes a recursive procedure
that decomposes the query into simpler {\em path subqueries}, which
are then solved to get the overall count.
We thus refer to our baseline as the {\it Path Splitting algorithm ($\PS$)}.

\subsubsection*{Our Contributions} 
\indent\par{\bf 1.}~Building on the $\PS$ algorithm, we develop novel strategies 
that lead to significant performance gains in terms of runtime, 
scalability, and the size of graphs and queries handled.

\par{\bf 2.}~Our algorithm works by  decomposing the query to cycles and leaves, thereby reducing the problem of colorful subgraph counting on treewidth $2$ queries to 
counting (annotated) cycles.

\par{\bf 3.}~The decomposition in terms of cycles %(rather than paths), 
enables us to exploit the so-called degree ordering approach (e.g., \textsc{MinBucket} algorithm for triangle enumeration~\cite{BFMPSW14}) 
Specifically, we show how to force the computation process to (mostly) work around high degree vertices, 
leading to substantial speedups and scalability gains. 

\par{\bf 4.}~We present a detailed experimental evaluation of the algorithms on real-world graphs having
more than million edges and real-world queries of size up to 10 nodes. 
The results show that our strategies offer improvements of up to $28$x in terms of running time 
and exhibit improved scalability. 

\par{\bf 5.}~Finally, we complement our experimental evalutation by a theoretical 
analysis of the runtime of our degree ordering approach for cycle queries, on a popular class of 
random power law graphs (Chung-Lu graphs~\cite{Chung10122002}).
Our analysis provides justification for empirically observed performance gains of the approach.

\subsubsection*{Related Work}
Subgraph counting has received significant attention in the fields of 
computational biology~\cite{PrzuljCJ04,HormozdiariBPS07, GrochowK07}
and social network analysis~\cite{triangle,triangle1,triangle2,BFMPSW14, JhaSP15}. 
We give an overview of prior work on the problem (both theoretical and empirical) 
as well as techniques for making subgraph counting scalable, and explain how our contributions relate to this prior work.

{\it Color Coding and Approximate Subgraph Counting:} 
Color coding was introduced in an influential paper by Alon et al.~\cite{AYZ95} as a fast algorithm for finding
occurrences of a query in a data graph and counting the number of such occurrences.
In a different work, Alon et al.~\cite{AlonDHHS08} explored 
its applications to approximate subgraph counting (most commonly known as {\em motif counting}) in computational biology.
They were motivated by the fact that subgraph counting is an important primitive for characterizing biological networks~\cite{MilEtAl02}. 
Color coding allowed Alon et al. to count occurrences of treelets (tree queries) up to size $10$ in protein interaction networks, 
extending beyond the reach of previously known approaches~\cite{PrzuljCJ04,HormozdiariBPS07, GrochowK07}. 
A scalable distributed implementation of color coding for trees has been reported by Slota and Madduri~\cite{madduri, madduri4}, 
but no principled solutions beyond tree queries are known. 
ParSE~\cite{ZhaoKKM10} extends beyond tree queries, by considering 
query graphs that  can partitioned into subtemplates via edge cuts of size 1. 
However, the only class of query graphs that can be perfectly partitioned using this method is trees; 
ParSE resorts to brute force enumeration for other cases.
Our work provides the first principled approach to implementing color coding in a scalable way beyond trees queries. 
Further, our analysis of the runtime of our cycle counting subroutine on a random 
graphs with a power law degree distribution provides a theoretical justification of our algorithmic techniques.

While our work and the above-mentioned prior work \cite{AlonDHHS08,madduri,madduri4} 
count {\em non-induced} sugraphs, some other prior work~\cite{PrzuljCJ04,HormozdiariBPS07,GrochowK07} 
addressed the case of counting {\em induced} subgraphs.
The search space of non-induced subgraphs is larger and furthermore, 
these counts are more robust with respect to perturbations of the data graph~\cite{AlonDHHS08}. 

{\it Degree Based Approaches: }
Designing scalable subgraph counting algorithms turns out to be hard even for the simple case of triangle counting.
A naive approach lets each vertex enumerate pairs of neighbors and check if they are connected.
This leads to wasteful computations and also runs into load balancing issues on graphs with heavy tailed degree distributions~\cite{SuriV11}. 
The above issue has been addressed using a simple, but efficient solution (referred to as the \textsc{MinBucket} algorithm~\cite{triangleCohen, SuriV11}): 
each vertex  enumerates pairs of neighbors with degree no smaller than its own (with arbitrary tie breaking) and checks they are connected. 
It is not hard to see that this gives a correct count, and it has been empirically observed that this algorithm does not run 
into load balancing issues even on heavy tailed graphs~\cite{SuriV11}. The \textsc{MinBucket} heuristic has also been 
shown to give polynomial runtime improvement over the naive method when the input is a random graph with a power 
law degree distribution~\cite{BFMPSW14}. 
A recent work by Jha et al. \cite{JhaSP15} applies the degree based technique for coutning $4$-vertex queires.
There are a few prior approaches for arbitrary queries of ~\cite{6413912, AparicioRS14, graft},
but algorithms do not use degree information, and are comparable to the baseline algorithm used in our study.

To the best of our knowledge, prior to our work 
there has not been a systematic study of how \textsc{MinBucket} generalizes to larger subgraph counting 
problems. In this work we generalize the method for counting occurrences of treewidth 2 graphs, 
perform a thourough experimental evaluation and provide a theoretical runtime analysis of our technique in the random power law graph model.  
Our paper improves upon prior work along three axes: generality of queries handled, 
scalability of the proposed solution and theoretical analysis of the main algorithmic primitive on a class of graphs often used to 
model real world networks.

\section{Preliminaries}
\label{sec:prelims}
\noindent{\bf Subgraph counting problem.}
The {\em subgraph counting problem} is defined as follows. The input consists of a {\em query graph} $\inpQ=(\Ve{\inpQ}, \Ed{\inpQ})$ 
over a set of $k$ nodes and a data graph $G=(\Ve{G}, \Ed{G})$ over a set of $n$ vertices
and $m$ edges. The task is to count the number of (not necessarily induced) subgraphs of $G$ that are isomorphic to $\inpQ$. Formally, count the number of injective mappings $\pi:\Ve{\inpQ}\rightarrow \Ve{G}$ such that for any pair of query nodes $a, b\in \Ve{\inpQ}$,  if $\pair{q_1}{q_2} \in \Ed{\inpQ}$, then 
$\pair{\pi(q_1)}{\pi(q_2)} \in \Ed{G}$. We refer to such mappings $\pi$ as {\em matches}.

\noindent{\bf Color coding and colorful matches.} 
A {\em coloring} is a function $\chi:\Ve{G} \to \{1,2,\ldots,k\}$, where
for every vertex $u\in \Ve{G}$, $\chi(u)$ denotes its color. A match
$\pi$ from $V_{\inpQ}$ to $\Ve{G}$ is {\em colorful} if the 
vertices of $\inpQ$ are mapped to $k$ distinctly colored vertices in $G$, i.e.
$\bigcup_{a\in V_{\inpQ}} \chi(\pi(a))=\{1, 2, 3,\ldots, k\}$.
The main idea is that instead of counting all
possible matches of the $k$ vertices of the query graph  to the
vertices of the data graph, one first {\em colors} the vertices of the
data graph uniformly at random using $k$ colors, and then searches for
{\em colorful} matches.

\noindent{\bf Colorful subgraph counting problem.} In the {\em colorful subgraph counting problem} the task is to count the number of colorful matches of the query $\inpQ$ in $\Ve{G}$. 

Our setting counts the number of colorful matches or mappings from $\inpQ$ to the data vertices. Alternatively, we may want to count
the number of colorful subgraphs that are isomorphic to $\inpQ$. The latter quantity can be obtained by dividing the former by $\aut(\inpQ)$,
the number of automorphisms of $\inpQ$.
While it is computationally hard to compute $\aut(\inpQ)$ for an arbitrary query graph,
the quantity can be computed quickly for queries of relatively small size (say about $10$ nodes).
Given the above discussion, we focus on counting the number of colorful matches.

\noindent{\bf Treewidth.} 
Intuitively, if the query graph 
$\inpQ=(\Ve{\inpQ}, \Ed{\inpQ})$ has treewidth $t$ then $\inpQ$ can be
decomposed into subgraphs $Q_1,Q_2,\ldots$ such that
each subgraph $Q_i$ is also of treewidth $t$, and
each $Q_i$ has no more than $t$ nodes that belong also to
other subgraphs. We call such nodes the {\em boundary nodes} of $Q_i$.
In addition, the total number of distinct boundary nodes in all subgraphs 
$Q_1,Q_2,\ldots$ is at most $t+1$. 
Note that the decomposition can be done recursively
as each $Q_i$ has treewidth $t$,
until we are left only with subgraphs that have at most $t+1$ nodes.
This results in a {\em treewidth decomposition tree} denoted $\treeQ_\inpQ$.
A formal definition is givne below.

A tree decompsition of a query $|\inpQ|$ is a tree
$\treeQ = (\Ve{\treeQ}, \Ed{\treeQ})$, wherein each node $p\in \Ve{\treeQ}$ 
is associated with a subset of query nodes $S(p) \subseteq \Ve{\inpQ}$, called pieces, 
such that the following properties are true: 
(i) for every query edge $(a,b)\in \Ed{\inpQ}$, there exists a piece $S(p)$ (for some $p \in \Ve{\treeQ}$) 
that contains both $a$ and $b$; 
(ii) for every query node $a \in \Ve{\inpQ}$, 
the set of nodes whose pieces contain $a$ induce a connected subtree.
Alternatively, the second property states that if $a$ belongs to
pieces $S(p_1)$ and $S(p_2)$ for some $p_1$ and $p_2$, 
then $a$ must also belong to the piece $S(p)$ for any node $p$ found on the
(unique) path connecting $p_1$ and $p_2$ in $\treeQ$. The width of the
tree decomposition is the maximum cardinality overall pieces
minus one, i.e., $\max_p |S(p)|-1$. The treewidth $t$ of the query
is the minimum width over all its tree decompositions.

{\bf Approximate subgraph counting via color coding.} 
Counting the number of colorful matches turns out to be easier than counting the actual (not necessarily colorful) matches.
The price to pay is that the algorithm is randomized. We color the graph randomly and obtain the number of colorful matches,
and repeat the process independently at random a few times. Then, an estimate for the number of matches (occurances of the query) can be obtained
by taking the average.

For a given input graph $G$ and query $\inpQ$ let $n(G, \inpQ)$ denote the number of matches $\pi$ from $\inpQ$ to $G$. 
For a (random) coloring $\chi$ of vertices of $G$ let $n^{colorful}(G, \inpQ, \chi)$ denote the number of colorful matches of $\inpQ$ 
to $G$ under coloring $\chi$. It was shown~\cite{AYZ95,AlonDHHS08} that with proper normalization the colorful count 
$n^{colorful}(G, \inpQ, \chi)$ is an unbiased estimator of the actual count. Specifically,  the right normalization factor is $k^k/k!$, 
i.e. we have $(k^k/k!)\cdot {\bf E}_\chi[n^{colorful}(G, \inpQ, \chi)]=n(G, \inpQ)$. 
The variance of the estimator can also be bounded (see~\cite{AlonDHHS08}, section 2.1). 
Thus, taking the average of $n^{colorful}(G, \inpQ, \chi)$ under a few independently chosen colorings $\chi$ converges to 
the right answer, i.e. $n(G, \inpQ)$.  
Thus, in order to obtain an approximate subgraph counting algorithm it suffices to solve the colorful subgraph counting problem. 
The rest of the paper is devoted to designing a scalable solution to colorful subgraph counting.

%{\bf Paper organization} The overall counting algorithm is described in section~\ref{}: in this section we show how reduce subgraph counting for a query $\inpQ$ to a sequence of {\em primitive counting operations}, namely cycle counting and path counting. Section~\ref{} describes two approaches to solving the primitive problems. The is the natural solution to the color coding dynamic program, as well currently known exploration based approaches to subgraph counting~\cite{6413912,graft}. We use this solution as our baseline. This solution suffers from load balancing problems, which we fix using a novel application of the degree based ordering method (see section~\ref{}).  In section~\ref{} we provide an experimental evaluation of the two approaches and show that the new method {\bf is really good--MK}. Finally in section~\ref{} we  state our runtime bounds on degree based cycle enumeration on random graphs with a power law degree distribution (Chung-Lu graphs), showing that the degree based ordering method substantially outperforms the degree-oblivious approach.

\section{Overview}
\label{sec:overview}
The work of Alon et al.~\cite{AYZ95} yields a natural algorithm for the colorful subgraph counting problem
on bounded treewidth query graphs. This algorithm is based on the following intuition.
Suppose that we have found a colorful match $\pi$ for a subgraph $Q$ of the input query graph $\inpQ$,
and we wish to extend it into a colorful match $\pi'$ for $\inpQ$ by additionally fixing the mapping of the nodes outside $Q$. 
For this we do not need to know the mapping of the non-boundary nodes of $Q$, since they do not share edges with nodes outside $Q$.
Instead, it suffices to know the mapping of the boundary nodes (i.e., the nodes that share edges with nodes outside $Q$) 
and the set of colors used by $\pi$. The mapping of the boundary nodes is needed to ensure that 
for any edge from a boundary node to outside, the corresponding data vertices share an edge in the data graph;
and the set of colors is needed to avoid repeating a color already used by $\pi$. 
Analogously, in the setting of counting, in order to count the number of colorful matches for $\inpQ$,
we do not need a complete listing of colorful matches of $Q$.
Instead, we can group the colorful matches based on the set of colors used
and the mappings for the boundary nodes and it suffices to know the count per group.

Based on the above intuition, we apply dynamic programming to
count the number colorful matches of $\inpQ$. 
Let $\treeQ_{\inpQ}$ be the tree decomposition of $\inpQ$ with treewith $t$.
%Tree $\treeQ_\inpQ$ has the property that each subgraph
%associated with a node in $\treeQ_\inpQ$ has no more than $t$ boundary nodes.
The algorithm processes $\treeQ_\inpQ$ in a bottom-up manner
and a creates a hash table (that we call a projection table) for each tree node.
The subgraph graph $Q$ associated with a node has at most $t$ boundary nodes 
and these nodes can be mapped to the data vertices in at most $n^t$ ways. 
In addition, we need to record the colors of the data vertices to which the nodes of $Q$ are
mapped. Since we focus on colorful matches,
the set of colors used (that we call ``signature") can be at most
${k \choose t}\leq 2^k$ (where $k$ is the size of the query graph).
For each combination of mappings to the boundary nodes and the signature,
we record the number of colorful matches of $Q$ consistent with the combination.
The number of entries in the table is at most $n^t 2^k$.
The projection table for a tree node can be computed from those of its children. 
We get the total number of colorful matches by performing an aggregation on the projection table of the root node.

Working in the realm of motif counting, Slota and Madduri \cite{madduri4} described an efficient distributed implementation of
the above algorithm for the case of tree queries and presented an experimental evaluation.
Trees have treewidth one hence, the size of projection tables is linear in the number of vertices
and the overall computation can be carried out in time linear in the graph size.
Our goal is to address a more general class of queries (beyond trees) in a distributed setting
and we focus on the case of queries of treewidth 2. 
Treewidth 2 queries are more challenging since in the worst case, 
the tables can be of size quadratic in the number vertices and the computation time also gets quadratic.

The construcion of our algorithm is motivated by the fact that
real life data graphs tend to exhibit variations in the degree distribution.
A naive implementation that treats all data vertices in
the same manner would result in
a lot of entries in the projection tables of the high degree vertices 
that do not lead to colorful matches for the overall input query.
Moreover, in a distributed setting the processors owning such vertices 
perform more computation leading to load imbalance.

Our algorithm is based on a crucial observation that any treewidth 2 query can be
recursively decomposed into (annotated) cycles or leaves. 
The core component of the algorithm is an efficient procedure for handling cycles
that employs a strategy based on degree based ordering of vertices.
This leads to reduction in wasteful computation, as well as improved load balancing. 
The procedure is inspired by a similar strategy used in prior work~\cite{BFMPSW14} for handling triangles.
The overall algorithm uses the above decomposition and the improved procedure for handling cycles.

%In subsequent sections, we first describe the decomposition into cycles and leaves.
%We then describe the procedure for handling cycles, followed by a description of the overall algorithm.
%As a baseline we use
%the algorithm of Alon et al.~\cite{AYZ95} rephrased within our framework
%as a recursive procedure that decomposes the query into simpler
%{\em path subqueries} (as against to cycles).
%We call this algorithm the Path Splitting ($\PS$) algorithm and
%include its description as part of the discussion.

\section{Overall Algorithm}
In this section we describe the overall structure of our subgraph counting algorithm 
that proceeds in two steps. 
In the {\bf first step}, we decompose the query into cycles and leaves (called blocks)
and construct a {\em decomposition tree} for the input query $\inpQ$
which is essentially a carefully chosen treewidth decomposition tree;
each node of the tree represents a block and encodes a convenient subquery. 
This step is independent of the data graph and can be viewed as a preprocessing phase for the query. 
Then in the {\bf second step} we traverse the tree in a bottom up manner, performing primitive counting operations 
over the data graph prescribed by the internal nodes and combining the results. 
The final count is produced by the root of the tree.

\subsection{Decomposition Tree}\label{sec:decomp-tree}
For an input query graph $\inpQ=(\Ve{\inpQ}, \Ed{\inpQ})$, construct the decomposition tree $T(\inpQ)$ by iteratively applying 
one of two primitive operations: contraction of a leaf edge or a cycle. 
As these operations are applied the number of nodes 
in the query $\inpQ$ decreases. 
At the same time new edges may appear in $\inpQ$ to represent contracted structures, 
and edges as well as nodes may get annotated with the identity of the contracted structures that they represent. 
Before defining the tree construction algorithm we need to introduce two definitions. 
First, we say that a cycle $\calC$ in $\inpQ$ is {\em contractible} if 
{\bf (a)} $\calC=(a_0, a_1,\ldots, a_{L-1})$ is induced 
(i.e. there are no  edges between nodes $a_0,a_1,\ldots, a_{L-1}$ except the edges of $\calC$) and 
{\bf (b)} cycle $\calC$ has most two boundary nodes (i.e., nodes that
share edges with nodes outside of $\calC$).  
Second, a {\em leaf edge} is an edge $\calL=(a,b)$, where $b$ is a leaf node (has degree one);
$a$ is called the boundary node of the leaf edge.
We use the common term {\em block} to refer to leaf edges and contractible cycles.

For example, consider the query named $\sat$ in Fig~\ref{fig:bd}.
The cycle $(i,j,k)$ is contractible with a single boundary node $i$,
the cycle $(a,b,c,d,e)$ is contractible with two boundary nodes $a$ and $c$,
and $(f,h)$ is a leaf edge. The cycle $(i,f,g)$ is not contractible since it has three boundary nodes.

We construct the decomposition tree $T(\inpQ)$ starting with an empty tree.
The tree is built bottom-up starting from the leaf level
and hence, the structure may be a forest with multiple roots in the intermediate stages.
Each iteration adds a new node and may make some of the existing roots as its children,
culminating in a tree. 

In the construction process we iteratively 
perform the following operations until $\inpQ$ contains a single node: 
find a block $B$ (a leaf edge or a contractible cycle) in $\inpQ$ and
remove it from $\inpQ$ (while possibly adding an edge to $\inpQ$),
and add a corresponding node  to $T(\inpQ)$. 
%The detailed algorithm for finding the blocks is postponed to the next section.
We iterate until $\inpQ$ contains a single node. We distinguish 3 cases.
%
%\begin{enumerate}
\par\noindent{\bf Case 1:} 
{\it $B$ is a contractible cycle $\calC$ with exactly one boundary node $a\in \Ve{\inpQ}$:} 
Remove the nodes and edges of $\calC$ from $\inpQ$, except for node $a$. 
Erase any annotation found on $a$ in $\inpQ$ and annotate it with the block name $B$. 
\par\noindent{\bf Case 2:} 
{\it $B$ is a contractible cycle $\calC$ with two boundary nodes $a,b\in \Ve{\inpQ}$:} 
Remove the nodes and edges of $\calC$ from $\inpQ$, except for the nodes $a$ and $b$. 
Add an edge $(a,b)$ in $\inpQ$ and  annotate it with $B$.
Erase any annotation found on $a$ and $b$ in $\inpQ$.
\par\noindent{\bf Case 3:} 
{\it $B$ is a leaf edge $\calL=(a,b)$:} 
Remove $b$ and the edge from $\inpQ$. 
Erase any annotation found on node $a\in \inpQ$
and annotate it with the block name $B$.
%\end{enumerate}

The nodes and edges of $\QB$ inherit the annotations from $\inpQ$,
as they were before $\inpQ$ was transformed
(this ensures that the annotations on the boundary nodes that got erased get captured by the new annotation). 

Next we add a new node $\QB$ to the tree $T(\inpQ)$. 
If any node or edge in $\QB$ has an annotation $\QB'$,
make $\QB'$ a child of $\QB$ in $T(\inpQ)$.
This completes the construction of $T$. 
We show below that the process can find a block in each iteration and terminate successfully on every query of treewidth 2. 
Assuming termination, it is not difficult to see that the process produces a tree.
During contraction, every block $\QB'$ annotates a particular node or an edge of $\inpQ$,
recording the way in which it has been contracted. The annotation gets inherited by some other block $\QB$ in a subsequent iteration.
The block $\QB$ becomes the parent of $\QB'$. The annotation is erased in $\inpQ$, ensuring that no other block becomes a parent of $\QB'$.

\begin{figure}
\begin{center}
\begin{tabular}{c}
\includegraphics[width=5.5in]{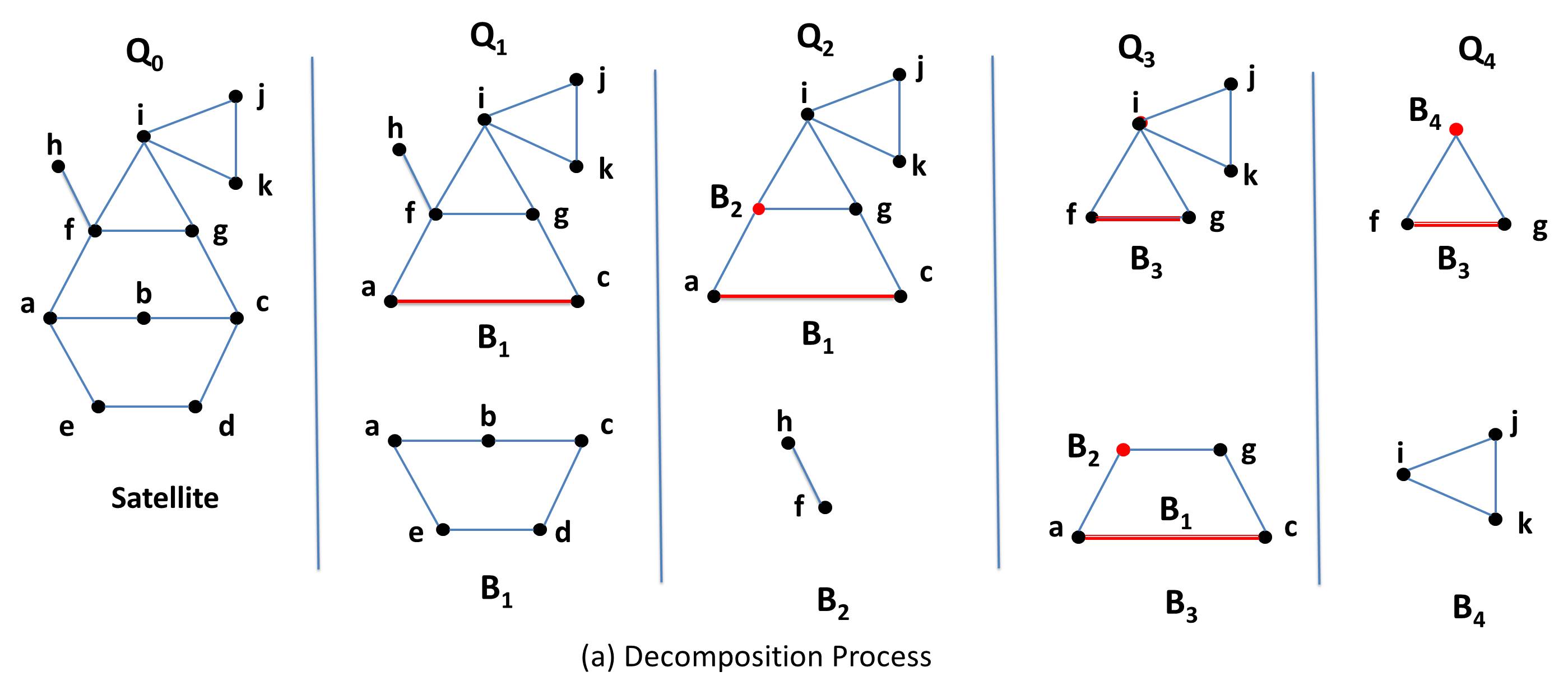} 
\\
\\
\begin{tabular}{ccc}
\includegraphics[width=2in]{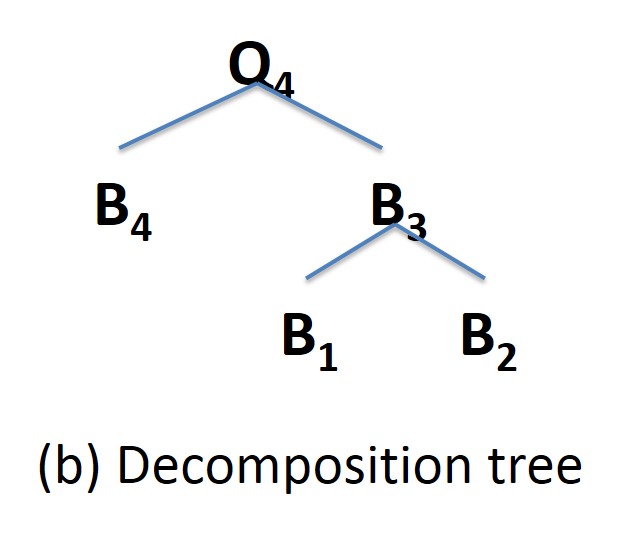}
&
\quad
\quad
&
\includegraphics[width=2in]{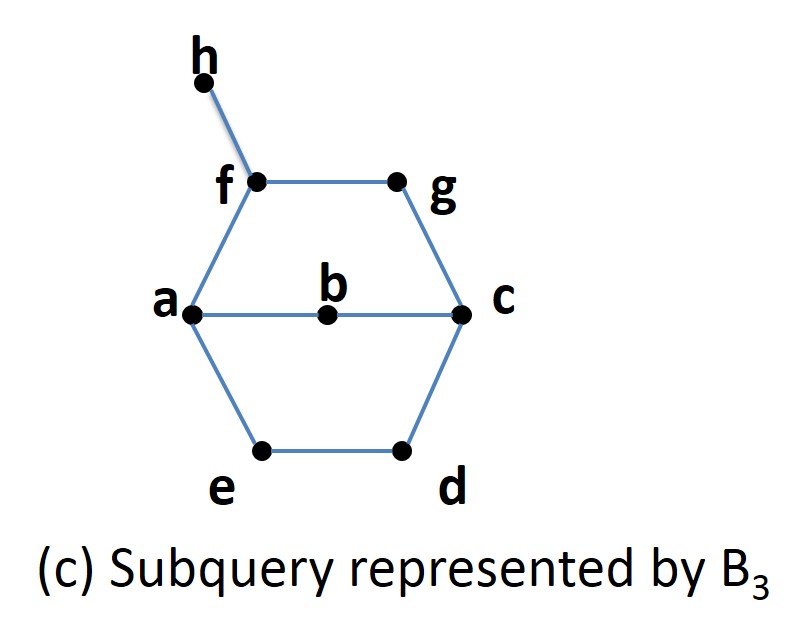}
\end{tabular}
\end{tabular}
\end{center}
\caption{Illustration of the decomposition process. the top row shows the sequence of queries considered in the process (the original query is on the left),  the bottom row shows the blocks that were contracted in each step.}
\label{fig:bd}
\end{figure}

Taking $\sat$ as the input query $\inpQ$, Figure~\ref{fig:bd} provides an illustration process, 
along with the output decomposition tree.
The bottom row shows the blocks being contracted and the top row shows the transformed $\inpQ$.
The first iteration contracts the cycle $\QB_1=(a,b,c,d,e)$.
A new edge $(a,c)$ is added to $\inpQ$, along with the annotation $\QB_1$, and $\QB_1$ is added to the tree.
The second iteration contracts the leaf block $\QB_2=(f,h)$.
Node $f$ is annotated as $\QB_2$ and the $\QB_2$ is added to the tree.
The third iteration contracts $\QB_3=(a,f,g,c)$, by adding an edge $(f,g)$ with the annotation $\QB_3$.
The block is added to the tree and it is made the parent of $\QB_1$ and $\QB_2$.
In the fourth iteration, the cycle $\QB_4=(i,j,k)$ is contracted.
Node $i$ gets annotated as $\QB_4$ and $\QB_4$ is added to the tree.
Finally, the query $Q_4$ is contracted leaving $\inpQ$ empty.
We add $Q_4$ as the root of the tree, making it the parent of $\QB_3$ and $\QB_4$. 

The following lemma guarantees that for any treewidth 2 query $\inpQ$,
the tree construction procedure will always find a block (a leaf edge
or a contractible cycle) in each iteration and terminate successfully.
The proof relies on prior work on nested ear decompositions
of treewidth 2 queries \cite{eppstein}.

\begin{lemma}
\label{lm:termination}
(i) Any treewidth 2 query $\inpQ$ contains a block; 
(ii) the transformed query resulting from the contraction process is
also a treewidth 2 query.
\end{lemma}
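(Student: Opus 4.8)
The plan is to prove the two parts by different means: part (ii) follows from the minor-monotonicity of treewidth, while part (i), the harder direction, uses the structure of nested ear decompositions \cite{eppstein}. I would also remark that, together, the two parts give termination of the whole construction, since (i) supplies a block at each step and (ii) preserves the invariant that the current query has treewidth $2$, so the procedure can be iterated until a single node remains.

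\textbf{Part (ii) via minors.} For part (ii) I would observe that each of the three contraction operations turns $\inpQ$ into a graph $\inpQ'$ that is a \emph{minor} of $\inpQ$. In Cases 1 and 3 only vertices and edges are deleted (the interior of the cycle together with its incident edges, or the leaf together with its edge), so $\inpQ'$ is a subgraph of $\inpQ$. In Case 2 the only new feature is the edge $(a,b)$; but since $\calC$ is a cycle through the two boundary nodes $a$ and $b$, it contains an $a$-$b$ path $P$ whose internal vertices are non-boundary and hence have degree exactly $2$ in $\inpQ$. Contracting $P$ onto $a$ produces precisely the edge $(a,b)$ (the interior vertices of $P$ carry no edges leaving $\calC$, so $a$ gains no other neighbor), while deleting the interior vertices of the second $a$-$b$ path of $\calC$ accounts for the rest of $\inpQ'$. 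Thus $\inpQ'$ arises from $\inpQ$ by edge contractions and deletions, and since treewidth is minor-monotone, the treewidth of $\inpQ'$ is at most that of $\inpQ$, i.e. at most $2$.

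\textbf{Part (i): reduction.} For part (i) I would first dispose of the easy cases. If $\inpQ$ has a vertex of degree one, the incident edge is a leaf edge and we are done. Otherwise the minimum degree is at least $2$, and I would pass to a leaf biconnected component $H$ of $\inpQ$ (a leaf of the block-cut tree). Since there are no degree-one vertices, $H$ cannot be a single bridge, so $H$ is $2$-connected with at least three vertices and is attached to the rest of $\inpQ$ through at most one cut vertex $c$; every vertex of $H$ other than $c$ has all of its $\inpQ$-neighbors inside $H$. Consequently a vertex of a cycle $\calC\subseteq H$ can be a boundary node of $\calC$ in $\inpQ$ only if it is adjacent to some vertex of $H\setminus\calC$, or it equals $c$. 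The goal is therefore to find inside $H$ an induced cycle attaching to the rest of $H$ at only two vertices, while ensuring $c$ does not become a third boundary node.

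\textbf{Part (i): the ear-decomposition step, and the main obstacle.} To produce such a cycle I would invoke Eppstein's theorem \cite{eppstein} that a $2$-connected graph has treewidth $2$ if and only if it admits a \emph{nested} ear decomposition $P_0,P_1,\ldots,P_r$, initialized so that the base cycle $P_0$ passes through $c$. I would then take an \emph{innermost} ear $P$ in the nesting order, with attachment endpoints $x,y$ on its parent path, and let $\calC$ be $P$ together with the parent segment between $x$ and $y$. The nesting property is exactly what delivers the two facts we need: because no ear nests inside $P$, the cycle $\calC$ has no chord (a chord would itself be a deeper ear) and no interior vertex of $\calC$ receives a later attachment, so $\calC$ is induced and its only vertices with neighbors in $H\setminus\calC$ are $x$ and $y$; and because $c$ lies on the outermost cycle $P_0$, it is not an interior vertex of the innermost cycle, so it contributes no extra boundary node. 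Hence $\calC$ is an induced cycle with at most the two boundary nodes $x,y$, i.e. a contractible cycle. The step I expect to be the main obstacle is precisely this extraction: arguing carefully from Eppstein's definition that an innermost ear yields a chordless cycle with degree-two interior, \emph{and} that rooting the decomposition at $c$ genuinely keeps the boundary count at two rather than three. This is where the full strength of the ``nested'' condition, as opposed to an arbitrary ear decomposition, is needed.
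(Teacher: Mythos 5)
Your part (ii) is correct and in fact slicker than the paper's argument: the paper manipulates the width-$2$ tree decomposition directly (deleting the contracted nodes from the bags, or replacing them by the boundary node $b$, and re-checking the tree-decomposition properties), whereas you observe that each contraction step produces a minor of $\inpQ$ --- in Case 2 the new edge $(a,b)$ is realized by contracting one of the two $a$--$b$ arcs of the induced cycle, whose interior vertices have no neighbors outside the cycle --- and then invoke minor-monotonicity of treewidth. That argument is sound.

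Part (i), however, has a genuine gap, and it sits exactly at the step you flagged as the main obstacle: the treatment of the cut vertex $c$. Your claim that ``because $c$ lies on the outermost cycle $P_0$, it is not an interior vertex of the innermost cycle'' is false: the innermost cycle consists of an ear $P$ together with a segment of its \emph{parent} ear, and that parent segment may be a segment of $P_0$ whose interior contains $c$. Concretely, let $H$ be a $6$-cycle $v_1v_2v_3v_4v_5v_6$ with $c=v_2$, plus an ear $P_1=(v_1,w_1,w_2,v_4)$, and let $\inpQ$ be two copies of $H$ glued at $c$, so that $\inpQ$ has no degree-one vertices and every leaf block looks like $H$. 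A perfectly valid nested ear decomposition takes $P_0$ to be the $6$-cycle (through $c$, as you prescribe) and designates the interval of $P_1$ as $(v_1,v_2,v_3,v_4)$; since $P_1$ is the unique non-base ear it is innermost, and your recipe outputs the cycle $P_1\cup(v_1,v_2,v_3,v_4)$, which is induced but has \emph{three} boundary nodes in $\inpQ$, namely $v_1$, $v_4$ and $c$ --- not contractible. A contractible cycle does exist ($P_1$ together with the other arc $(v_4,v_5,v_6,v_1)$), but nothing in your argument selects it, and since nestedness ties the interval choices of different ears together, ``take the segment avoiding $c$'' is not automatic when several ears nest around $c$; one would need, e.g., to arrange $c$ as a terminal/pole of the series-parallel structure and prove that intervals then never strictly contain it. The rest of part (i) --- the degree-one case, passing to a leaf block, and the argument that an innermost ear (no ear attached to its interior, no interval nested inside its own) yields an induced cycle whose only $H$-boundary vertices are its two attachments --- is fine. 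For comparison, the paper sidesteps ear decompositions entirely: it roots a width-$2$ tree decomposition and proves by induction the strengthened claim that each subtree's induced subquery either contains a block or is a path with both endpoints in the parent's bag; that path alternative is precisely the bookkeeping device that absorbs the boundary-count issue your $c$-argument leaves open.
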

\begin{proof}
%Recall the definition of treewidth 2 graphs. The query graph
%$\inpQ=(\Ve{\inpQ}, \Ed{\inpQ})$ has treewidth 2 if there is a tree   
%$\treeQ_\inpQ=(\Ve{\treeQ},\Ed{\treeQ})$ where each node 
%$t \in \Ve{\treeQ}$ is associated with a subset of $\Ve{\inpQ}$
%denoted $S_\inpQ (t)$, where $|S_\inpQ(t)| \le 3$, with the following properties:
%\begin{enumerate}
%    \item
%        Every edge $e=(x,y) \in \Ed{\inpQ}$ is induced by at least one
%        subset  $S_\inpQ (t)$
%    \item
%%        For every $x \in \Ve{\inpQ}$, the nodes in $\Ve{\treeQ}$ associated with
%\end{enumerate}
We first prove part (ii) of the lemma. If the contracted block has one
boundary node then no new edges are added to $\inpQ$, in which case
the tree $\treeQ_\inpQ$ for the updated $\inpQ$ is given by deleting all 
the nodes not in the updated $\Ve{\inpQ}$ from the subsets $S_\inpQ (t)$.
If the contracted block has two boundary nodes $a$ and $b$ then the
edge $(a,b)$ is added to $\inpQ$. In this case we get the tree for 
the updated $\inpQ$ by replacing each occurrence of
the nodes not in the updated $\Ve{\inpQ}$ by $b$. Note that the size of each subset is still
at most 3, nodes associated with subsets that contain $b$ form a
connected component, and for at least one subset $S_\inpQ(t)$,
$\{a,b\} \subseteq S_\inpQ(t)$.    

We now prove part (i). First, Root the tree $\treeQ_\inpQ$ at an arbitrary
non-leaf node. This induces an ancestor-descendant relationship on the
nodes in $\Ve{\treeQ}$. Note that if there are two nodes
$\{t,t'\} \subseteq \Ve{\treeQ}$, such that 
$S_\inpQ(t') \subseteq S_\inpQ(t)$, node $t'$ can be omitted and
all its children connected to $t$. Thus from now on we assume that
no subset $S_\inpQ(t)$ is contained (or identical) to another subset.

We need the following definition and claim.
\begin{definition}
For a node $t \in \Ve{\treeQ}$, let $\inpQ_t$ be the subgraph  of
$\inpQ$ induced by the
nodes that are in the union of the subsets associated with
the nodes of $\treeQ_\inpQ$ in the subtree rooted at $t$.
\end{definition}

\begin{claim}\label{cl:block}
For every node $t \in \Ve{\treeQ}$, 
either $\inpQ_t$ contains a block, or $\inpQ_t$ is
a path whose endpoints are in the subset
associated with the parent of $t$ (if such exists).
\end{claim}
Before proving the claim we show how it implies the lemma. Since the
claim holds also for the root of $\treeQ_\inpQ$ then either $\inpQ$
contains a block or it is a path in which case it also contains a leaf
block.
\end{proof}

\begin{proofof}{Claim~\ref{cl:block}}
We prove the claim by induction. The base of the induction is
a leaf node.
Consider a leaf node $t \in \Ve{\treeQ}$. 
There are two possibilities: (i)  $S_\inpQ(t) = \{x,y\}$, and 
(ii) $S_\inpQ(t) = \{x,y,z\}$.
If $S_\inpQ(t) = \{x,y\}$, then at least one node, say $y$, is only
connected to $x$ and thus $(x,y)$ is a leaf edge.

If $S_\inpQ(t) = \{x,y,z\}$, then consider the subgraph induced by
$\{x,y,z\}$. If this subgraph is a triangle then it must be a
contractible cycle.
The only remaining case is the subgraph induced by
$\{x,y,z\}$ forms a path. Assume that the endpoints of this path are
$x$ and $z$. If one of these endpoints, say $z$, is not in the subset 
associated with the parent of $t$ then $(y,z)$ is a leaf edge.
Otherwise, let $t'$ be the parent of $t$, we have  
$S_\inpQ(t) \cap  S_\inpQ(t') = \{x,z\}$.

For the inductive step consider a non-leaf node $t \in \Ve{\treeQ}$.
If $\inpQ_{t'}$ for any child $t'$ of $t$ contains a block then we are
done. Assume that this is not the case. Consider first the case that
$t$ has a single child $t'$. By the inductive hypothesis $\inpQ_{t'}$
is a path whose endpoints $x$ and $y$ are in $S_\inpQ(t)$.
Let $S_\inpQ(t) = \{x,y,z\}$. If $z$ is connected to both $x$ and $y$
then the cycle closed by $z$ is a contractible cycle.
If $z$ is connected to only one endpoint, say $y$, then we get a path
with endpoints $x$ and $z$. If either $x$ or $z$ are not in the subset 
associated with the parent of $t$, then the missing endpoint is leaf
node.  If both $x$ and $z$ are in the subset 
associated with the parent of $t$ then the inductive claim follows.

Next, Consider the case that $t$ has several children. If two of the
children of $t$, say $t'$ and $t''$, share endpoints then the cycle
formed by  $\inpQ_{t'}$ and  $\inpQ_{t''}$ is contractible.
Otherwise, $t$ must have exactly two children, say $t'$ and $t''$,
with endpoint $\{x,y\}$ and $\{y,z\}$, forming a path with endpoints
$x$ and $z$. If $z$ is connected also to $x$ 
then the cycle closed by the edge $(x,z)$ is a contractible cycle.
If either $x$ or $z$ are not in the subset 
associated with the parent of $t$, then the missing endpoint is a leaf
node. If both $x$ and $z$ are in the subset 
associated with the parent of $t$ then the inductive claim follows.
\end{proofof}

An input query may admit multiple decomposition trees 
and the choice of the tree influences the performance of our algorithm.
In Section \ref{sec:planner}, we present a heuristic for finding a good decomposition.
Each node of the tree represents a block and it will be 
convenient to view to the node simply as the block represented by it. 
%In the decomposition tree, some of nodes and edges of the blocks are 
%annotated by other blocks which form its children.
%We refer to the un-annotated nodes and edges as {\em native} nodes and edges.

At this point, it is interesting to consider tree queries studied by Slota and Madduri \cite{madduri4}.
Given a tree query, their algorithm fixes a suitable query node as the root and iteratively processes the tree in a bottom-up manner. 
The algorithm implicitly uses a decomposition tree.
However, since trees do not have cycles, the decomposition tree consists of only leaf edge blocks.
In contrast, the decomposition trees of treewidth two queries involve the more challenging case of cycles as well.

\subsection{Tree Traversal}\label{sec:traversal}
Here, we describe the second step of the algorithm that traverses the decomposition tree in a bottom-up manner
and computes the number of colorful matches of the blocks in the data graph. For this purpose, 
we define the notion of subqueries represented by blocks.

A {\em subquery} $Q$ of the input query $\inpQ$ refers to any induced subgraph of $\inpQ$.
Consider a block $B$ and let $U$ be the union of nodes found in the block $B$ and its descendant blocks in the tree.
The {\em subquery represented by $B$}, denoted $\SQ(B)$, refers to the subquery induced by $U$.
For example, Figure~\ref{fig:bd} shows the subquery represented by the block $\QB_4$.
The decomposition tree yields a nested hierarchy of subqueries: the root block represents the whole input query
and for any block $B$ with the parent $B'$, the subquery $\SQ(B)$ is contained within $\SQ(B')$.

Let $B$ be a block. A node $a\in \SQ(B)$ is said to be a {\em boundary node},
if $a$ shares an edge with a node outside $\SQ(B)$. It is not hard to see that these boundary nodes
are the same as the boundary nodes of $B$ (identified during the tree construction process).
Thus, $\SQ(B)$ can have at most two boundary nodes.

Before describing the counting algorithm we extend the notion of colorful matches to subqueries:
a colorful match for a subquery $Q=(\Ve{Q},\Ed{Q})$ is an injective mapping $\pi:\Ve{Q}\rightarrow \Ve{G}$, 
such that for any edge $(a,b)\in \Ed{Q}$, $(\pi(a), \pi(b))\in \Ed{G}$,
and the vertices of $Q$ are mapped to distinctly colored vertices of $G$. 

The algorithm traverses the tree in a bottom-up manner.
For each block $B$, it outputs a succinct synopsis of the set of colorful matches of 
the subquery $\SQ(B)$, using a projection table and signature
(as outlined in Section~\ref{sec:overview}).
that we now define precisely.

%by making use of 
%the projection tables of its children blocks.
%The notion is motivated by the following consideration. 
%Suppose we have constructed a colorful match $\pi$ for $Q=\SQ(B)$
%and we wish to extend $\pi$ into a colorful match $\pi'$ for the whole input query $\inpQ$,
%by additionally fixing the mapping for the nodes outside $Q$. 
%In this process we do not need to know the mapping for the non-boundary nodes, 
%since they do not share edges with nodes outside $Q$.
%Instead, it suffices if we know the mapping for the boundary nodes and set of colors used by $\pi$.  
%%With the first information, we can ensure that 
%for any edge from a boundary node to outside, the corresponding data vertices
%share an edge in the data graph;
%using the second information, we can avoid repeating a color already used by $\pi$. 
%Analogously, in the setting of counting, in order to count the number of colorful matches for $\inpQ$,
%%we do not need a complete listing of colorful matches of $Q$.
%Instead, we can group the colorful matches based on the set of colors used
%and the mappings for the boundary nodes and it suffices to know the count per group.
%Motivated by the above discussion, we define the notion of signatures and projection tables.

{\it Signature: }
Let $K=\{1,2,\ldots,k\}$ denote the set of colors used in the data graph, where $k$ is the size of
the input query $\inpQ$. The term {\em signature} refers to any subset $\alpha\subseteq K$.
For a subquery $Q$ and a colorful match $\pi$ of $Q$, the {\em signature} of $\pi$ refers
to the set of colors of the data vertices used by $\pi$ and it is denoted $\sig(\pi)$,
i.e., $\sig(\pi)=\cup_{a\in Q}\{\chi(\pi(a))\}$. 

{\it Projection Tables: }
Let $Q$ be subquery with two boundary nodes $a$ and $b$.
For a pair of data vertices $u$ and $v$ and  a signature $\alpha\subseteq K$
let $\cnt(u,v,\alpha|Q)$ denote the number of colorful matches of $Q$ wherein the 
boundary nodes $a$ and $b$ are mapped to $u$ and $v$ and the signature of $\pi$ is $\alpha$:
\begin{align*}
    \cnt(u, v, \alpha|Q) = |\{\pi\in \Pi~:~\pi(a) = u &\mbox{ and }\pi(b) = v \\
                                           &\mbox{ and }\sig(\pi) =\alpha\}|,
\end{align*}
where $\Pi$ is the set of all the colorful matches of $Q$.

These counts can be conveniently represented in the 
form a hash table with $(u,v,\alpha)$ forming the key and the count forming the value.
We refer to any encoding of the above counts
(such as the hash table above) as the {\em projection table} of $Q$.
In the worst case, the table may have size quadratic in the input data graph.
However, a significant fraction of the triplets will have a count of zero and we maintain only the non-zero counts.

The projection table for subqueries having a single boundary node $a$ is defined in a similar manner.
For a data vertex $u$ and a signature $\alpha\subseteq K$, define
\[
\cnt(u,s|Q) = |\{\pi\in \Pi~:~\pi(q) = u \mbox{ and } \sig(\pi) =\alpha\}|.
\]

\begin{figure}[t]
\begin{center}
\begin{boxedminipage}{\hsize}
\begin{small}
\begin{tabbing}
xx\=xx\=xx\=xx\=xxx\=xxx\=\kill
\textbf{Overall Algorithm}\\
\> 1. Compute a decomposition tree $T(\inpQ)$ for the input query $\inpQ$.\\
\> 2. Traverse the tree bottom-up.\\
\> \> For each non-root block $B$:\\
\> \> \> Use the projection tables of the children blocks of $B$ and\\
\> \> \> \> compute the projection table for $B$\\
\> 3. Output the number of colorful mathes of the subquery \\
\>\>\> \> represented by the root-block.
\end{tabbing}
\end{small}
\end{boxedminipage}
\end{center}
\caption{Overall Algorithm}
\label{fig:overall}
\end{figure}

\subsection{Computing the Counts}
Given a decomposition tree, the algorithm 
works based on the fact that the projection table for a block can be 
computed by joining the projection table of its children blocks.

As an illustration of the idea, consider the block $B_3$ having boundary nodes $f$ and $g$, 
and the subquery represented by it (Figure \ref{fig:bd}).
For a pair of vertices $u$ and $v$, and a signature $\alpha$,
the projection count $\cnt(u,v,\alpha|B_3)$ can be computed as follows.
The block consists of the path $(a,f,g,c)$, and any match $\pi$ for the subquery 
must map these nodes to vertices $(x,u,v,y)$ that form a path in the data graph.
The block is annotated by its children blocks $B_1$ with boundary nodes $a$ and $c$, 
and $B_2$ with boundary node $f$.
Any pair of matches $\pi_1$ and $\pi_2$ for $\SQ(B_1)$ and $\SQ(B_2)$ can be extended as matches for $\SQ(B_3)$,
as long as their signatures $\alpha_1$ and $\alpha_2$ are disjoint (since the blocks do not share any node)
and are contained within $\alpha$.
Therefore, we can derive the desired count by performing the following aggregation
over all quadruples $(x,y,\alpha_1,\alpha_2)$ satisfying the properties:
$(x,u,v,y)$ forms a path in the data graph; $\alpha_1,\alpha_2\subseteq \alpha$; 
$(\alpha_1\cap \alpha_2)$ is empty. %(to ensure the colorfulness property).
The aggregation is:
\[
\cnt(u,v,\alpha|B_3) = \sum_{x,y}\sum_{\alpha_1,\alpha_2} \cnt(x,y, \alpha_1|B_1) \times \cnt(u, \alpha_2|B_2).
\]

We can express the projection counts for any block in the above manner.
However, as the number of children increases, the cartesian product involved in the aggregation
would be prohibitively expensive. 
Our procedures efficiently simulate the aggregation
by performing a sequence of join operations involving the projection tables of children blocks.
	
Given a decomposition tree, 
the algorithm traverses the decomposition tree in a bottom-up manner, 
computing the projection tables for all the blocks
and culminates in the root-block representing the whole input query. 
At this step, instead of producing a projection table,
the algorithm simply computes the number of colorful matches.
The pseudo-code is shown in Figure~\ref{fig:overall}.

\section{Solving Blocks}\label{sec:solving-blocks}
The main step of the algorithm is the construction of the
projection tables of a block from its children blocks.
In this section we develop efficient procedures for handling cycles.
For the sake of highlighting the main ideas,
we first focus on the case of cycles found at a leaf level of the decomposition tree
(such as the cycle $\QB_1$ in Figure~\ref{fig:bd});
these cycles do not have other blocks annotating them.
General cycles are handled by extending these ideas as discussed later.
%We also describe briefly how to handle leaf blocks.
%in a natural manner and we discuss these later in the section.

\subsection{Solving Cycles at the Leaf Level} 
Consider a cycle block $\calC=(a_0, \ldots, a_{L-1})$ of length $L$ without annotations.
The cycle may have at most two boundary nodes. 
We discuss the more interesting case where the number of boundary nodes is exactly two; 
the other cases are handled in a similar fashion.
Let the two boundary nodes of the cycle be $a_{\eone}$ and $a_{\etwo}$, for some $0\leq \eone, \etwo\leq L-1$.
We present two procedures for computing the projection table of $\calC$: a baseline procedure that uses 
a path splitting strategy and an efficient procedure guided by a degree based ordering of vertices.

\begin{figure}[t]
\begin{center}
\begin{boxedminipage}{\hsize}
\begin{small}
\begin{tabbing}
xx\=xx\=xx\=xx\=xxx\=xxx\=\kill
\textbf{Procedure 1: Computing Projection Table for $\Pp$}\\
\> For each edge $(u,v)$ in the data graph $G$\\
\> \> $\cnt(u,v,\alpha|\Pp_{\eone,\eone\oplus 1}) \leftarrow 1$, where $\alpha= \{\chi(u), \chi(v)\}$.\\
\> For $j=\eone\oplus 2, \eone \oplus 3, \ldots, \etwo$\\
\> \> For each triple $(u,v,\alpha)$ with $\cnt(u,v,\alpha|\Pp_{\eone,j\ominus 1})\neq 0$\\
\> \> \> For each edge $(v, w)$ in $G$ such that $\chi(w)\not\in \alpha$ do:\\
\> \> \> \> Let $\alpha'=\alpha \cup \{\chi(w)\}$.\\
\> \> \> \> Increment $\cnt(u,w,\alpha'|\Pp_{\eone,j})$ by $\cnt(u,v,\alpha|\Pp_{\eone,j\ominus 1})$.\\
\\
\textbf{Procedure 2: Computing Projection Table for $\calC$}\\
\> For each entry $(u, v, \alpha_1)$ with $\cnt(u,v,\alpha_1|\Pp)\neq 0$\\
\> \> For each entry $(u,v,\alpha_2)$ with $\cnt(u,v,\alpha_2|\Pm)\neq 0$\\
\> \> \> If $\alpha_1 \cap \alpha_2 = \{\chi(u),\chi(v)\}$\\
\> \> \> \> $\alpha' \leftarrow \alpha_1 \cup \alpha_2$\\
\> \> \> \> $\val_1 \leftarrow \cnt(u,v,\alpha_1|\Pp)$; \quad $\val_2 \leftarrow \cnt(u,v,\alpha_2|\Pm)$\\
\> \> \> \> Increment $\cnt(u,v,\alpha'|\calC)$ by $\val_1 \times \val_2$.
\end{tabbing}
\end{small}
\end{boxedminipage}
\end{center}
\caption{$\PS$ Algorithm}
\label{fig:path-split}
\end{figure}

\noindent{\bf Path Splitting Algorithm ($\PS$).}
For two nodes $a_s$ and $a_t$ on the cycle, let $\Pp_{s,t}$ and $\Pm_{s,t}$ be the paths obtained by traversing the cycle
from $a_s$ to $a_t$ in the clockwise and counter-clockwise directions, respectively, i.e.,
$\Pp_{s,t}= (a_s, a_{s\oplus 1}, \ldots, a_t)$ and $\Pm_{s,t} = (a_s, a_{s\ominus 1}, \ldots, a_t)$,
where $\oplus$ and $\ominus$ refer to addition and subtraction modulo $L$. 
%respectively.

Let $\cnt(\cdot,\cdot,\cdot|\Pp_{s,t})$ denote 
the projection counts for path $\Pp_{s,t}$ taking $a_s$ and $a_t$ as the boundary nodes. 
Namely, for a triple $(u,v,\alpha)$, let $\cnt(u,v,\alpha|\Pp_{s,t})$ 
denote the number of colorful matches for $\Pp_{s,t}$
wherein $\pi(a_s)=u$, $\pi(a_t)=v$ and $\sig(\pi)=\alpha$.
A similar notion is defined for the paths $\Pm_{s,t}$.

\begin{figure}
\begin{center}
\includegraphics[width=4.0in]{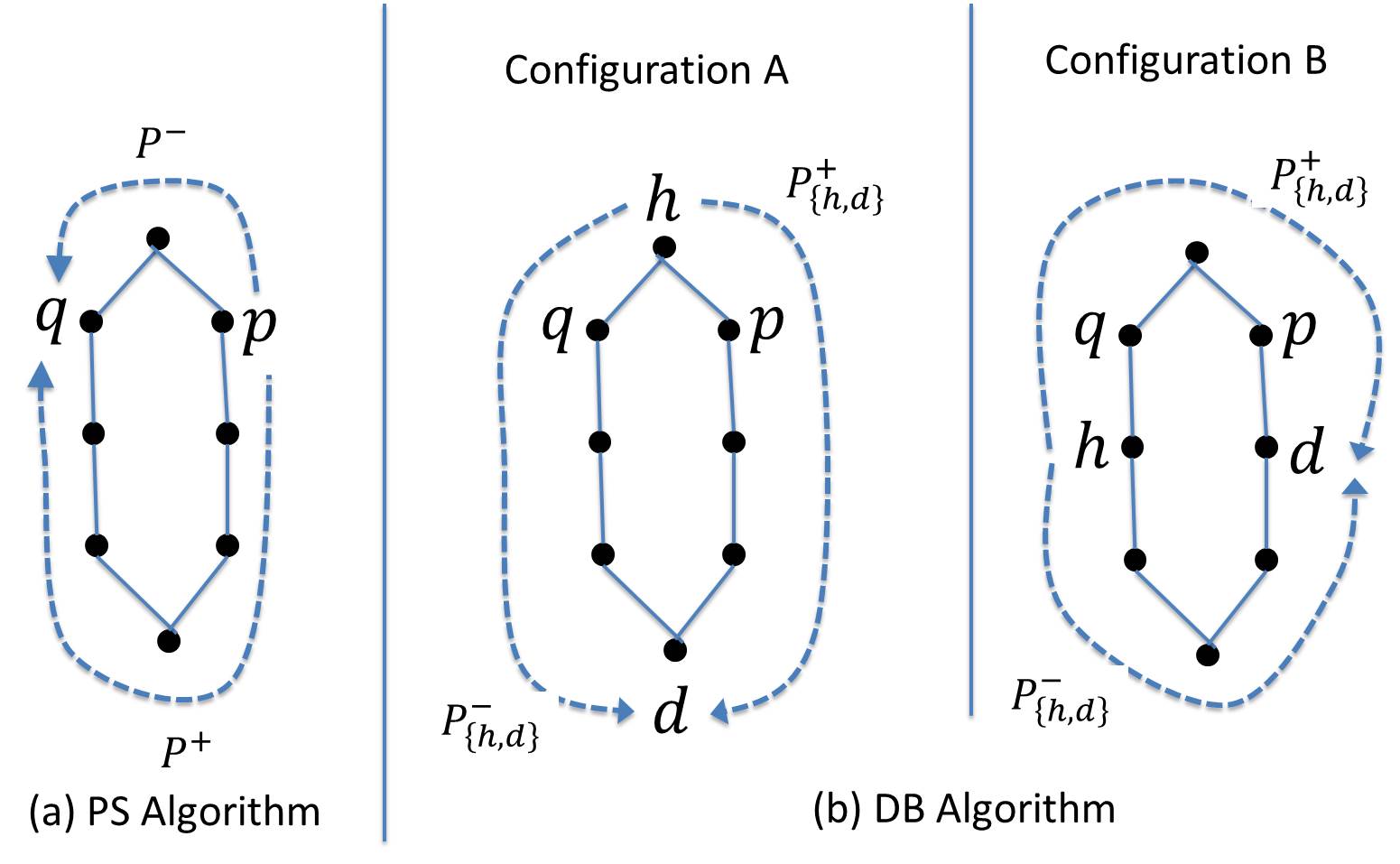}
\end{center}
\caption{$\PS$ and $\DB$ Illustrations.}
\label{fig:confy}
\end{figure}

The procedure splits the cycle into two paths along the boundary nodes,
given by $\Pp_{\eone,\etwo}$ and $\Pm_{\eone,\etwo}$;
we refer to these special paths as $\Pp$ and $\Pm$.
See Fig~\ref{fig:confy} (a) for an illustration.

The projection table for $\Pp$ is constructed iteratively,
by building the tables for the paths $\Pp_{\eone,j}$, for each node $a_j$ found along the path.
This is accomplished by extending the projection table for the prior path $\Pp_{\eone, j\ominus 1}$ via a join with the edges of the data graph.
The pseudocode is given in Figure~\ref{fig:path-split} (Procedure 1).
We assume that all the counts are initialized to zero.
The first iteration is handled by directly reading the edges of the data graph.
In the subsequent iterations, we extend every triple $(u,v,\alpha)$ with non-zero count $\cnt(u,v,s|\Pp_{\eone,j\ominus 1})$,
with any edge $(v,w)$, provided the resulting match is colorful.
The counts for $\Pm$ are constructed analogously.
Finally, the projection table for the cycle $\calC$ is obtained by joining the 
counts of $\Pp$ and $\Pm$, as shown in Procedure 2.
Here, a pair of triples $(u,v,\alpha_1)$ and $(u,v,\alpha_2)$ are joined, if the resulting match is colorful.

\if 0
\begin{figure}
\begin{center}
\includegraphics[width=1.25in]{fig-flower.jpg}
\end{center}
\caption{{\flower} Graph}
\label{fig:flower}
\end{figure}
\fi

\noindent{\bf Discussion of baseline.}
As discussed below (Section \ref{sec:gen-cycle}), the $\PS$ procedure can be extended to
handle general cycles with annotations, and yields an algorithm for
handling treewidth 2 queries.
The resultant $\PS$ algorithm is equivalent to the original color coding algorithm of Alon et al. \cite{AYZ95}. 
Prior work \cite{madduri4,AlonDHHS08} on colorful subgraph counting utilize the algorithm of Alon et al. as the basis 
for counting tree queries (treelets). We developed a distributed implementation of the $\PS$ algorithm,
and use it as the baseline in our experimental study.
Known techniques for subgraph counting with large queries (e.g. \cite{6413912, graft}) 
employ similar graph traversal techniques, making {\PS} consistent with the state of the art 
for subgraph counting as well as color coding.

%In the next section, 
We develop an procedure, called Degree Based (\DB) algorithm, 
that outperforms the $\PS$ algorithm for practical graphs and queries. It is motivated by the following observations.
First, the paths $\Pp$ and $\Pm$ may have uneven lengths (for instance, in Figure \ref{fig:confy}), 
$|\Pp| = 6$ and $|\Pm|=2$) and the processing of the longer path dominates the overall running time.
Second, in real-graphs with skewed degree distributions, 
high degree vertices tend to have more paths passing through them,
which populate the projection tables of $\Pp$ and $\Pm$.
However, significant fraction of these paths do not find appropriate 
counterparts in the other table to complete a match, leading to wasteful computations.
Third, in a distributed setting, the above phenomenon manifests 
as higher load on processors owning high degree vertices, leading to load imbalance.

It is not difficult to address the first issue alone. The only intricacy is that 
when the paths are split evenly, the boundary nodes may appear internally on the 
the paths (see Figure \ref{fig:confy} with a split across nodes denoted $h$ and $d$).
This can be handled by recording the mapping for the boundary nodes as part of the 
projection counts. We implemented the above algorithm as well
and noticed that the issue of wasteful computations and load imbalance still persists.
And furthermore, performance of the $\PS$ algorithm and the modified implementations does not differ
significantly on our benchmark graphs and queries.

\noindent{\bf Degree Based Algorithm $(\DB$).}
The $\DB$ algorithm addresses all the three issues 
by using the strategy of building the paths from high degree vertices.

Arrange the data vertices in the increasing order of their degree;
if two vertices have the same degree, 
the tie is broken arbitrarily, say by placing the vertex having the least id first. 
We say that a vertex $u$ is {\em higher} than a vertex $v$,
if $u$ appears after $v$ in the above ordering and this is denoted ``$u \succ v$''.

Consider the input cycle $\calC=(a_0, a_1, \ldots, a_{L-1})$ with boundary nodes $a_{\eone}$ and $a_{\etwo}$
and let $\pi$ be a colorful match for $\calC$ 
that maps the above nodes to data vertices $u_0, u_1, \ldots, u_{L-1}$, respectively. 
Among these data vertices, let $u_j$ be the highest vertex. 
We refer to the corresponding node $a_j$ as the {\em highest} node of $\pi$.

The idea is to partition the set of colorful matches into $L$ groups based on their highest node $a_h$
and compute the projection table for each group separately. 
For a pair of data vertices $u$ and $v$, and a signature $\alpha$,
let $\cnt(u,v,\alpha|\calC, \hi=h)$ denote the number of colorful matches of $\pi$ for $\calC$,
wherein $\pi(a_{\eone})=u$, $\pi(a_{\etwo})=v$, $\sig(\pi)=\alpha$ and $a_h$ is the highest node of $\pi$.
The projection table for $\calC$ can be obtained by aggregating the above counts:
for any triple $(u,v,\alpha)$,
\begin{eqnarray}
\cnt(u,v,\alpha|\calC) &=& \sum_{h=0}^{L-1}\cnt(u,v,\alpha|\calC,\hi=h).
\label{eqn:aggr}
\end{eqnarray}

We next describe an efficient procedure for computing the counts $\cnt(u,v,\alpha|\calC,\hi=h)$.
The concept of high starting matches plays a crucial role in the procedure.
Let $a_d$ be the node diagonally opposite to $a_h$ on the cycle, i.e., $d = h \oplus \Lbytwo$. 
The procedure splits the cycles into two paths $\Pp_{h,d}$ and $\Pm_{h,d}$;
Figure~\ref{fig:confy} (b) shows the paths for two sample values of $h$.
Let $a_j$ be a node found on the path $\Pp_{h,d}$,
A colorful match $\pi$ for $\Pp_{h,j}$ is said to be {\em high-starting},
if the data vertex $\pi(a_h)$ is higher than all the other data vertices used by $\pi$,
i.e., $\pi(a_h) \succ \pi(a_i)$, for all nodes $a_i$ on the path $\Pp_{h,j}$.
For a pair of vertices $u$ and $v$, and a signature $\alpha$,
let $\cnth(u,v,\alpha|\Pp_{h,j})$ denote the number of high-starting 
colorful matches for the path $\Pp_{h,j}$ wherein $\pi(a_h)=u$, $\pi(a_j)=v$ and $\sig(\pi)=\alpha$.

We then count the high-starting colorful matches for the two paths,
which can be accomplished via edge extensions, as in the $\PS$ algorithm.
However, the current setting offers a crucial advantage:
we can dictate that the starting node $a_h$ is the highest node,
meaning whenever an entry $(u,v,\alpha)$ gets extended by an edge $(v, w)$,
we can impose the condition that $u$ is higher than $w$ in the degree based ordering.
Imposing the condition leads to a significant pruning of the tables.
The pseudo-code is given in Figure~\ref{fig:db-algo} (Procedure 1).

\begin{figure}[t]
\begin{center}
\begin{boxedminipage}{\hsize}
\begin{small}
\begin{tabbing}
xx\=xx\=xx\=xx\=xxx\=xxx\=\kill
\textbf{Procedure 1: Compute $\cnth(u,v,\alpha|\Pp_{h,d})$}\\
For each edge $(u,v)$ in the data graph $G$ with $u \succ v$\\
\> $\cnth(u,v,\alpha|\Pp_{h,h\oplus 1}) \leftarrow 1$, where $\alpha= \{\chi(u), \chi(v)\}$.\\
For $j=h\oplus 2, a\oplus 3, \ldots, d$\\
\> For each triple $(u,v,\alpha)$ with $\cnth(u,v,\alpha|\Pp_{h,j\ominus 1})\neq 0$\\
\> \> For each edge $(v, w)$ in $G$ s.t. $u \succ w$ and $\chi(w)\not\in \alpha$:\\
\> \> \> Let $\alpha'=\alpha \cup \{\chi(w)\}$.\\
\> \> \> Incr. $\cnth(u,w,\alpha'|\Pp_{h,j})$ by $\cnth(u,v,\alpha|\Pp_{h,j\ominus 1})$.\\
\\
\textbf{Procedure 2: Compute $\cnth(x,y,\alpha|\calC,\hi=h)$ for Config. (A)}\\
For each entry $(u, v, x, \alpha_1)$ with $\cnth(u,v,x,\alpha_1|\Pp_{h,d})\neq 0$\\
\> For each entry $(u,v,y,\alpha_2)$ with $\cnth(u,v,y,\alpha_2|\Pm_{h,d})\neq 0$\\
\> \> If $\alpha_1 \cap \alpha_2 = \{\chi(u),\chi(v)\}$\\
\> \> \> $\alpha' \leftarrow \alpha_1 \cup \alpha_2$\\
\> \> \> $\val_1 \leftarrow \cnth(u,v,x,\alpha_1|\Pp_{h,d})$; \\
\> \> \> $\val_2 \leftarrow \cnth(u,v,y,\alpha_2|\Pm_{h,d})$\\
\> \> \> Incr. $\cnth(x,y,\alpha'|\calC, \hi=h)$ by $\val_1 \times \val_2$.
\end{tabbing}
\end{small}
\end{boxedminipage}
\end{center}
\caption{$\DB$ Algorithm}
\label{fig:db-algo}
\end{figure}

While the degree based strategy is more efficient,
we need to address an intricacy regarding the projection aspects.
In contrast to the $\PS$ algorithm, the $\DB$ algorithm 
splits at the highest node and consequently, the boundary nodes $\eone$ and $\etwo$ may appear inside the paths.
Thus, in order to get the projection counts on $\eone$ and $\etwo$, 
we also need to explicitly record the mappings for the boundary nodes.

The two nodes $a_{\eone}$ and $a_{\etwo}$ may occur on either $\Pp_{h,d}$ or $\Pm_{h,d}$.
Six different configurations are possible, of which two are  shown in Figure~\ref{fig:confy} (b). 
In Configuration (A), the paths include one boundary each, whereas in the second configuration, 
the same path includes both the boundary nodes.
The other four configurations are symmetric: the boundary nodes may swap the paths in which they occur
and in Configuration (B) can also reverse the order in which they occur.
We discuss the two configurations shown in the figure; 
the other configurations are handled in a similar fashion.

Consider configuration (A). 
In order to record the mappings of the boundary node $a_{\eone}$, 
we introduce an additional field in the projection counts.
For a triple of data vertices $u$, $v$ and $x$, and a signature $\alpha$,
let $\cnth(u,v,x, \alpha|\Pp_{h,d})$ denote the number of high-starting matches
$\pi$ for $\Pp_{h,d}$ with $\pi(a_h)=u$, $\pi(a_d)=v$, $\pi(a_{\eone})=x$ and $\sig(\pi)=\alpha$.
These counts are computed in a manner similar to the base procedure shown in Figure \ref{fig:db-algo} (Procedure 1);
however, when the process encounters the boundary node $\eone$ (namely, the initialization step or $j = \eone$),
the mapped vertex ($v$ or $w$, respectively) is recorded in the additional field.
The analogous counts for $\Pm$ can derived in a similar manner.
The value of $\cnth(u,v,\alpha|\calC,\hi=h)$ is obtained by joining the two; 
see Procedure (2) in Figure~\ref{fig:db-algo}.
Configuration (B) is handled in a similar fashion, except that we need two additional fields to 
record the mappings for both the boundary nodes. Namely, we maintain counts having keys of the form
$(u,v,x,y)$ representing the mapping of the nodes $h,d,\etwo$ and $\eone$ to the
vertices $u,v,x$ and $y$. Procedure (2) is also adjusted accordingly.
Finally, we can get the projection table $\cnt(u,v,\alpha|\calC)$ 
via aggregation, as in Equation  \ref{eqn:aggr}.

\begin{figure}[t]
\begin{center}
\begin{boxedminipage}{\hsize}
\begin{small}
\begin{tabbing}
xx\=xx\=xx\=xx\=xxx\=xxx\=\kill
\textbf{Compute Projection Table for $\Pp_{h,d}$}\\
Let $B$ be the block annotating the edge $(a_h, a_{h\oplus 1})$\\
$\cnth(\cdot,\cdot,\cdot|\Pp_{h,h\oplus 1}) = \cnth(\cdot,\cdot,\cdot|B)$\\
For $j=h\oplus 1,h\oplus2, \ldots, d$\\
\> Execute $\NodeJoin(a_j$)\\
\> Execute $\EdgeJoin(a_j$)\\
Execute $\NodeJoin(a_d)$\\
\\
$\NodeJoin(a_j)$:\\
If $a_j$ is annotated by a block $B$\\
\> For each $(u,v,\alpha_1)$ with $\cnth(u,v,\alpha_1|\Pp_{h,j})\neq 0$\\
\> \> For each $(v,\alpha_2)$ with $\cnt(v, \alpha_2|B)\neq 0$ \\
\> \> \> If $(\alpha_1 \cap \alpha_2 = \{\chi(v)\}$ \\
\> \> \> \> $\alpha \leftarrow \alpha_1 \cup \alpha_2$\\
\> \> \> \> $\val_1 \leftarrow \cnth(u,v,\alpha_1|\Pp_{h,j})$; \quad $\val_2 \leftarrow \cnt(v, \alpha_2|B)$\\
%\> \> \> \> $\cnth(u,v,\alpha|\Pp_{h,j}) \leftarrow \val_1\times \val2$\\
\> \> \> \> Incr. $\cnth(u,v,\alpha|\Pp_{h,j})$ by $\val_1\times \val2$\\
\\
$\EdgeJoin(a_j)$\\
For each entry $\cnth(u,v,\alpha_1|\Pp_{h,j})\neq 0$\\
\> For each entry $\cnt(v, w, \alpha_2|B)\neq 0$ and $u \succ w$\\
\> \> If $(\alpha_1\cap \alpha_2 = \{\chi(v)\}$\\
\> \> \> $\alpha \leftarrow \alpha_1 \cup \alpha_2$\\
\> \> \> $\val_1 \leftarrow \cnth(u,v,\alpha_1|\Pp_{h,j})$; \quad $\val_2 \leftarrow \cnt(v,w,\alpha_2|B)$\\
%\> \> \> $\cnth(u,w,\alpha|\Pp_{h,j\oplus 1}) \leftarrow \val_1 \times \val_2$
\> \> \> Incr. $\cnth(u,w,\alpha|\Pp_{h,j\oplus 1})$ by $\val_1 \times \val_2$
\end{tabbing}
\end{small}
\end{boxedminipage}
\end{center}
\caption{$\DB$ Procedure for General Cycle Blocks}
\label{fig:db-gen-algo}
\end{figure}

%\noindent{\bf Theoretical analysis.} We note that the \DB~algorithm is
%inspired by the \textsc{MinBucket} algorithm for enumerating triangles
%(see e.g.~\cite{BFMPSW14} and references therein). It has been
%shown~\cite{BFMPSW14} that \textsc{MinBucket} yields asymptotic
%speedup over the naive triangle enumeration algorithm when the input
%is a random graph with power law degree distribution. In the full
%version we prove the same phenomena for our degree based technique for
%enumerating cycles of arbitrary lengths in data graphs with a power 
%law degree distribution sampled from the Chung-Lu~\cite{Chung10122002} family of graphs. 
%

\subsection{Solving General Blocks}
\label{sec:gen-cycle}
In this section, we present procedures for handling generic blocks. We first consider the case of 
cycle blocks with two boundary nodes.

Consider a generic cycle $\calC=(a_0, a_1, \ldots, a_{L-1})$ having two boundary nodes $a_{\eone}$ and $a_{\etwo}$,
whose nodes and edges may be annotated with other blocks (children of $\calC$ in the decomposition tree). 
All these blocks have at most two boundary nodes and these are found on $\calC$.
For such any block $B$, the subquery represented by $B$ has the same boundary nodes as that of $B$.
Thus, we can get the projection table for $\calC$ by joining the projection
tables of the subqueries represented by the above blocks, as described below.

As before, we consider each possible choice for the highest node $a_h$
and split the cycle into two paths $\Pp_{h,d}$ and $\Pm_{h,d}$.
The path segment $\Pp_{h,d}$ also represents a subquery 
(induced by the union of the nodes found in the path and the blocks annotating path). 
Thus, we can extend the notion of projection
tables for these segments as well.
The procedure for computing the projection table for $\Pp_{h,d}$ is similar
that the one discussed in previous section (Procedure {1} in Figure \ref{fig:db-algo}),
and works by extending one edge in each step.
However, two aspects need to be addressed. 
Firstly, in contrast to the prior procedure, 
the edge being extended may be annotated with a block or 
un-annoated (and correspond to an original edge found in
input query $\inpQ$).
In the former case, we perform a join operation with the edges of the data graph (as before),
whereas in the latter case the join operation involves the projection table of the block $B$.
For the sake of uniformity, it will be convenient to view the former edges 
as blocks as well, denoted $B_{G}$, and associate with them a projection table derived from the
graph edges, as follows. 
For each edge $(u,v)\in G$, set $\cnt(u,v,\alpha)$ as $1$, for $\alpha=\{\chi(u),\chi(v)\}$; 
all other entries of the table are set to a count of zero.
The second aspect is that the nodes of the cycles may also be annotated,
and these get included as part of the sequence of joins being performed.
The two aspects are addressed by procedures called $\NodeJoin$ and $\EdgeJoin$.
The pseudo-code is shown in Figure \ref{fig:db-gen-algo}.

The procedure starts with an initial table representing the first edge $(a_h, a_{h\oplus 1})$
and performs a sequence of join operation with the blocks annoatating the nodes 
and edges of the cycle. 
At this juncture, two intricacies must be highlited. Firstly, the endpoint $a_h$ and/or $a_d$ may be annotated
by a block $B$, which must be joined by either $\Pp_{h,d}$ or $\Pp_{h,d}$,
but not by both (to avoid double counting). For this purpose, we adopt the convention that 
$\Pp_{h,d}$ and $\Pm_{h,d}$ include only the block annotating 
$a_d$ and $a_h$ (if found), respetively. 
Secondly,  for a block with two boundary nodes $\eone$ and $\etwo$,
the projection table views one of them as the first boundary node
and the other as the second (corresponding to the two components
of the keys of the form $(u,v,\alpha)$).
Thus, the boundary nodes are ordered and the projection tables need
not be symmetric: taking $\etwo$ as the first boundary node
and $\eone$ as the second boundary node would produce a different boundary tables.
However, the boundary tables are transpose of each other ($\cnt(u,v,\alpha) = \cnt(v,u,\alpha)$).
Our algorithm maintains both the tables and uses the appropriate one 
as dictated by the nodes of the cycle.
The pseudo-code reflects the first aspect, but, for the sake of clarity, ignores the second.

The projection counts obtained by the above process are joined using a proecure
similar to Figure~\ref{fig:db-algo}, taking into account the configuration in which the boundary nodes occur.
These are aggregated over all possible choices of the high node $a_h$.

Cycles with a single boundary node are handled in a similar manner
by considering each possible choice for the highest node $a_h$
and splitting the cycle into two paths $\Pp_{h,d}$ and $\Pm_{h,d}$.
The setting is simpler with only two configurations possible on how the boundary nodes may 
appear on the paths: the (single) boundary node may appear in $\Pp$ or $\Pm$.
Thus, the prior procedures can be applied here as well.

The case of leaf blocks are also handled via join operations.
Any leaf block $(a,b)$ is processed by joining the projection table for 
the blocks annotating the nodes $a$, the edge $(a,b)$ and the node $b$ (if found).

At  the end of the traversal process, the root block is solved,
which is either a cycle or a singleton node. 
In the former case, the block is treated as a cycle without boundary nodes.
Instead of computing its projection table, we simply count the number of colorful matches,
via a procedure similar to that of two-boundary cycles.
In the latter case, we consider the projection table of the block annotating the singleton node 
and output the sum of counts across all entries of the table.
The process yields the number of colorful matches of the input query $\inpQ$.
\section{Finding Good Decomposition Trees}
\label{sec:planner}
In each step of the decomposition process, multiple blocks may be available for contraction.
Each sequence of choices leads to a unique decomposition tree,
and hence, multiple trees are possible for a given query.
For example, the query ${\tt brain1}$ (Figure~\ref{query-pic}) admits two decomposition trees:
(i) contract the $4$-cycle first and then the $6$-cycle, and (ii) vice versa.
We conducted an experimental study involving a number of real-world data graphs and queries.
For each query, we enumerated all the possible decomposition trees and evaluated the execution time on each graph.
We observed a maximum difference of $13$x in the execution times 
of two decomposition trees for the same graph-query combination.
However, we noted that in most cases the optimal tree is independent of the data graph
and is mainly determined by the structure of the query.
These observations show that we need a procedure for selecting a good tree,
but in this process, we need not analyze the large data graph;
rather, it suffices to focus on the structural properties of the small query graph. 

Our study also showed that the following factors, in the decreasing order of importance, determine the execution time: 
(i) length of the longest cycle block; 
(ii) number of boundary nodes;
(iii) number of node/edge annotations.
Armed with the above observations, we designed a simple heuristic procedure.
Enumerate all possible trees for the given query and pick the best using the above factors for comparison.
In our experimental setting, barring a few exceptions, 
the heuristic picked the optimal tree in majority of the cases and a near-optimal tree for the rest.
Since the queries are of small size (about 10 nodes), even a sequential implementation of the 
heuristic takes insignificant amount of running time.

\section{Distributed Implementation}
\label{distr_impl}
In this section, we present a brief sketch of the distributed implementation of the two algorithms, 
highlighting their main aspects.
The distributed implementation consists of three layers. The first
layer, called the {\em planner}, finds a good decomposition tree for
the given query a fast sequential implementation the heuristic
discussed in Section \ref{sec:planner}. The second layer, called the
{\em plan solver}, takes the data graph and the decomposition tree
and implements the $\PSE$ and $\DBE$ algorithms presented in
Section \ref{sec:solving-blocks}. It accomplishes the above task by using efficient
join routines supported by the third layer, called {\em engine}.
The engine has three functionalities. The first is to store
the data graph in a distributed manner. This is achieved
via a $1D$ decomposition, wherein the vertices are equally
distributed among the processors using block distribution,
and each vertex is owned by some processor. The second is
to maintain projection tables. These tables are of two
types: unary projection tables having single-vertex keys
of the form $(u, \alpha)$ associated with blocks having single boundary nodes; 
binary projection tables having two-vertex keys of the form
$(u, v, \alpha)$. The binary tables also have variants involving
additional fields for storing the mappings for the boundary
vertices. The engine provides a convenient abstraction to the
plan solver for all these types of tables. All the
tables are maintained as distributed hash tables which use
open addressing to resolve collisions. Every entry $(u, v, \alpha)$ is
stored on the processor owning $v$; the degree of $v$ is packed as
part of the entry for enforcing the degree constraint in the
join operations (of the form $u \succ w$ in Procedure 1 of
Figure~\ref{fig:db-algo}). Signatures are maintained as bitmaps.
The third functionality is to support two types of join
operations on the projection tables. The first type of join
is used for extending a path segment an edge; this involves
a join with either the graph edges or the projection table
of the block annotating the edge. In the former case, the
extension of an entry with a key $(u, v, \alpha)$ with an edge
$(v,w)$ will be performed at the owner of $v$. The result is an
entry with a key $(u, w, \alpha')$; this entry is communicated to
the owner of $w$, where it gets stored. The latter case involves
join of two entries with keys $(u, v, \alpha_1)$ and $(v, w, \alpha_2)$. Since
the first entry is stored at the owner of $v$ and the second,
at the owner of $w$, a communication is
performed to bring the two entries to a common processor.
The second type of join is used for merging the projection
tables of two path segments (for example, Procedure 2
in Figure~\ref{fig:db-algo}) and it is implemented in a similar way. 
The two operations are implemented using a standard sort-merge join
procedure with signature compatibility checks performed via
fast bitwise operations.

\begin{table}[t]
\centering
\caption{Real Data Graphs}
\begin{tabular}{l|l|l|l|l|l}
{\bf Graph}  & {\bf Domain}   & {\bf Nodes} & {\bf Edges} & {\bf Avg}  & {\bf Max} \\
& & & & {\bf Deg}  & {\bf Deg} \\ \hline
{\tt brightkite}  & Geo loc.   & 58K   & 214K  & 4 & 1135 \\ 
{\tt condMat} & Collab.  & 23K   & 93K   & 4 & 281 \\ 
{\tt astroph}    & Collab.  & 18K   & 198K & 11 & 504   \\ 
{\tt enron}      & Commn.  & 36K   & 180K  & 5 & 1385\\ 
{\tt hepph}      & Citation       & 34K   & 421K  & 12 & 848  \\ 
{\tt slashdot}   & Soc. net. & 82K   & 900K  & 11 & 2554 \\
{\tt epinions}   & Soc. net. & 131K  & 841K  & 6 & 3558 \\ 
{\tt orkut}   & Soc. net.   & 524K   & 1.3M  & 3 & 1634 \\ 
{\tt roadNetCA} & Road net.   & 2M    & 2.7M & 1.3 & 14\\ 
{\tt brain} & Biology   & 400K    & 1.1M & 3 & 286
\end{tabular}
\label{graph-tab}
\end{table}

\begin{figure}
\centering
\includegraphics[scale=1]{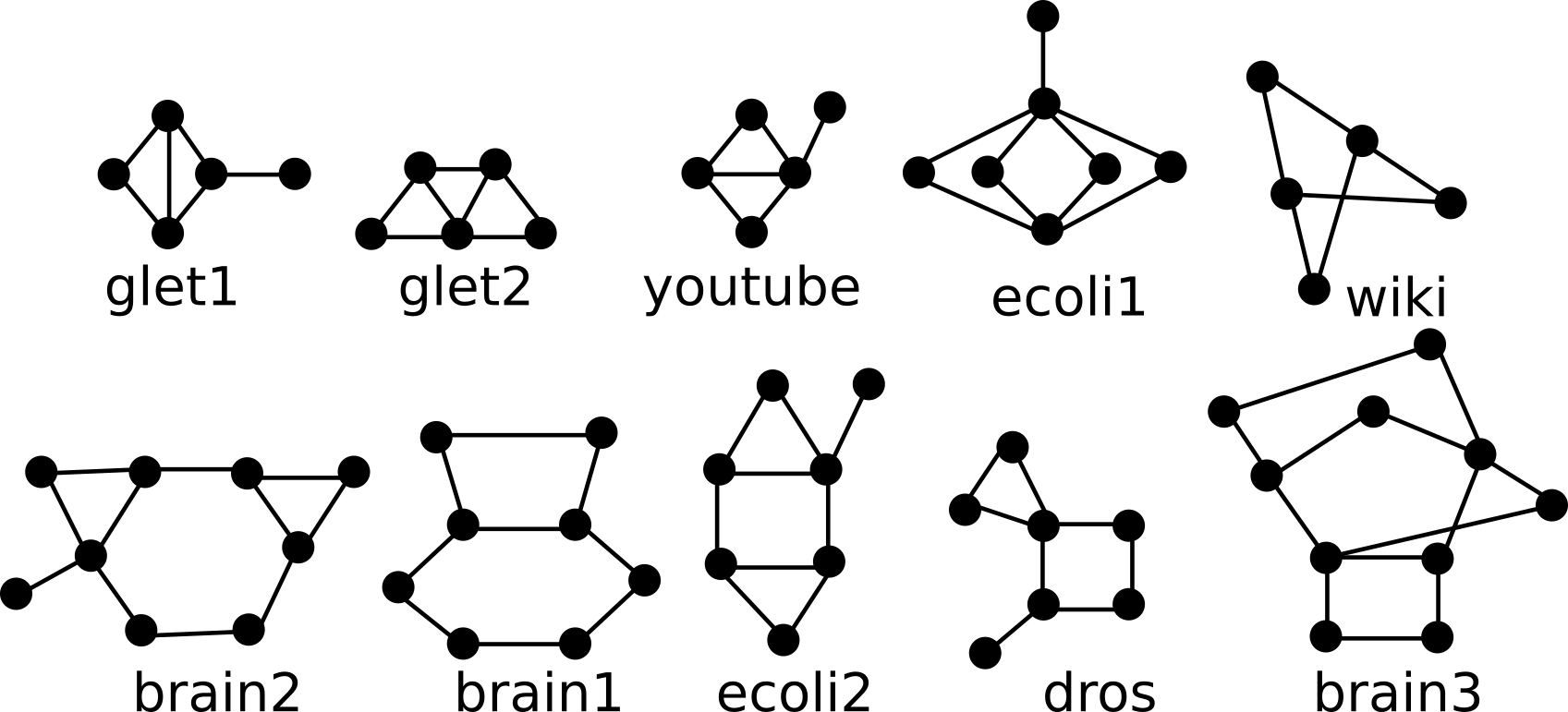}
\caption{Real world queries used in our study.}
\label{query-pic}
\end{figure}

\section{Experimental Study}\label{sec:expt}
\label{sec:expt_setup}
We present an extensive experimental evaluation of the algorithms presented in the paper. 
Our experiments include a comparison of the algorithms on execution time, strong and weak scaling studies for our algorithm, 
and studies to evaluate the quality of our query plan generation heuristic and the efficacy of color coding for treewidth two queries. 

\subsection{Experimental Setup}
{\bf System. }
The experiments were conducted on an IBM Blue Gene/Q system \cite{bgq1}. Each BG/Q node has $16$ cores and 
$16$ GB memory; multiple nodes are connected using a $5$D torus interconnect.  
Our implementation is based on MPI2 with {\tt gcc 4.4.6} with the number of ranks varying from $32$ to $512$. 
Each MPI rank was mapped to a single core. The number of MPI ranks mapped to a node was adjusted based on 
the memory requirements of individual experiments.

{\bf Graphs.}
The experiments involved nine real world graphs obtained from
the SNAP dataset collection and the human brain network from 
the Open Connectome Project (\url{http://snap.stanford.edu}, \url{http://www.openconnectomeproject.org/}).
Our benchmark includes representative graphs from different domains in SNAP.
The graphs and their characteristics are presented in Table~\ref{graph-tab}.  
We also used synthetic R-MAT graphs~\cite{RMAT}, for the purpose of studying the weak scaling behavior of our algorithms.

{\bf Queries. }
Our query benchmark consists of the ten real world queries shown in Figure~\ref{query-pic}.
The queries were derived from prior network analysis work spanning diverse domains:
{\tt dros, ecoli1, ecoli2, brain1, brain2, brain3} - biological networks \cite{Middendorf01032005, Iakovidou2013204};
{\tt glet1, glet2} - graphlets \cite{6413912}; 
{\tt wiki} - collaboration networks \cite{wu2012cwaunmcar};
{\tt youtube} - spam networks \cite{ytube}.

{\bf Algorithms. }
We study two algorithms: $\PSE$, which serves as the baseline, and our degree based $\DBE$ algorithm.
Recall that $\PSE$ is equivalent to the dyamic programming based algorithm of Alon et al. \cite{AYZ95}.

\begin{figure}
\centering
\includegraphics[scale=1]{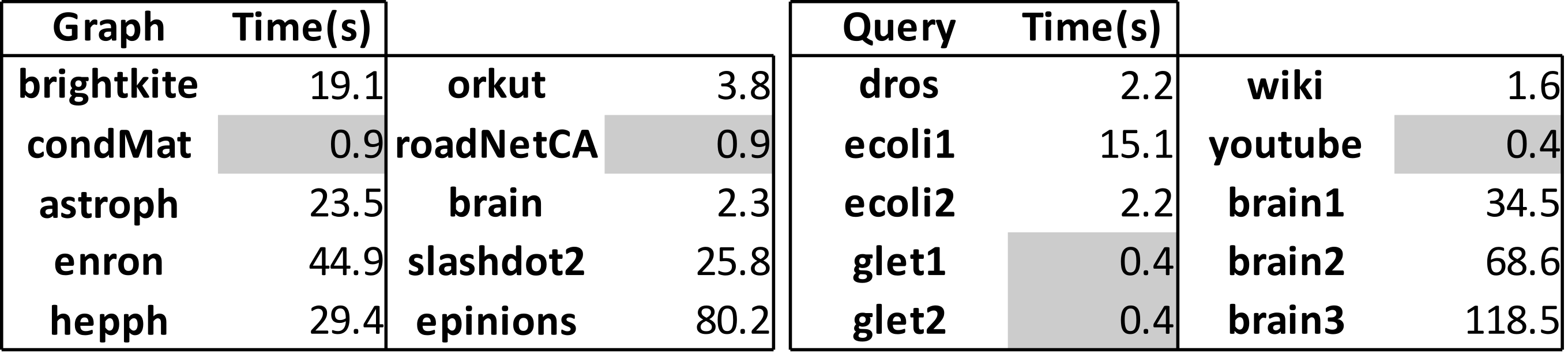}
\caption{Average execution time (seconds).}
\label{avg-run-times}
\end{figure}

\subsection{Graph-Query Characteristics}
\label{sec:summary_stats}
The characteristics of the input graph and query strongly influence the running time of query counting 
algorithms. To obtain an overall characterization of the phenomenon, we measured the execution time of 
the $\DBE$ algorithm on each of the 100 real graph and query combinations using 512 MPI ranks.
Figure~\ref{avg-run-times} shows the average running time for each graph
across the ten queries and the average running time of each query across the ten graphs.
The wide variations in execution time across graphs and queries is indicative of their relative difficulty in practice. 
For example, although {\sf roadNetCA} is a larger graph than {\sf epinions}, the average running time of the former is 
smaller than the latter by an order of magnitude. We can understand this behaviour by studying the skew in underlying degree distribution. 
In general, counting colorful occurrences of a query on a graph with high skew (indicated by high maximum degree in Table~\ref{graph-tab}) 
tends to be computationally expensive. Similarly, the queries also exhibit large variations in running time, ranging from sub-second
for {\sf youtube, glet1} and  {\sf glet2} to more than a minute for {\sf brain2} and {\sf brain3}. These variations can be accounted for 
by studying the differences in the size and the sub-structures of the queries. We observed that queries with longer cycles are more challenging. 
As an extreme case, a $12$-vertex complete binary tree query requires 2 seconds on average, 
in contrast to the $10$-vertex {\sf brain3} query which 
requires nearly 2 minutes on average, exemplifies our observation.
 
\subsection{Performance Comparison of $\PSE$ and $\DBE$ Algorithms}
We study the performance of the $\PSE$ and $\DBE$ algorithms on 
100 graph-query combinations obtained by selecting a graph from Table~\ref{graph-tab} and 
a query from Figure~\ref{query-pic}.
For our $\DBE$ algorithm, we used plans supplied by the heuristic described in Section~\ref{sec:planner}. 
In contrast, for the $\PSE$ algorithm, 
we enumerated all the possible plans and obtained the optimal plan. 
Thus, we compare our algorithm to the best possible scenario for the baseline algorithm.

We compute the improvement factor $(IF)$ of $\DBE$ over $\PSE$ as the ratio of the execution time of $\PSE$ to $\DBE$.  
Figure~\ref{plot:compare} shows $IF$ at $32$ and $512$ ranks.
The combinations where $\DBE$ outperforms $\PSE$ $(IF > 1)$ are highlighted in green. 
The blank entries represent cases where $\PSE$ (or $\DBE$) did not complete execution, due to lack of available memory. 
At $32$ ranks, we can see that $\DBE$ outperforms $\PSE$ on 84\% of the graph-query combinations 
with $IF$ being as high as $9.1$x (average $2.4$x). At $512$ ranks, 
$\DBE$ outperforms the baseline on 89$\%$ of the cases, with $IF$ becoming as high as $28.7$x (average $5.0$x). 

\begin{figure}
 \centering
 \begin{tabular}{c}
 \includegraphics[scale=1.0]{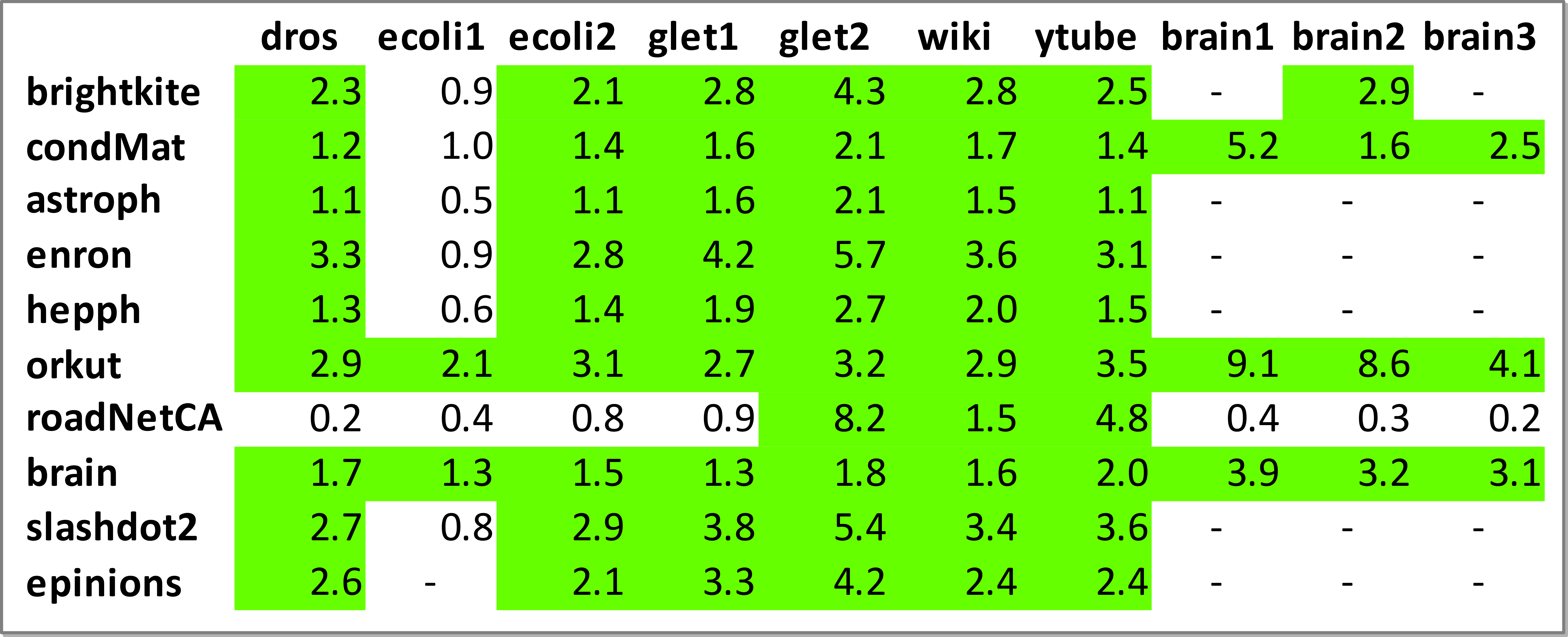}
 \\
 \\
 (a) $32$ Ranks
 \\
 \\
 \includegraphics[scale=1.0]{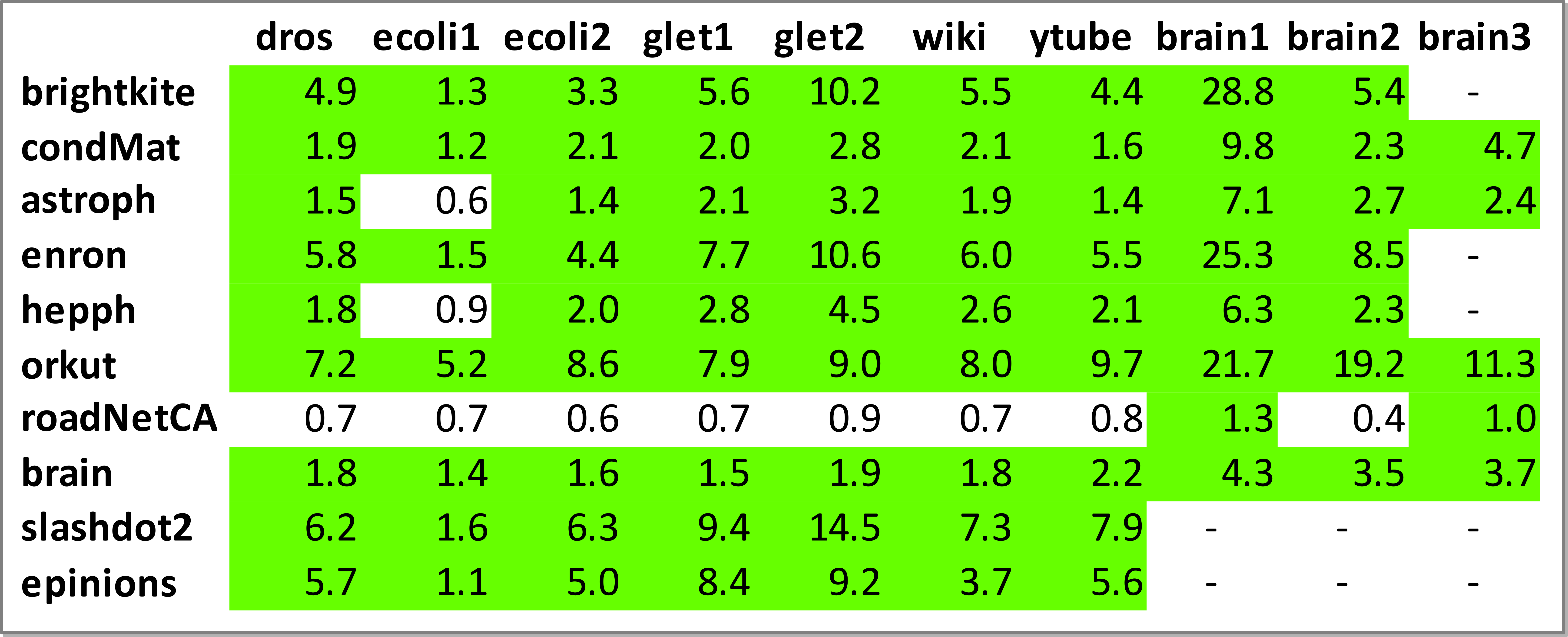}
 \\
 \\
 (b) $512$ Ranks
 \end{tabular}
 \caption{Improvement factor of the $\DBE$ algorithm over the $\PSE$ algorithm.}
 \label{plot:compare}
 \end{figure}

We can see that the relative performance of the two algorithms is dependent on the graph-query pair.
For instance, the average $IF$ on {\tt enron} and {\tt condmat} graphs are $8.4$ and $3.1$ on $512$ ranks, respectively,
correlating well with their skew in the degree distribution (see Table \ref{graph-tab}).
Similarly, the improvement factors is higher on complex queries such as {\tt brain1} where the average improvement is $13.1$x,
compared to {\tt youtube} where the average improvement is only $4.1$x.
The phenomenon becomes extreme in the case of road networks that have very low skew
and exhibit sub-second average running time across queries. 

Our $\DBE$ algorithm scales better than $\PSE$, as demonstrated by the increase in $IF$ at higher ranks.
For different graph-query combinations, we computed the ratio of $IF$ at $512$ ranks to that of $32$ ranks
and found that $IF$ increases by a factor of up to $4.7$x (average $1.7$x).
To understand this trend further, we compute the load (number of projection function operations) for both algorithms 
for processing different queries on the {\tt enron} graph at 512 ranks. 
For different queries, Figure \ref{max-load-pic} shows the execution time and the average and maximum load.
We can see that $\DBE$ has lesser average load than $\PSE$, since $\DBE$ avoids wasteful computations.
Furthermore, the improvement obtained by $\DBE$ over $\PSE$ on execution time correlates 
well with improvement obtained on the maximum load.
For example, on {\tt ecoli1} query, 
even though $\PSE$ outperforms $\DBE$ at $32$ ranks, 
the perforamance is reversed at $512$ ranks (see Fig \ref{plot:compare}),
because of superior load balancing characteristic of $\DBE$.

\begin{figure}
\centering
\begin{tabular}{c}
\begin{tabular}{ccc}
\includegraphics[scale=0.33]{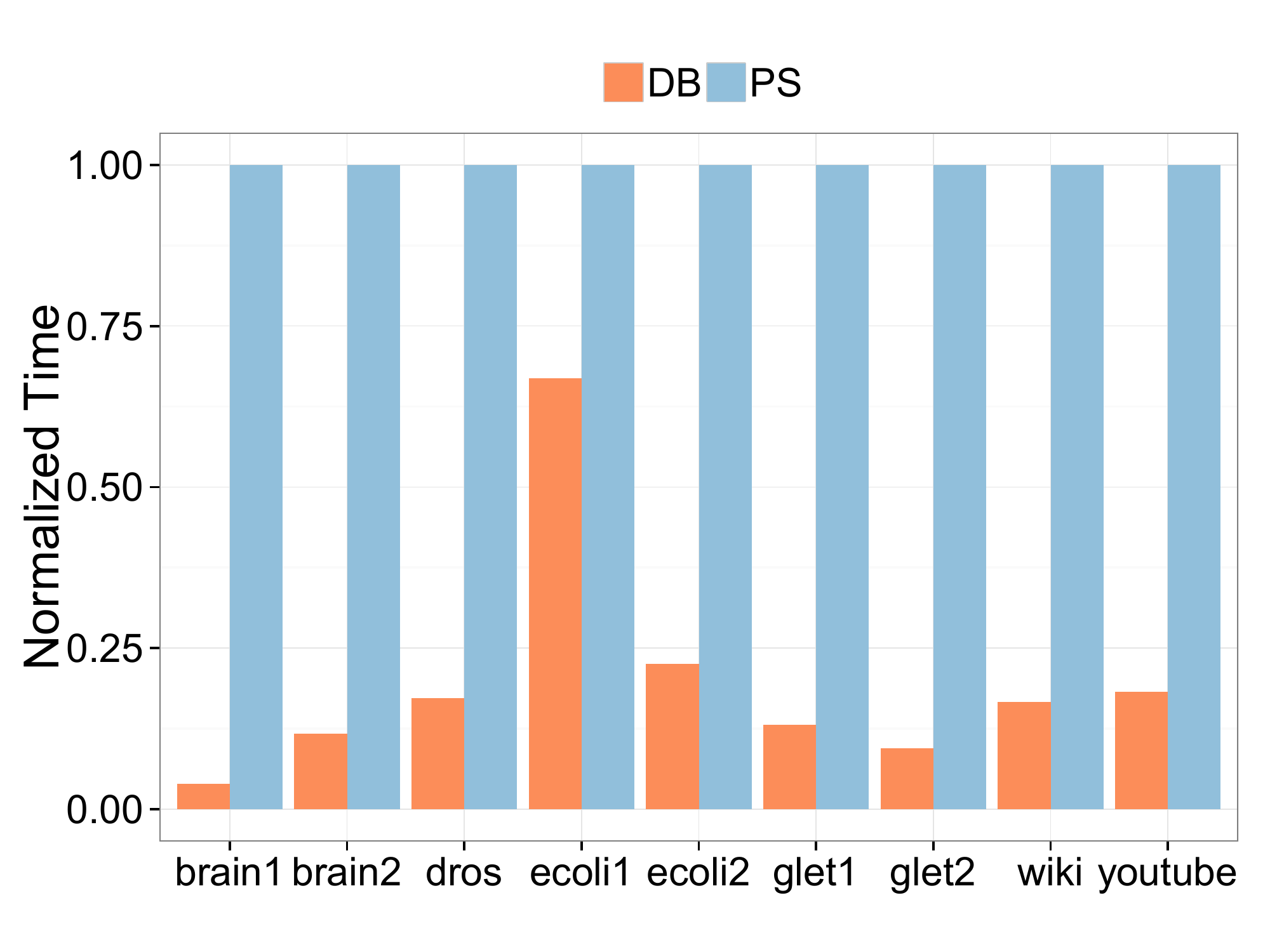} 
& \quad&
\includegraphics[scale=0.33]{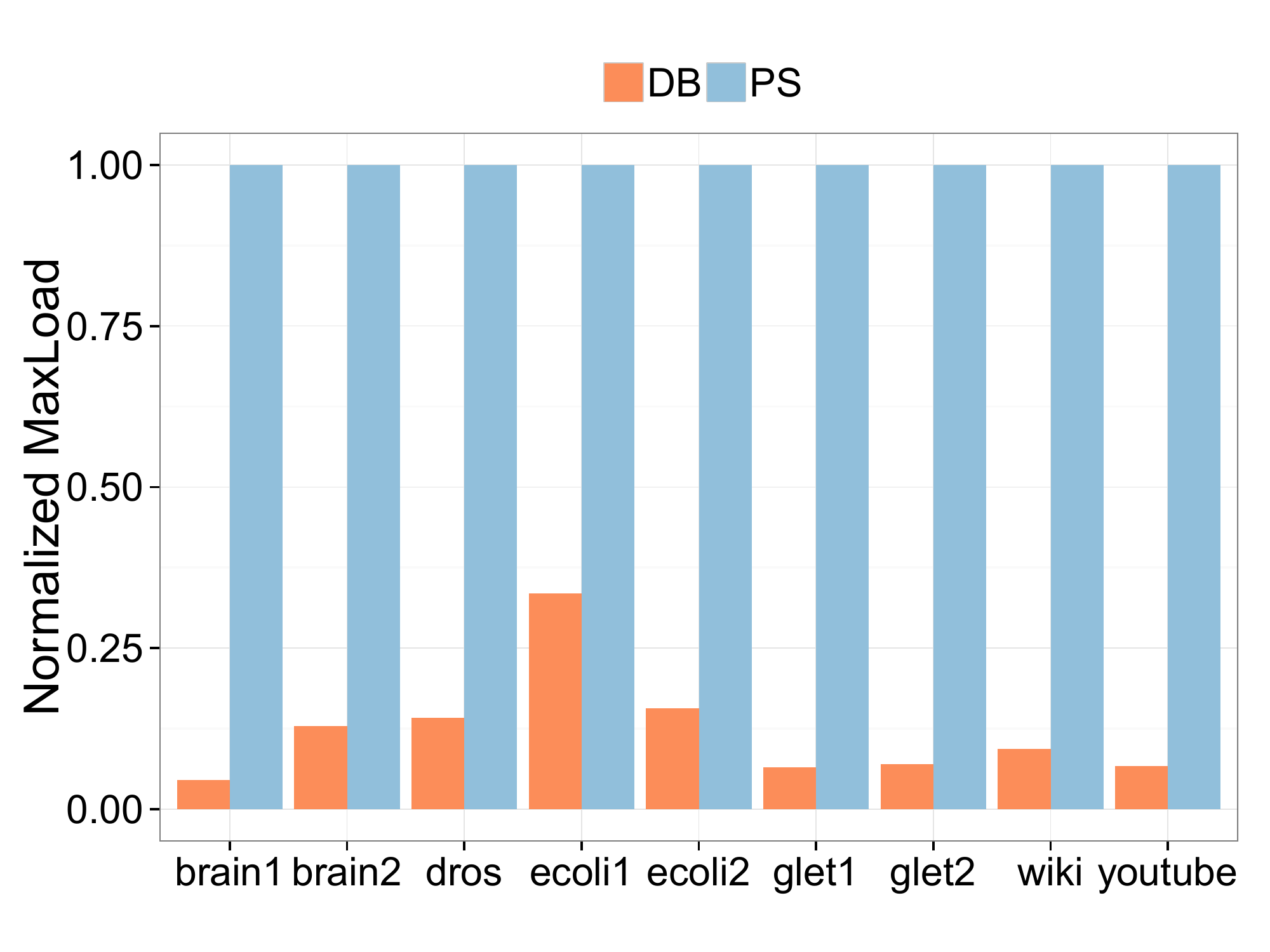}
\\ \\
(a) Time
& \quad\quad &
(b) Max. Load
\end{tabular}
\\
\\
\includegraphics[scale=0.33]{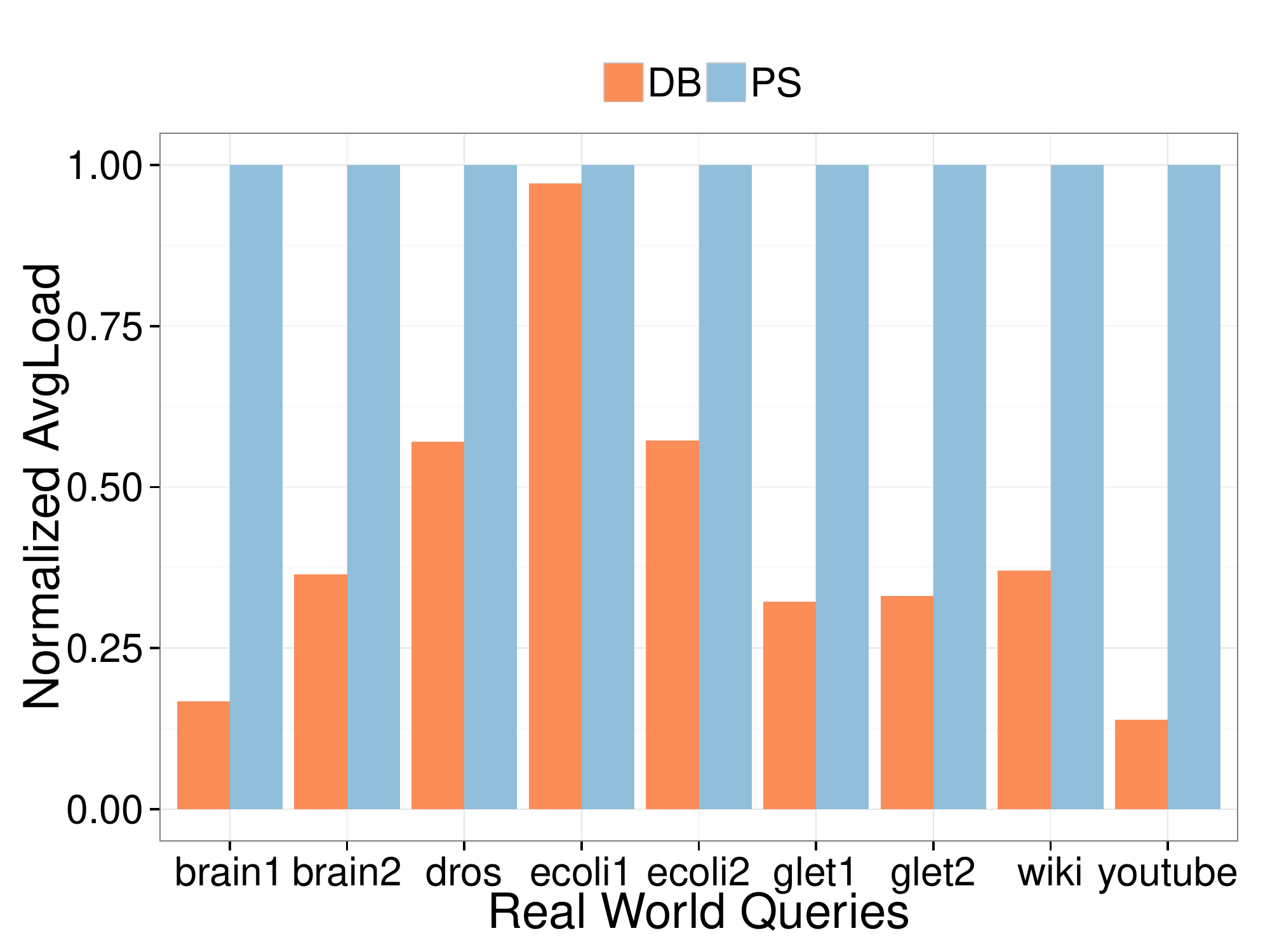}  
\\\\
(c) Avg. Load
\end{tabular}
\caption{Normalized execution time, average load and maximum load on {\tt enron} graph.}
\label{max-load-pic}
\end{figure}

\begin{figure}
\centering
\includegraphics[scale=1]{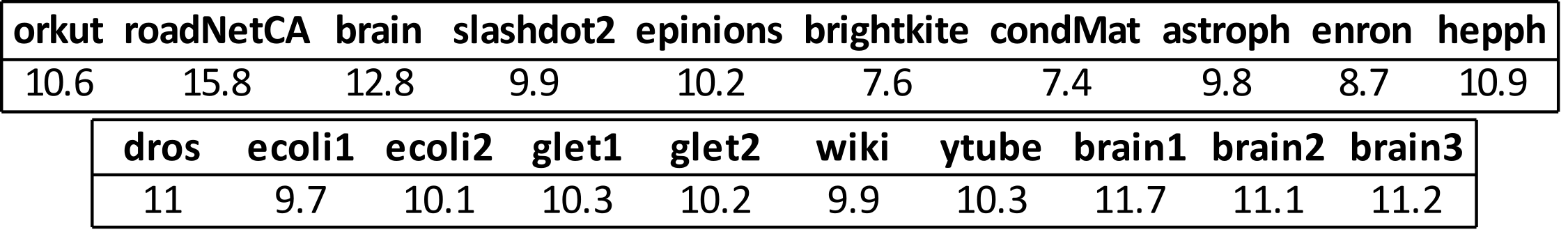}
\caption{Avg. speedup of $\DBE$ at 512 ranks compared to 32 ranks.}
\label{speedup-table}
\end{figure}

\subsection{Scalability Characteristics of $\DBE$ Algorithm}
We studied the scaling of $\DBE$ across the 100 graph-query combinations. 
For each combination, we computed the ratio of the execution time at $512$ ranks to that of $32$ ranks.
Figure~\ref{speedup-table} summarizes the above information by providing the averge of the above speedup for each query across graphs
and the same for each graph across queries. As against an ideal speedup of $16$x, we see that the algorithms
obtains speedups in the range of $7.4$x to $15.8$x.

We studied the strong scaling behavior of our algorithm, using {\tt enron} as a representative graph.
Taking $32$ ranks as the baseline, Figure \ref{scaling-pic} shows the speedup up to $512$ ranks for different queries.
The algorithm scales well across queries, with an average speedup of $8.2$x and maximum speedup of $9.9$x at $512$ ranks
(as against an ideal speedup of $16$x).

\begin{figure}
\centering
\begin{tabular}{lll}
	\includegraphics[scale=0.35]{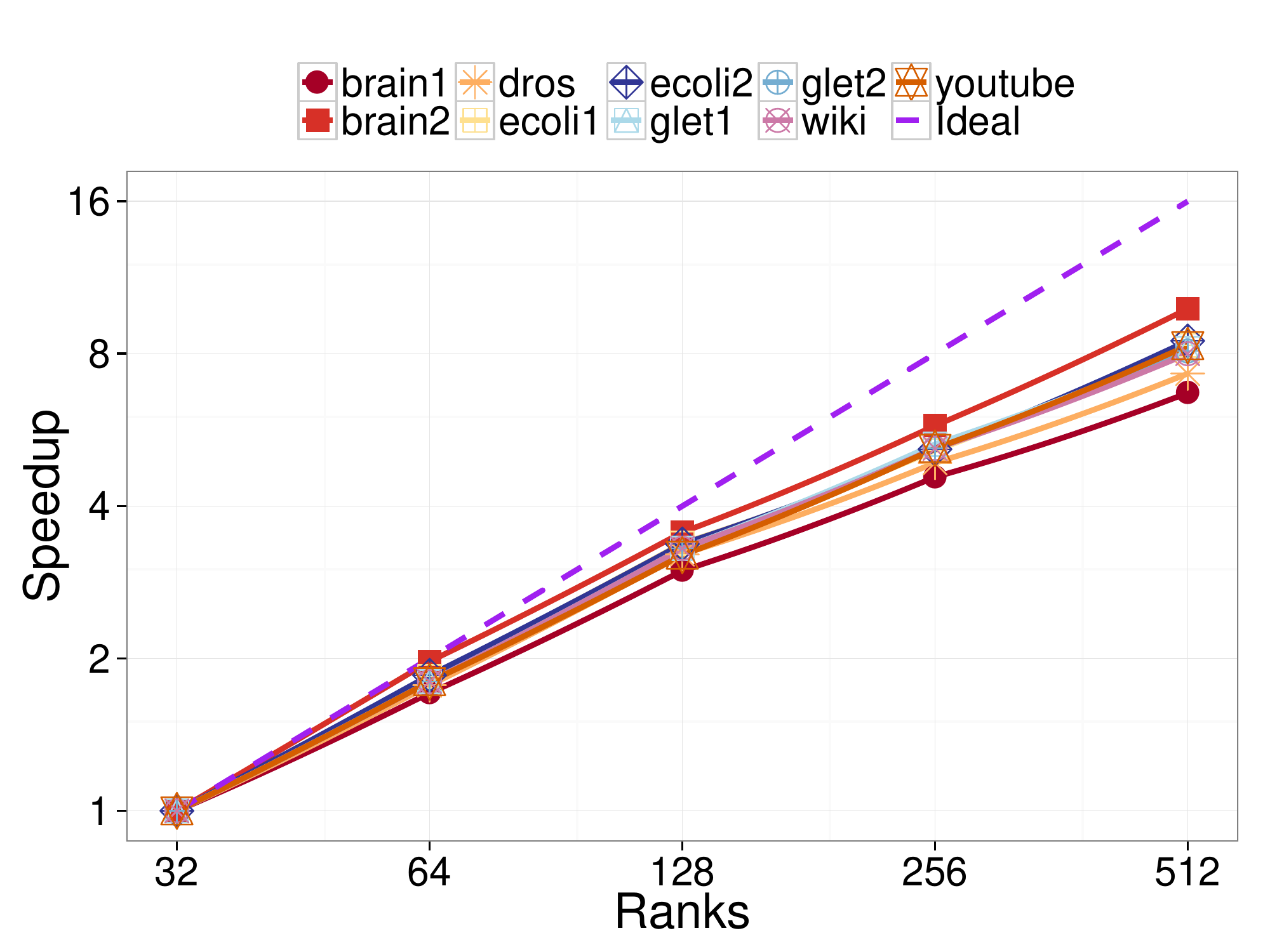} & \quad 
	\includegraphics[scale=0.35]{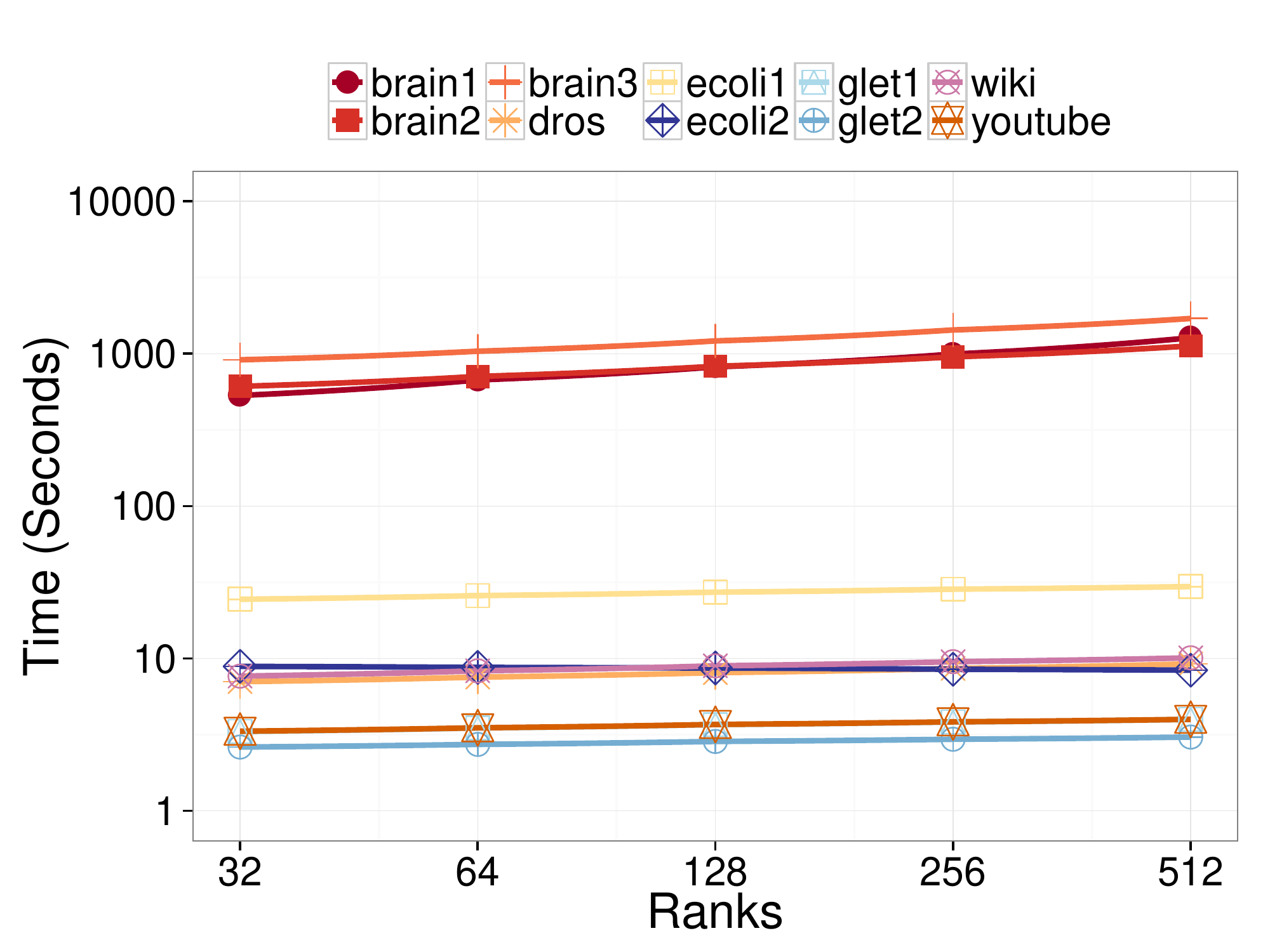} 
 \end{tabular}
\caption{Strong and weak scaling} 
\label{scaling-pic}
\end{figure}

To study weak scaling, we use R-MAT synthetic graphs with parameters 
$A=0.5$, $B=0.1$, $C=0.1$ and $D=0.3$ and edge factor $16$,
suggested in a Graph 500 benchmark
specification (\url{http://www.cc.gatech.edu/~jriedy/tmp/graph500/}).
The number of vertices was fixed at $1K$ per rank and the number of ranks was varied from 32 to 512. 
We report the execution times each query-rank combination in Figure~\ref{scaling-pic}. 
We see excellent weak scaling behavior with the execution times at $512$ ranks remaining close to that of the baseline 32 ranks.

\subsection{Evaluation of Plan Generation Heuristic}
We studied the quality of our plan generation heuristic for the $\DBE$ algorithm at $512$ ranks.
For each graph-query combination, we determined the optimal plan via an exhaustive enumeration.
We compared the execution time of the heuristic plan to the optimal plan and measured the percentage difference.
These results are reported in Figure \ref{plan-accuracy-pic}.
We can see that in $90$\% of the case, the heuristic generated the optimal plan,
whereas in the remaninig cases, the difference was at most $15$\%.

\subsection{Precision of Color Coding}
We evaluated the precision of color coding on our benchmark by
performing independent trials and computing the empirical variance of
the sample (see Section~\ref{sec:prelims}).  Specifically, for a given
graph-query combination we performed a sequence of trials, where in
each trial the colorful count $n^{colorful}(G, \inpQ, \chi)$ was
computed for a fresh random coloring.  We performed 10 random trials
for each of the 100 graph-query combinations in our test set and
evaluated the empirical mean and variance of the number of colorful
matches. For each graph-query combination, we computed the 
coefficient of variation, which is the ratio of the empirical variance
to the mean. The results are shown in Figure \ref{colorcodepic}. 
A value close to 0 indicates the convergence
of our estimate to the true mean $n(G, \inpQ)$. We observed that with
only three trials, 82\% of the graph-query combinations had coefficient of variation
at most $0.1$; when the number of trials was increased to 10, it increases to 91\%.
Hence, using 512 ranks,
for a majority of the input graph-query combinations in our benchmark,
we require less than a minute to count the actual number of matches of
the query, with $\approx 10\%$ accuracy. We conclude that our $\DBE$
algorithm enables fast approximate counting of treewidth 2 queries for data graphs spanning various real domains.

\begin{figure}
\centering
\includegraphics[scale=0.8]{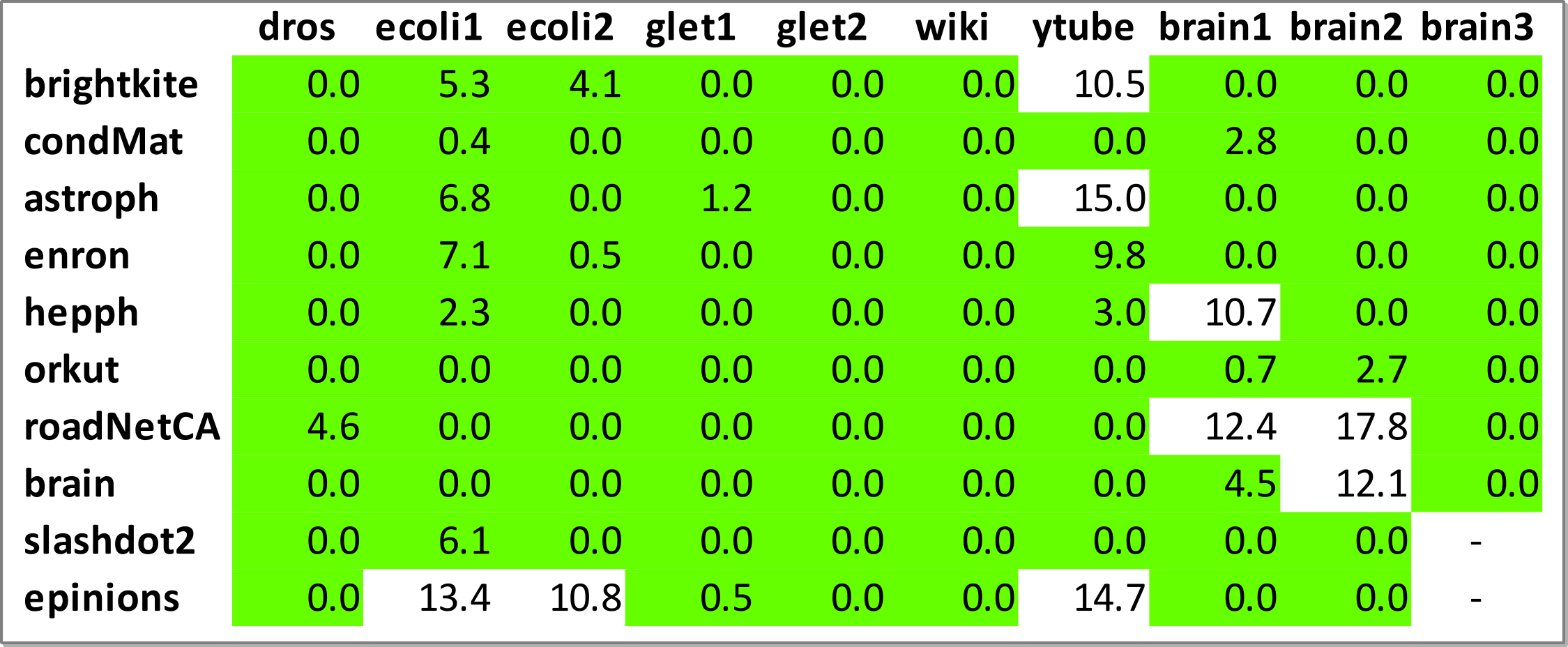}
\caption{Error \% of the execution time of the plan proposed by the plan heuristic with reference to the optimal plan for each graph-query combination.}
\label{plan-accuracy-pic}
\end{figure}

\begin{figure}
\centering
\includegraphics[scale=0.8]{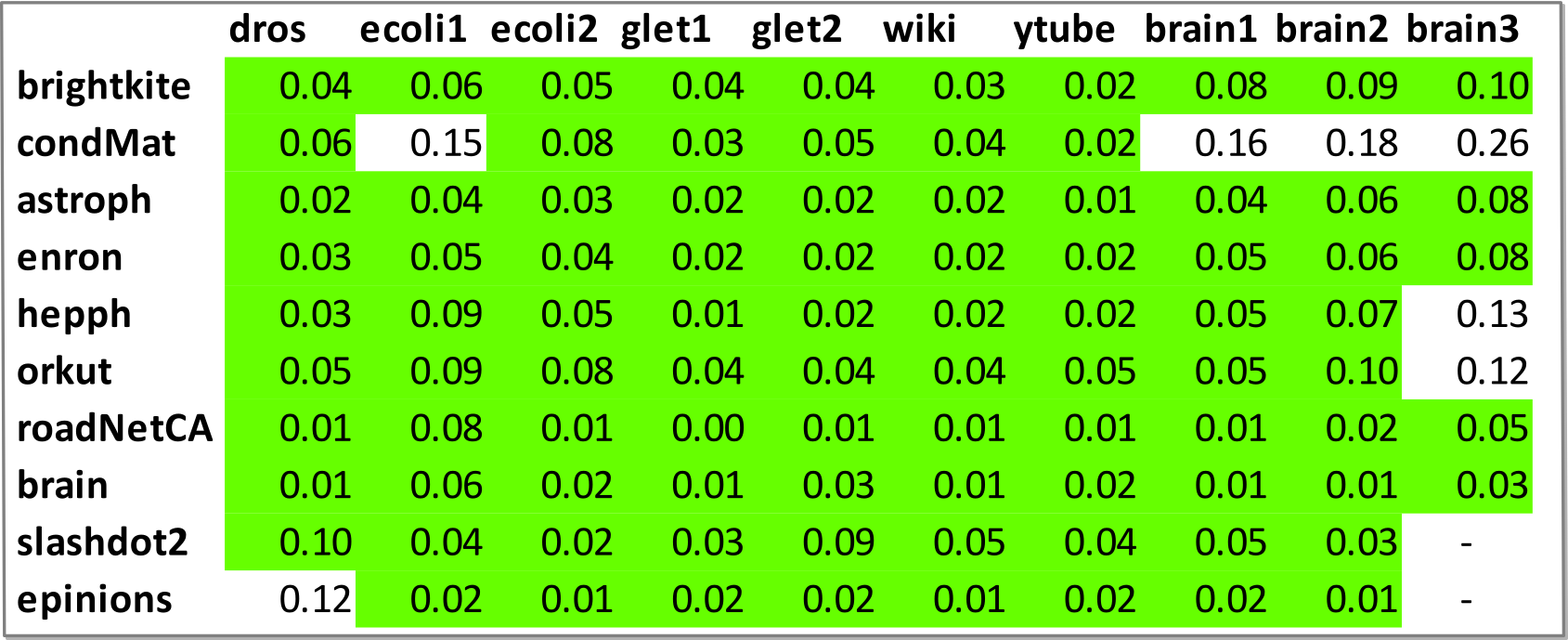}
\caption{Coefficient of variation with 50 trials of color coding for each graph-query combination.}
\label{colorcodepic}
\end{figure}

  %!TEX root = ./main.tex
\newcommand{\expect}{{\bf \mbox{\bf E}}}
\newcommand{\prob}{{\bf \mbox{\bf Pr}}}
\newcommand{\E}{{\mathcal{E}}}
\newcommand{\calU}{{\mathcal{U}}}
\newcommand{\calW}{{\mathcal{W}}}
\renewcommand{\deg}{{\mathsf{deg}}}
\newcommand{\hdeg}{{\mathsf{\overline{deg}}}}
\renewcommand{\d}{\mathbf{d}}
\section{Cycle queries on random power law graphs}\label{sec:chung-lu}
In this section we concentrate on cycle queries of constant size and
analyze the expected runtime of a variant of the {\PS} and {\DB}
procedures on a certain class of random data graphs.
We prove a lower bound on the expected runtime of the {\PS} procedure and an
upper bound on the expected runtime of the {\DB} procedure. Both bounds are functions of the (expected) degree sequence of the graph. We show that our upper bound on the runtime of the {\DB} procedure is never worse (up to constant factors) than the lower bound on the {\PS} procedure. Moreover, if the random graphs satisfy a natural power law property
then we prove that the expected runtime of the {\DB} procedure is
polynomially better.
Recall that the most complicated blocks in our general decomposition
of the query graph are (annotated) cycles. Thus, we postulate
that the better performance of
our variant of the {\DB} procedure on cycle queries explains the better performance
of the {\DB} procedure studied in the main body of the paper on queries of treewidth 2.

The class of random graphs considered is a certain variant of the
Chung-Lu graphs~\cite{Chung10122002}, a popular model for random
graphs that captures several properties of real world social
networks, as defined precisely below.

\subsection{The procedures analyzed}\label{sec:procedures}

Consider a cycle query $\calC$ of constant size $k$.
Since the query graph is a cycle, the decomposition tree consists
of only a root node which represents this cycle with a single boundary node.

Recall that procedure {\PS} computes $\cnt(u,\{1,\dots,k\}|\calC)$, for each
node $u$. 
For this it first computes $\cnt(u,v,\alpha_1|\Pp)$ and $\cnt(u,v,\alpha_2|\Pm)$ for
all of nodes $v$ and all signatures.
Since $k$ is constant the number of signatures is also constant and
thus computing $\cnt(u,\{1,\dots,k\}|\calC)$ given 
$\cnt(u,\cdot,\cdot|\Pp)$ and $\cnt(u,\cdot,\cdot|\Pm)$ can be done in
linear time.
In the {\PS} procedure the computation of $\cnt(u,\cdot,\cdot|\Pp)$ and
$\cnt(u,\cdot,\cdot|\Pm)$ is done by iteratively recomputing
$\cnt(u,v,\cdot|\Pp)$ for all $u$ and $v$. This recomputation can be
viewed as a optimized version of enumerating all paths starting at $u$, where instead of storing paths explicitly, we only store the two endpoints, the signature, and a count.
Our simplified variant of the {\PS} procedure, which is more amenable to analysis, 
computes $\cnt(u,\cdot,\cdot|\Pp)$ and
$\cnt(u,\cdot,\cdot|\Pm)$ by enumerating all possible
paths, instead of performing this optimized enumeration.
Thus, its complexity is linear in the number of possible 
paths of lengths $1,\ldots,\ceil{\half k}$. We will refer to this version of {\PS} procedure as the {\PS} procedure throughout this section to simplify notation.

To increase the efficiency of our {\PS} procedure when applied to cycle
queries we can
break symmetry using the id of the nodes and count only colorful matches
where node $u$ has the largest id among all
data nodes in the image of $\pi$. Consequently, $\cnt(u,\cdot,\cdot|\Pp)$ and
$\cnt(u,\cdot,\cdot|\Pm)$ need only to count paths with the property
that node $u$ has the largest id among nodes on the path.

For an integer $q \ge 3$ let
\begin{equation}\label{eq:y-def}
Y(q)=\left|\left\{(u_1, \ldots, u_q)\text{~is a simple path and~}\text{id}(u_1)>\text{id}(u_j), j\in [2..q]\right\}\right|.
\end{equation}
It follows that the expected complexity of procedure {\PS} is linear in
$\sum_{q=1}^{\ceil{k/2}} \expect[Y(q)]$. Later, we derive lower bounds on $\expect[Y(q)]$ for any constant $q$, and our bounds are monotone increasing in $q$. Thus, the dominant term in the complexity of procedure {\PS} is provided by our bound on $Y(\ceil{\half k})$.

Procedure {\DB} also computes $\cnt(u,\{1,\dots,k\}|\calC)$, but does it by
computing high-starting colorful matches.
Again, to increase the efficiency of procedure {\DB} when applied to cycle
queries we can break symmetry and only compute
$\cnt(u,\{1,\dots,k\}|\calC, \hi=u)$, namely, colorful matches where node $u$ is
the highest
in the degree based ordering among all data nodes in the image of $\pi$.
It follows that for this we need only to compute
$\cnth(u,\cdot,\cdot|\Pp_{h,d})$ and $\cnth(u,\cdot,\cdot|\Pm_{h,d})$,
namely,  the number of high-starting 
colorful matches for the paths $\Pp_{h,d}$ and $\Pm_{h,d}$,
wherein $\pi(a_h)=u$, $\pi(a_d)=v$ and $\sig(\pi)=\alpha$, for all
possible $v$ and $\alpha$. Note that similarly to the {\PS} procedure, the recomputation in the {\DB} procedure considered in the main body of the paper can be viewed as a optimized enumeration, where instead of storing paths explicitly, we only store the two endpoints, the signature, and a count.
Our simplified variant of the {\DB} procedure, 
which is more amenable to analysis, 
computes the counts by enumerating all these paths,
instead of performing this optimized enumeration.

For an integer $q\geq 3$ let 
\begin{equation}\label{eq:x-def}
X(q)=\left|\left\{(u_1,\ldots,u_q)\text{~is a simple path and~}u_1\succ u_j, j\in [2..q]\right\}\right|.
\end{equation}
It follows that the expected complexity of procedure {\DB} is linear in
$\sum_{q=1}^{\ceil{k/2}} \expect[X(q)]$. Later, we will derive bounds on $\expect[X(q)]$, and since our bounds will be monotone increasing in $q$, the dominant term in complexity of procedure {\DB} is provided by our bound on 
$\expect[X(\ceil{\half k})]$.
\eat{
% We now define random variables equal to the size of the set of paths runtime of the degree based and the degree oblivious enumeration algorithms.
We start with
\begin{definition}[Degree based ordering]
Let $\prec$ denote the degree ordering on nodes of $V$, i.e. we write $u\prec v$ if the degree of $u$ is smaller than the degree of $v$, with ties broken according to an arbitrary ordering.
\end{definition}
}

\subsection{The random data graphs}
We analyze our algorithm on Chung-Lu graphs whose expected degree
sequence has some additional property.

\paragraph{Chung-Lu distribution} The Chung-Lu distribution on random graphs is defined as follows. First choose a degree sequence $\d=(d_1,\ldots, d_n)$, where $V=[n]$. We assume that $d_u\geq 1$ for all $u\in V$. Let $m=\frac1{2}\sum_{u\in V} d_u$. We assume that $m\geq n$ and $\max_{u\in V} d_u\leq \sqrt{n}$. To generate the graph $G=(V, E)$ we include each edge $(u, v)$ independently with probability $d_u d_v/(2m)$.  Note that the expected degree of every node $u\in V$ is 
$\sum_{v\in [n]} d_u d_v/(2m)=d_u$ as required.
We write $\deg(u)$ to denote the actual degree of $u$ in $G$.

We require the degree sequence $\d=(d_1,\ldots, d_n)$ to be 
{\em $\lambda$-balanced}. 

\paragraph{Balanced degree sequence}
A degree sequence  $\d=(d_1,\ldots, d_n)$ is {\em $\lambda$-balanced} 
for some $\lambda\in (0, 1)$ if for any integers
$a, b\geq 1$,
$\sum_{u} d_u^{a+b}\leq \lambda \cdot (\sum_{u} d_u^a)(\sum_{u} d_u^b)$.
Intuitively, a degree sequence is balanced if it is not too concentrated on the high degree nodes.

For some of the claims we require the sequence to satisfy the stronger
truncated power law.
\paragraph{Truncated power law distribution}
A degree sequence $\d=(d_1,\ldots, d_n)$ satisfies the truncated power
law for a constant $\alpha\in (1, 2)$ if for each $0 \leq j \leq \half \log n$, 
the number of nodes with degree between $2^j$ and $2^{j+1}$ is $\Theta(n/2^{\alpha j})$. 
We show later that if a sequence satisfies the power law for $\alpha$
then it is  $\lambda$-balanced for $\lambda=O(n^{\alpha/2-1})$.

\subsection{Main theorem}

\begin{theorem}\label{thm:main}
Let $G$ be sampled according to the Chung-Lu distribution on $n$ vertices with an
$n^{-\delta}$-balanced degree sequence $\d$, for some constant
$\delta>0$, and let $q\geq 3$ be a constant. Then,
\begin{description}
\item[(1)] The expected number of paths $(u_1,\ldots,u_q)$ of length $q$ 
    for which $\text{id}(u_i)$ is with the highest id is lower bounded by
$$
\expect[Y(q)]\geq (1-o(1))\cdot \frac1{q}  (2m)^{-q+3}\left(\sum_{u} d_u^2\right)^{q-2}.
$$
\item[(2)] The expected number of high-starting paths of length $q$ is upper bounded by
$$
\expect[X(q)]\leq C (2m)^{-q+2} \left(\sum_{u\in V} d_u^{2-1/(q-1)}\right)^{q-1}
$$
for some constant $C>1$.
\item[(3)] Based on the above inequalities

\begin{enumerate}
\item $\expect[X(q)]=O(\expect[Y(q)])$.
\item If the degree sequence satisfies the truncated power law
   with parameter $\alpha$,  for any constant $\alpha\in (1, 2)$,
   then $\expect[X(q)]$ is polynomially smaller than $\expect[Y(q)]$.
\end{enumerate}
\end{description}
\end{theorem}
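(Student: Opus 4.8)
The plan is to express both $\expect[Y(q)]$ and $\expect[X(q)]$ through linearity of expectation as sums over ordered $q$-tuples of distinct vertices, weighted by the probability that the tuple forms a simple path times an indicator for the relevant ordering constraint. In the Chung-Lu model the edges are independent, so for distinct $u_1,\dots,u_q$ the path probability factorizes as $\prod_{i=1}^{q-1} d_{u_i}d_{u_{i+1}}/(2m)$, in which the two endpoints $u_1,u_q$ carry degree-exponent $1$ and the $q-2$ interior vertices carry exponent $2$. Writing $\Phi_s := \sum_{u} d_u^s$ (so $\Phi_1 = 2m$), the unconstrained tuple sum evaluates to $(2m)^{-(q-1)}\Phi_1^2\,\Phi_2^{q-2} = (2m)^{-q+3}\Phi_2^{q-2}$. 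Throughout, $\lambda$-balance with $\lambda=n^{-\delta}$ is what lets me discard tuples with repeated vertices: merging any two indices replaces a product $\Phi_a\Phi_b$ by $\Phi_{a+b}\le\lambda\,\Phi_a\Phi_b$, so the non-simple tuples contribute only an $O(\lambda)=o(1)$ fraction. This yields the $(1-o(1))$ factors.

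For part (1) I would combine the above evaluation with the id constraint. Since the weight is invariant under reversal and, under a symmetric treatment of the labelling, the global id-maximum vertex is equally likely to occupy any of the $q$ positions, the sub-collection in which the first vertex is the id-maximum retains a $1/q$ fraction of the total path weight. Together with the repeated-vertex estimate this gives the stated lower bound $\expect[Y(q)]\ge(1-o(1))\tfrac1q(2m)^{-q+3}\Phi_2^{q-2}$.

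Part (2) is the heart of the argument. After relaxing to all (not necessarily distinct) tuples and reducing the {\DB} ordering to the expected-degree ordering $d_{u_1}\ge d_{u_j}$, the key is an exponent-trading step: because $u_1$ is the maximum, $d_{u_i}\le d_{u_1}$, so for each of the $q-2$ interior vertices I write $d_{u_i}^2 = d_{u_i}^{2-1/(q-1)}d_{u_i}^{1/(q-1)}\le d_{u_i}^{2-1/(q-1)}d_{u_1}^{1/(q-1)}$. This raises the exponent on $u_1$ from $1$ to $1+\tfrac{q-2}{q-1}=2-\tfrac1{q-1}$ and lowers each interior exponent to $2-\tfrac1{q-1}$; summing the remaining free endpoint $u_q$ over all vertices contributes a factor $\Phi_1=2m$. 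The result is exactly $(2m)^{-q+2}\Phi_{2-1/(q-1)}^{\,q-1}$. The main obstacle here is the reduction from the actual-degree ordering used by procedure {\DB} to the expected-degree ordering: the two differ because $\deg(u)$ and its mean $d_u$ are correlated with the very edges that form the path, and low-degree vertices do not concentrate. I would handle this via the degree concentration available for Chung-Lu graphs (using $d_u\ge1$, $\max_u d_u\le\sqrt n$, and $m\ge n$), restricting up to an $O(1)$ multiplicative loss — absorbed into the constant $C>1$ — to tuples whose expected degrees already respect the ordering, and bounding the large-deviation residue.

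Finally, part (3) is deduced from the two bounds by comparing the degree polynomials. For (3.1), Lyapunov's inequality (log-convexity of $s\mapsto\Phi_s$) applied to $2-\tfrac1{q-1}=\tfrac1{q-1}\cdot1+\tfrac{q-2}{q-1}\cdot2$ gives $\Phi_{2-1/(q-1)}\le\Phi_1^{1/(q-1)}\Phi_2^{(q-2)/(q-1)}$, whence $\Phi_{2-1/(q-1)}^{\,q-1}\le\Phi_1\Phi_2^{q-2}$; substituting turns the part-(2) bound into $C(2m)^{-q+3}\Phi_2^{q-2}$, matching the part-(1) lower bound up to the constant factor $Cq/(1-o(1))$, so $\expect[X(q)]=O(\expect[Y(q)])$. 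For (3.2) I would quantify the slack in Lyapunov under the truncated power law by estimating $\Phi_s\approx n\sum_{j\le\frac12\log n}2^{(s-\alpha)j}$: this gives $\Phi_1=\Theta(n)$, $\Phi_2=\Theta(n^{2-\alpha/2})$, and a correspondingly smaller value of $\Phi_{2-1/(q-1)}$, so that the ratio $\expect[X(q)]/\expect[Y(q)]$ is governed by a strictly negative power of $n$ (on the order of $n^{(1-\alpha)/2}$ when $2-\tfrac1{q-1}>\alpha$, and $n^{(q-2)(\alpha/2-1)}$ otherwise), establishing the claimed polynomial improvement.
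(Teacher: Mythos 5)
Your proposal is correct and takes essentially the same route as the paper: the same path-probability factorization with the $1/q$ id-symmetry and the balanced-sequence bound $\sum_u d_u^{a+b}\le\lambda\,(\sum_u d_u^a)(\sum_u d_u^b)$ disposing of the repeated-vertex tuples for part (1); the same split into tuples with $d_{u_1}>\phi\,d_{u_j}$ (with the identical exponent-trading step yielding the $1/\phi$ loss) plus a large-deviation residue for part (2); and the same H\"older/Lyapunov moment comparison and power-law estimates, with matching exponents, for part (3). The only detail the paper supplies beyond your sketch, inside your ``large-deviation residue'' step, is decoupling $\deg(u_1)$ from $\deg(u_j)$ (conditioned on the path event) by removing the shared potential edge $(u_1,u_j)$ before applying the Chernoff-based comparison of Corollary~\ref{cr:lower-vs-upper}---precisely the correlation issue you already flagged.
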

\begin{proof}
{\bf (1)} follows by Lemma~\ref{lm:y-bound}. {\bf (2)} follows by Lemma~\ref{lm:x-bound}. {\bf (3)} follows by putting together Lemma~\ref{lm:xleqy} and Lemma~\ref{lm:xy-powerlaw}.
\end{proof}

\begin{remark}

Note that if $\sum_{u} d_u^2 \geq \sum_{u} d_u=2m$ and 
$\sum_{u} d_u^{2-1/(q-1)} \geq \sum_u d_u =2m$, 
since $d_u\geq 1$ by our assumption on the degree sequence.   Thus  
both $\expect[Y(q)]$ and $\expect[X(q)]$ are monotone in $q$. 
It follows that for any constant $k$,
the dominant terms in the complexity of our {\PS} and {\DB} procedures for query cycles of
length $k$ is indeed determined by 
$\expect[Y(\ceil {\half k}\rceil)]$ and 
$\expect[X(\ceil {\half k}\rceil)]$, respectively,
as claimed in section~\ref{sec:procedures}.

\end{remark}

The rest of this section is devoted to proving Lemmas~\ref{lm:y-bound}, ~\ref{lm:x-bound}, ~\ref{lm:xleqy} and ~\ref{lm:xy-powerlaw}.
The analysis uses the approach of~\cite{BFNPSW14}. It turns out, however, that new ingredients are necessary for handling cycles due to the presence of multiple intermediate nodes.

\subsection{Useful facts}\label{sec:facts}
In this section we state some simple claims and known results that will be useful in the rest of the proof. The following claim specifies the probability that a fixed path exists in a random graph drawn from the Chung-Lu distribution:

\begin{claim}\label{cl:prob-path}
Let $G$ be drawn from the Chung-Lu distribution with degree sequence $\d$. Then for any $q\geq 2$ and a vector $(u_1,\ldots, u_q)\in V^q$ of distinct nodes  
$$
\prob[(u_1,\ldots, u_q)\text{~is a path in~}G]=\frac{d_{u_1} d_{u_q}}{2m}\cdot \prod_{j=2}^{q-1} \frac{d_{u_j}^2}{2m}.
$$
\end{claim}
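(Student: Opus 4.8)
The plan is to exploit the defining feature of the Chung-Lu model, namely that each potential edge is included independently. First I would observe that the ordered tuple $(u_1,\ldots,u_q)$ forms a path in $G$ precisely when all $q-1$ consecutive pairs $(u_1,u_2),(u_2,u_3),\ldots,(u_{q-1},u_q)$ are edges of $G$. Because the $u_i$ are assumed distinct, these $q-1$ pairs are distinct unordered vertex pairs, so the indicator events that each consecutive pair is an edge of $G$ are mutually independent under the Chung-Lu distribution.

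Next I would write the probability as a product over the consecutive edges. Using the independence just noted together with the edge probability $d_u d_v/(2m)$, I obtain
\[
\prob[(u_1,\ldots,u_q)\text{ is a path in }G]=\prod_{i=1}^{q-1}\frac{d_{u_i}d_{u_{i+1}}}{2m}=\frac{1}{(2m)^{q-1}}\prod_{i=1}^{q-1} d_{u_i}d_{u_{i+1}}.
\]

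The remaining step is a bookkeeping regrouping of the last product according to how many times each factor $d_{u_j}$ occurs. The endpoint $d_{u_1}$ appears only in the $i=1$ term and $d_{u_q}$ only in the $i=q-1$ term, so each endpoint enters to the first power; every internal node $u_j$ with $2\le j\le q-1$ appears in exactly two consecutive terms (as the second element of edge $i=j-1$ and the first element of edge $i=j$), contributing $d_{u_j}^2$. Collecting these factors yields $d_{u_1}d_{u_q}\prod_{j=2}^{q-1} d_{u_j}^2$; splitting the $q-1$ denominator copies of $2m$ so that one pairs with the endpoint factor $d_{u_1}d_{u_q}$ and one pairs with each internal factor $d_{u_j}^2$ reproduces exactly the claimed grouping.

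There is no genuine difficulty here; the only point that requires care is the distinctness hypothesis, which is precisely what guarantees that the $q-1$ consecutive edges correspond to distinct potential edges and hence to independent indicator events. Without distinctness a repeated vertex pair would collapse two factors into one, breaking both the independence step and the exponent count, so I would flag this assumption explicitly as the load-bearing ingredient of the argument.
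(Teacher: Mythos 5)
Your proposal is correct and follows exactly the same route as the paper: decompose the path event into the $q-1$ consecutive-edge events, use the independence of distinct edges in the Chung-Lu model to multiply the edge probabilities $d_{u_i}d_{u_{i+1}}/(2m)$, and regroup factors so that endpoints appear once and internal nodes twice. Your explicit remark that distinctness of the $u_i$ is what licenses the independence step is a correct (and slightly more careful) articulation of a point the paper leaves implicit.
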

\begin{proof}
Since the input graph $G$ is drawn from the Chung-Lu distribution, for each $j=1,\ldots, q-1$ we have 
$$
\prob[(u_j, u_{j+1})\in E]=\frac{d_{u_j} d_{u_{j+1}}}{2m}.
$$
Since edges are included in $E$ independently, we have 
\begin{equation*}
%\begin{split}
\prob[(u_1,\ldots, u_q)\text{~is a path in~}G] =\prod_{j=1}^{q-1} \frac{d_{u_j} d_{u_{j+1}}}{2m}
=\frac{d_{u_1} d_{u_q}}{2m}\cdot \prod_{j=2}^{q-1} \frac{d_{u_j}^2}{2m}
%\end{split}
\end{equation*}
\end{proof}

We will also use 
\begin{theorem}[Chernoff bound]\label{thm:chernoff}
Let $X_1,\ldots, X_n$ be independent $0/1$ Bernoulli random variables. Let $X=\sum_{i=1}^n X_i$, and let $\mu:=\expect[X]$. Then for any $\e\in (0, 1)$
$$
\prob[X \leq (1-\e)\mu]<e^{-\Omega(\e^2 \mu)}
$$
and for every $\lambda>0$ 
$$
\prob[X \geq (1+\lambda)\mu]<(e/(1+\lambda))^{(1+\lambda)\mu}
$$
\end{theorem}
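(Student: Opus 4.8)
The plan is to use the standard exponential-moment (Chernoff--Cram\'er) method: for a suitable parameter $t$, apply Markov's inequality to $e^{tX}$ (for the upper tail) or $e^{-tX}$ (for the lower tail), then exploit independence to factor the moment generating function into a product over the $X_i$ and bound each factor with the elementary inequality $1+x\le e^x$.

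For the upper tail, fix any $t>0$. Since $\{X\ge(1+\lambda)\mu\}=\{e^{tX}\ge e^{t(1+\lambda)\mu}\}$, Markov's inequality gives $\prob[X\ge(1+\lambda)\mu]\le \expect[e^{tX}]\,e^{-t(1+\lambda)\mu}$. By independence $\expect[e^{tX}]=\prod_{i=1}^n\expect[e^{tX_i}]$, and writing $p_i=\expect[X_i]$ we have $\expect[e^{tX_i}]=1+p_i(e^t-1)\le e^{p_i(e^t-1)}$; multiplying and using $\mu=\sum_i p_i$ yields $\expect[e^{tX}]\le e^{\mu(e^t-1)}$. Hence $\prob[X\ge(1+\lambda)\mu]\le \exp\!\big(\mu(e^t-1)-t(1+\lambda)\mu\big)$. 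I would then optimize by taking $t=\ln(1+\lambda)>0$, so that $e^t-1=\lambda$; this gives $e^{\lambda\mu}(1+\lambda)^{-(1+\lambda)\mu}$, and weakening $e^{\lambda\mu}\le e^{(1+\lambda)\mu}$ produces exactly the stated bound $(e/(1+\lambda))^{(1+\lambda)\mu}$.

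For the lower tail the argument is symmetric, using $e^{-tX}$ with $t>0$: Markov gives $\prob[X\le(1-\e)\mu]\le \expect[e^{-tX}]\,e^{t(1-\e)\mu}$, and the same independence-plus-$(1+x\le e^x)$ step bounds $\expect[e^{-tX}]\le e^{\mu(e^{-t}-1)}$. The optimal choice is $t=\ln(1/(1-\e))>0$ (valid since $\e\in(0,1)$), for which $e^{-t}-1=-\e$, yielding the exponent $\mu\big[-\e-(1-\e)\ln(1-\e)\big]$.

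The one genuinely analytic step---and the main (though minor) obstacle---is showing this exponent is $-\Omega(\e^2\mu)$. I would prove the elementary inequality $\e+(1-\e)\ln(1-\e)\ge \half\e^2$ for $\e\in(0,1)$ by expanding $\ln(1-\e)=-\sum_{j\ge1}\e^j/j$ and collecting terms, which gives $\e+(1-\e)\ln(1-\e)=\sum_{j\ge2}\frac{\e^j}{j(j-1)}=\half\e^2+\sum_{j\ge3}\frac{\e^j}{j(j-1)}\ge\half\e^2$ since every summand is nonnegative. Consequently the exponent is at most $-\half\e^2\mu$, giving $\prob[X\le(1-\e)\mu]\le e^{-\e^2\mu/2}=e^{-\Omega(\e^2\mu)}$ and completing the proof.
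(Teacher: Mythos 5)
Your proof is correct: it is the standard exponential-moment (Chernoff--Cram\'er) argument, with the right optimizing choices $t=\ln(1+\lambda)$ and $t=\ln(1/(1-\e))$, and your series expansion $\e+(1-\e)\ln(1-\e)=\sum_{j\ge 2}\frac{\e^j}{j(j-1)}\ge\half\e^2$ cleanly delivers the $e^{-\Omega(\e^2\mu)}$ form of the lower tail. Note that the paper itself states Theorem~\ref{thm:chernoff} without proof, as a standard known result, so there is no in-paper argument to compare against; your derivation is exactly the textbook proof one would cite, and the weakening $e^{\lambda\mu}\le e^{(1+\lambda)\mu}$ correctly recovers the slightly loose form $(e/(1+\lambda))^{(1+\lambda)\mu}$ used in the paper.
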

A simple corollary of the bound is
\begin{corollary}\label{cr:lower-vs-upper}
Let $X_1,\ldots, X_n$ be independent $0/1$ Bernoulli random variables. Let $X=\sum_{i=1}^n X_i$, and let $\mu_X:=\expect[X]$. 
Let $Y_1,\ldots, Y_n$ be independent $0/1$ Bernoulli random variables.
Let $Y=\sum_{i=1}^n Y_i$, and let $\mu_Y:=\expect[Y]$. If $\mu_Y\leq
\mu_X/20$, then
$
\prob[Y\geq X]< 2e^{-c\mu_X},
$
for some constant $c>0$.
\end{corollary}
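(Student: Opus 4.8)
The plan is to bound the event $\{Y\geq X\}$ by two one-sided tail events and then apply the Chernoff bounds of Theorem~\ref{thm:chernoff} to each. The key structural observation is that the hypothesis $\mu_Y\leq \mu_X/20$ creates a wide multiplicative gap between the means of $Y$ and $X$, and I will split this gap at the midpoint $T=\mu_X/2$. First I would record the elementary containment $\{Y\geq X\}\subseteq \{X\leq T\}\cup \{Y\geq T\}$: indeed, if $X>T$ and $Y<T$ held simultaneously, then $Y<T<X$, contradicting $Y\geq X$. A union bound then gives $\prob[Y\geq X]\leq \prob[X\leq \mu_X/2]+\prob[Y\geq \mu_X/2]$, so it suffices to show that each of the two terms is at most $e^{-c\mu_X}$ for a common constant $c>0$.

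For the first term I would invoke the lower-tail bound of Theorem~\ref{thm:chernoff} with $\e=1/2$, which yields $\prob[X\leq \mu_X/2]<e^{-\Omega(\mu_X)}$ directly. For the second term I would write the threshold in the form $\mu_X/2=(1+\lambda)\mu_Y$, so that $1+\lambda=\mu_X/(2\mu_Y)\geq 10$ by the hypothesis $\mu_Y\leq \mu_X/20$. Applying the upper-tail bound gives $\prob[Y\geq \mu_X/2]<(e/(1+\lambda))^{(1+\lambda)\mu_Y}$; since the exponent $(1+\lambda)\mu_Y$ equals $\mu_X/2$ and the base satisfies $e/(1+\lambda)\leq e/10<1$, this is at most $(e/10)^{\mu_X/2}=e^{-(\mu_X/2)\ln(10/e)}$, again of the form $e^{-\Omega(\mu_X)}$ because $\ln(10/e)>0$.

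Taking $c$ to be the smaller of the two exponential rates obtained above, each term is bounded by $e^{-c\mu_X}$, and summing gives $\prob[Y\geq X]<2e^{-c\mu_X}$, as required. The only degenerate case to dispatch separately is $\mu_Y=0$, where $Y\equiv 0$ and the second tail probability vanishes identically, so the parametrization $1+\lambda=\mu_X/(2\mu_Y)$ is not needed. I do not anticipate a genuine obstacle here: the argument is a routine two-sided Chernoff estimate, and the sole point that requires a moment of care is verifying that the factor-$20$ gap in the hypothesis is exactly enough to push the base $e/(1+\lambda)$ strictly below $1$, which is what converts the upper-tail bound into genuine exponential decay in $\mu_X$.
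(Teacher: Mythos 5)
Your proof is correct and follows essentially the same route as the paper's: a union bound splitting the event $\{Y\geq X\}$ at a threshold lying between the two means, followed by the lower- and upper-tail bounds of Theorem~\ref{thm:chernoff} (the paper splits at $(\mu_X+\mu_Y)/2$ with $\e=1/4$, you split at $\mu_X/2$ with $\e=1/2$). One point in your favor: by parametrizing the upper tail via $1+\lambda=\mu_X/(2\mu_Y)$ so that the Chernoff exponent $(1+\lambda)\mu_Y$ equals $\mu_X/2$ exactly, you avoid the paper's written intermediate step $(e/10)^{10\mu_Y}\leq (e/10)^{\mu_X/2}$, which as stated goes the wrong way (since $e/10<1$ it would require $10\mu_Y\geq \mu_X/2$, the reverse of the hypothesis $\mu_Y\leq \mu_X/20$); your version is precisely the correct repair of that slip, and your separate treatment of the degenerate case $\mu_Y=0$ is a harmless extra precaution.
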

\begin{proof}
We have 
\begin{equation*}
\begin{split}
\prob[Y\geq X]\leq \prob[X\leq (\mu_X+\mu_Y)/2]+\prob[Y\geq (\mu_X+\mu_Y)/2].
\end{split}
\end{equation*}
Since $(\mu_X+\mu_Y)/2\leq (3/4)\mu_X$, we have 
$\prob[X\leq (\mu_X+\mu_Y)/2] < e^{-c\mu_X}$, for some constant $c>0$ 
by the first bound from Theorem~\ref{thm:chernoff} invoked with $\e=1/4$.

We also have 
$
(\mu_X+\mu_Y)/2\geq 10\mu_Y,
$
so by the second bound from Theorem~\ref{thm:chernoff} invoked with $\lambda=9$ we get
$$
\prob[Y\geq (\mu_X+\mu_Y)/2] < (e/10)^{10\mu_Y} \leq (e/10)^{\mu_X/2}.
$$
%The rhs is decreasing in $\mu_Y$ and is maximized when $\mu_Y$ is the largest possible, i.e. $\mu_Y=\mu_X/20$, in which case 
Clearly, $(e/10)^{\mu_X/2}=e^{-c\mu_X}$ for some constant $c>0$, as required.
\end{proof}

Finally, we need the following inequality on the degree sequence $\d$.
\begin{claim}\label{cl:hoelder}
For all $t\geq 1$ one has
$$
\sum_{u} d_u^{2-1/t}\leq \left(\sum_{u} d_u\right)^{1/t}\cdot \left(\sum_{u} d_u^2\right)^{(t-1)/t}.
$$
\end{claim}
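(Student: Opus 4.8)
The plan is to recognize this as a direct instance of Hölder's inequality once the summand is split appropriately. The key observation is that the exponent $2 - 1/t$ can be written as a convex-type combination of the exponents $1$ and $2$ appearing on the right-hand side. Concretely, I would write
\[
d_u^{2-1/t} = d_u^{1/t}\cdot d_u^{2(t-1)/t},
\]
which is valid since $\tfrac1t + \tfrac{2(t-1)}{t} = \tfrac{1 + 2t - 2}{t} = \tfrac{2t-1}{t} = 2 - \tfrac1t$.

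Next I would apply Hölder's inequality $\sum_u a_u b_u \le (\sum_u a_u^p)^{1/p}(\sum_u b_u^q)^{1/q}$ with the factors $a_u = d_u^{1/t}$ and $b_u = d_u^{2(t-1)/t}$, choosing the conjugate exponents $p = t$ and $q = t/(t-1)$. These are indeed conjugate, since $\tfrac1p + \tfrac1q = \tfrac1t + \tfrac{t-1}{t} = 1$, and they are legitimate (both at least $1$) for all $t \ge 1$. With this choice, $a_u^p = d_u^{(1/t)\cdot t} = d_u$ and $b_u^q = d_u^{(2(t-1)/t)\cdot (t/(t-1))} = d_u^2$, so Hölder delivers exactly
\[
\sum_u d_u^{2-1/t} \le \left(\sum_u d_u\right)^{1/t}\left(\sum_u d_u^2\right)^{(t-1)/t},
\]
which is the claimed bound.

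The only case warranting a separate word is $t = 1$, where $2 - 1/t = 1$ and the right-hand side reduces to $(\sum_u d_u)^1 (\sum_u d_u^2)^0 = \sum_u d_u$, so the inequality holds with equality; this is consistent with the degenerate choice $q = \infty$ in Hölder. I do not expect any genuine obstacle here: the entire content is choosing the correct exponent split, and the verification of conjugacy and of the recovered exponents is routine arithmetic. The claim is purely a statement about the fixed degree sequence $\d$, so no probabilistic reasoning or properties of the Chung-Lu model enter the argument.
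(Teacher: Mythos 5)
Your proof is correct and follows exactly the same route as the paper: both split $d_u^{2-1/t}=d_u^{1/t}\cdot d_u^{2-2/t}$ (your $2(t-1)/t$ equals $2-2/t$) and apply H\"{o}lder's inequality with conjugate exponents $t$ and $t/(t-1)$. Your explicit handling of the degenerate case $t=1$ is a minor addition the paper omits, but the argument is otherwise identical.
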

\begin{proof}
By H\"{o}lder's inequality with conjugates $t$ and $\frac1{1-1/t}$ we have 
\begin{equation*}
\begin{split}
\sum_{u} d_u^{2-1/t}&=\sum_{u} d_u^{1/t} \cdot d_u^{2-2/t}\\
&\leq \left(\sum_{u} (d_u^{1/t})^t\right)^{1/t}\cdot \left(\sum_{u} d_u^{(2-2/t)\cdot \frac1{1-1/t}}\right)^{1-1/t}\\
&\leq \left(\sum_{u} d_u\right)^{1/t}\cdot \left(\sum_{u} d_u^2\right)^{(t-1)/t}\\
\end{split}
\end{equation*}
\end{proof}

\subsection{Proofs of the lower and upper bounds}

We start by lower bounding $\expect[Y(q)]$.
\begin{lemma}\label{lm:y-bound}
Let $G=(V, E), V=[n]$ be drawn from the Chung-Lu distribution with degree sequence $\d$. Suppose that node id's are chosen uniformly at random. For any integer $q\geq 3$ let $Y(q)$ be defined by~\eqref{eq:y-def}. 
Then if $\d$ is $n^{-\delta}$-balanced for a constant $\delta>0$, the following holds for any constant $q$  
$$
\expect[Y(q)]\geq (1-o(1))\cdot \frac1{q}  (2m)^{-q+3}\left(\sum_{u} d_u^2\right)^{q-2}.
$$
\end{lemma}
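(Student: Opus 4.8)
The plan is to evaluate $\expect[Y(q)]$ almost exactly by summing, over all candidate tuples, the probability that the tuple forms a path times the probability that its first vertex receives the largest id, and then to show that imposing the simple-path (distinctness) requirement only costs a factor $(1-o(1))$.

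First I would use that the uniformly random id assignment is independent of the random graph $G$. For a fixed tuple $(u_1,\ldots,u_q)$ of distinct vertices, the event that it is a path depends only on $G$, whereas the event $\{\text{id}(u_1)>\text{id}(u_j) \text{ for } j\in[2..q]\}$ depends only on the ids; since the ids are distinct and uniform, by symmetry each of the $q$ vertices is equally likely to carry the largest id, so the latter event has probability exactly $1/q$. As a simple path forces the $u_i$ to be distinct, Claim~\ref{cl:prob-path} then gives
\[
\expect[Y(q)]=\frac1{q}\sum_{(u_1,\ldots,u_q)\text{ distinct}}\frac{d_{u_1}d_{u_q}}{2m}\prod_{j=2}^{q-1}\frac{d_{u_j}^2}{2m}.
\]

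Next I would compare this restricted sum with the unrestricted sum $S$ over all tuples (repeats allowed), which factorizes across coordinates and, using $\sum_u d_u=2m$, evaluates in closed form to $S=(2m)^{-(q-1)}(\sum_u d_u)^2(\sum_u d_u^2)^{q-2}=(2m)^{-q+3}(\sum_u d_u^2)^{q-2}$ --- exactly $q$ times the target right-hand side. The remaining step, which I expect to be the main obstacle, is to bound the contribution of tuples with a repeated vertex. For a pair of positions $i<j$ forced to coincide at a vertex $w$, the two degree factors at $w$ merge into $d_w^{a_i+a_j}$ with $a_i,a_j\in\{1,2\}$ (according to whether each position is an endpoint or internal), and the $n^{-\delta}$-balanced property yields $\sum_w d_w^{a_i+a_j}\leq n^{-\delta}(\sum_w d_w^{a_i})(\sum_w d_w^{a_j})$, so this collision contributes at most $n^{-\delta}S$. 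Summing over the $\binom{q}{2}$ (constantly many) position-pairs bounds the total non-simple contribution by $\binom{q}{2}n^{-\delta}S=o(S)$. Hence the distinct-tuple sum is at least $(1-o(1))S$, and dividing by $q$ gives the claim. The only delicate point is the bookkeeping matching each colliding coordinate to the correct exponent fed into the balance inequality, but as $q$ is constant this is a finite, routine case analysis rather than a genuine difficulty.
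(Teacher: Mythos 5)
Your proposal is correct and takes essentially the same route as the paper's proof: both use the independence of the uniform id assignment to extract the factor $1/q$, evaluate the fully factorized sum over all (not necessarily distinct) tuples as $\frac1q(2m)^{3-q}\left(\sum_u d_u^2\right)^{q-2}$ via Claim~\ref{cl:prob-path}, and then bound the repeated-vertex tuples by covering them with the $\binom{q}{2}$ position-pair collision sets and applying the $n^{-\delta}$-balanced inequality to the merged degree factor at the collision vertex. Your uniform bookkeeping with exponents $a_i\in\{1,2\}$ merely compresses the paper's explicit four-case analysis (endpoint--endpoint, endpoint--internal, internal--internal, and its symmetric case) into a single statement, which is a cosmetic rather than substantive difference.
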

\begin{proof}
We have
\begin{equation}\label{eq:2903hr}
\begin{split}
\expect[Y(q)]&= \sum_{\substack{(u_1, \ldots, u_q)\in V^q:\\
u_i\text{~distinct}}} \prob[(u_1,\ldots, u_q)\text{~is a path
in~}G]\cdot \prob\left[\text{id}(u_1)>\text{id}(u_j), j\in [2..q]\right]\\
\end{split}
\end{equation}

By Claim~\ref{cl:prob-path} one has for any vector $(u_1,\ldots, u_q)$ of distinct nodes
\begin{equation*}
\begin{split}
\prob[(u_1,\ldots, u_q)\text{~is a path in~}G]&=\prod_{j=1}^{q-1} \frac{d_{u_j} d_{u_{j+1}}}{2m}=\frac{d_{u_1} d_{u_q}}{2m}\cdot \prod_{j=2}^{q-1} \frac{d_{u_j}^2}{2m}.
\end{split}
\end{equation*}

Furthermore, for any fixed $q$-tuple of distinct nodes $(u_1,\ldots, u_q)$ we have 
$$
\prob\left[\text{id}(u_1)>\text{id}(u_j), j\in [2..q]\right]=\frac1{q}
$$
since id's are uniformly random by assumption of the lemma. (Note that
this implies that  $Y(q)$ is lower bounded by a
constant times the total number of paths of length $q$, since $q$ is
constant.)

Plugging this bound into~\eqref{eq:2903hr}, we get
\begin{equation}\label{eq:y}
\begin{split}
    \expect[Y(q)]&=\sum_{\substack{(u_1, \ldots, u_q)\in V^q:\\ u_i\text{~distinct}}} 
    \prob[(u_1,\ldots, u_q)\text{~is a path in~}G]\cdot 
    \prob\left[\text{id}(u_1)>\text{id}(u_j), j\in [2..q]\right]\\
    &=\frac 1{q}\sum_{\substack{(u_1, \ldots, u_q)\in V^q:\\ u_i\text{~distinct}}} 
    \frac{d_{u_1}\cdot d_{u_q}}{2m}\cdot \prod_{j=2}^{q-1} \frac{d_{u_j}^2}{2m}\\
    &=\frac 1{q} \left(\sum_{u_1\in V} d_{u_1}\right)\cdot
    \left(\sum_{u_q\in V} \frac{d_{u_q}}{2m}\right)\cdot
    \prod_{j=2}^{q-1} \sum_{u_j\in V}\frac{d_{u_j}^2}{2m} \\
    &~~~~-\frac 1{q}\sum_{\substack{(u_1, \ldots, u_q)\in V^q:\\ u_i\text{~{\bf not} distinct}}} 
    \frac {d_{u_1}\cdot d_{u_q}}{2m}\cdot \prod_{j=2}^{q-1} \frac{d_{u_j}^2}{2m}.
\end{split}
\end{equation}
Clearly, 
\begin{equation}\label{eq:lbound1}
%S^*=\frac1{q}\sum_{(u_1, \ldots, u_q)\in V^q: u_i\text{~{\bf not} distinct}} \frac{d_{u_1}\cdot d_{u_q}}{2m}\cdot \prod_{j=2}^{q-1} \frac{d_{u_j}^2}{2m}.
    \frac1{q} \left(\sum_{u_1\in V} d_{u_1}\right)\cdot
    \left(\sum_{u_q\in V} \frac{d_{u_q}}{2m}\right)\cdot
    \prod_{j=2}^{q-1} \sum_{u_j\in V}\frac{d_{u_j}^2}{2m} = 
     \frac 1q (2m)^{3-q} \cdot \left( \sum_{u\in V}{d_{u}^2} \right)^{q-2}
\end{equation}

We now show that 
\begin{equation}\label{eq:s-star}
    \frac1{q}\sum_{\substack{(u_1, \ldots, u_q)\in V^q:\\ u_i\text{~{\bf not} distinct}}} 
    \frac{d_{u_1}\cdot d_{u_q}}{2m}\cdot \prod_{j=2}^{q-1} \frac{d_{u_j}^2}{2m}
    =o\left(\frac1{q}  (2m)^{3-q} \cdot\left(\sum_{u} d_u^2\right)^{q-2}\right).
\end{equation} Together with~\eqref{eq:y} this gives the result. 

We show that~\eqref{eq:s-star} follows from the assumption that the
degree sequence $\d$ is balanced. 
Let 
\[
\begin{split}
    \calU &= \left\{(u_1, \ldots, u_q)\in V^q: u_i\text{~{\bf not} distinct}\right\} \\
    \calW_{k,\ell} &= \left\{(u_1, \ldots, u_q)\in V^q: u_k=u_\ell\right\},
    \mbox{ for } 1 \le k < \ell \le q.
\end{split}
\]
Note that $\calU \subseteq \cup_{1 \le k < \ell \le q} \calW_{k,\ell}$.
Let $\vec{v}$ denote a vector $(u_1, \ldots, u_q)\in V^q$. To simplify the exposition
denote 
\[
    S(\vec{v}) =  \frac{d_{u_1}\cdot d_{u_q}}{2m}\cdot \prod_{j=2}^{q-1} \frac{d_{u_j}^2}{2m}.
\]

We show that for every $1 \le k < \ell \le q$, 
\begin{equation}\label{eq:set-bound}
\sum_{\vec{v} \in \calW_{k,\ell}} S(\vec{v}) 
    \leq \frac 1{q} n^{-\delta} (2m)^{3-q} \cdot\left(\sum_{u} d_u^2\right)^{q-2}.
\end{equation}
Since there are constant number of such sets this implies~\eqref{eq:s-star}. 
To prove~\eqref{eq:set-bound} we consider four different cases.

\noindent{\bf Case 1:}  $2 \le k < \ell \leq q-1$. In this case
\[
\begin{split}
\sum_{\vec{v} \in \calW_{k,\ell}} S(\vec{v}) 
&=\frac 1{q}\sum_{(u_1, \ldots, u_q)\in \calW_{k,\ell}} 
\frac{d_{u_1}\cdot d_{u_{q}}}{2m}\cdot 
\frac{d_{u_k}^4}{4m^2} \cdot
\prod_{j \in [2..q-1] \setminus \{k,\ell\}} \frac{d_{u_j}^2}{2m} \\
&=\frac 1{q} \left(\sum_{u_1\in V} d_{u_1}\right)\cdot
    \left(\sum_{u_q\in V} \frac{d_{u_q}}{2m}\right)\cdot
    \left(\sum_{u_k\in V} \frac{d_{u_k}^4}{4m^2}\right)\cdot
    \prod_{j \in [2..q-1] \setminus \{k,\ell\}} \sum_{u_j\in V}\frac{d_{u_j}^2}{2m} \\
    &\leq \frac 1{q} n^{-\delta} \left(\sum_{u_1\in V} d_{u_1}\right)\cdot
    \left(\sum_{u_q\in V} \frac{d_{u_q}}{2m}\right)\cdot
    \left(\sum_{u_k\in V} \frac{d_{u_k}^2}{2m}\right)\cdot
    \left(\sum_{u_k\in V} \frac{d_{u_k}^2}{2m}\right)\cdot \\
    &~~~~\prod_{j \in [2..q-1] \setminus \{k,\ell\}} \sum_{u_j\in V}\frac{d_{u_j}^2}{2m} \\
    &= \frac 1q n^{-\delta} (2m)^{3-q} \cdot \left( \sum_{u\in V}{d_{u}^2} \right)^{q-2}
\end{split}
\]
Note that the inequality follows since the degree sequence $\d$ is
$n^{-\delta}$-bounded.

\noindent{\bf Case 2:}  $k=1$ and $\ell=q$. In this case
\[
\begin{split}
\sum_{\vec{v} \in \calW_{1,q}} S(\vec{v}) 
&=\frac 1{q}\sum_{(u_1, \ldots, u_q)\in \calW_{1,q}} 
\frac{d_{u_1}^2}{2m}\cdot 
\prod_{j \in [2..q-1]} \frac{d_{u_j}^2}{2m} \\
&=\frac 1{q} 
    \left(\sum_{u_1\in V} \frac{d_{u_1}^2}{2m}\right)\cdot
    \prod_{j \in [2..q-1]} \sum_{u_j\in V}\frac{d_{u_j}^2}{2m} \\
&\leq \frac 1{q} n^{-\delta} \left(\sum_{u_1\in V} d_{u_1}\right)\cdot
    \left(\sum_{u_1\in V} \frac{d_{u_1}}{2m}\right)\cdot
    \prod_{j \in [2..q-1]} \sum_{u_j\in V}\frac{d_{u_j}^2}{2m} \\
&= \frac 1q n^{-\delta} (2m)^{3-q} \cdot \left( \sum_{u\in V}{d_{u}^2} \right)^{q-2}
\end{split}
\]

\noindent{\bf Case 3:}  $k=1$ and $\ell \in [2..q-1]$. In this case
\[
\begin{split}
\sum_{\vec{v} \in \calW_{1,\ell}} S(\vec{v}) 
&=\frac 1{q}\sum_{(u_1, \ldots, u_q)\in \calW_{1,\ell}} 
\frac{d_{u_1}^3}{2m} \cdot
\frac{d_{u_q}}{2m}\cdot 
\prod_{j \in [2..q-1] \setminus \{\ell\}} \frac{d_{u_j}^2}{2m} \\
&=\frac 1{q} 
    \left(\sum_{u_1\in V} \frac{d_{u_1}^3}{2m}\right)\cdot
    \left(\sum_{u_q\in V} \frac{d_{u_q}}{2m}\right)\cdot
    \prod_{j \in [2..q-1] \setminus \{\ell\}} \sum_{u_j\in V}\frac{d_{u_j}^2}{2m} \\
&\leq \frac 1{q} n^{-\delta} \left(\sum_{u_1\in V} d_{u_1}\right)\cdot
    \left(\sum_{u_q\in V} \frac{d_{u_q}}{2m}\right)\cdot
    \left(\sum_{u_1\in V} \frac{d_{u_1}^2}{2m}\right)\cdot
    \prod_{j \in [2..q-1] \setminus \{\ell\}} \sum_{u_j\in V}\frac{d_{u_j}^2}{2m} \\
&= \frac 1q n^{-\delta} (2m)^{3-q} \cdot \left( \sum_{u\in V}{d_{u}^2} \right)^{q-2}
\end{split}
\]
\noindent{\bf Case 4:}  $k \in [2..q-1]$ and $\ell=q$. This case is
symmetric to Case 3.
\eat{
Specifically, we write
\begin{equation}\label{eq:oig42tg}
\begin{split}
S^*&=\frac 1{q}\sum_{(u_1, \ldots, u_q)\in V^q: u_i\text{~{\bf not} distinct}} \frac{d_{u_1}\cdot d_{u_q}}{2m}\cdot \prod_{j=2}^{q-1} \frac{d_{u_j}^2}{2m}\\
&\leq \frac 1{(2m)^{q-1}} \cdot \sum_{k=1}^{q-1} \sum_{\substack{(u_1, \ldots, u_q)\in V^q: \\u_i\text{~contain $k$ distinct elements}}} d_{u_1} d_{u_q} \prod_{j=2}^{q-1} d_{u_j}^2\\
\end{split}
\end{equation}
We now bound each of the $k$ summations separately. We first note that for any $k=1,\ldots, q-1$
\begin{equation}\label{eq:ogerg}
\begin{split}
\sum_{\substack{(u_1, \ldots, u_q)\in V^q: \\u_i\text{~contain $k$ distinct}\\\text{elements}}} d_{u_1} d_{u_q} \prod_{j=2}^{q-1} d_{u_j}^2\leq q! \sum_{\lambda\in \Lambda_k} \sum_{(v_1,\ldots, v_k)\in V^k} \prod_{j=1}^k (d_{v_j})^{\lambda_j},\\
\end{split}
\end{equation}
where $\Lambda_k$ is the set of nonnegative vectors $\lambda$ of length $q$ such that {\bf (1)} $\sum_{i=1}^k \lambda_i=2(q-1)$  and {\bf (2)} $\lambda_i\geq 2$ and is {\bf even} for all $i\geq 2$ and {\bf (3)} $\lambda_i=0$ for $i\in [k+1:q]$. Indeed, $\lambda$ encodes the number of times the path goes through every one of the $k$ nodes, where $\lambda_1$ and $\lambda_2$ correspond to the start and end of the path and $\lambda_i,i=3,\ldots, k$ encode the number of times the path passes through the $(i-1)$-st node in the set $\{u_1,\ldots, u_q\}$ arranged according to an arbitrary ordering. The factor of $q!$ (over)counts the number of possible ways of routing such a path via the nodes $\{u_1,\ldots, u_q\}$ subject to the constraints imposed by a fixed $\lambda$. By the assumption that $\d$ is $n^{-\delta}$-balanced for a constant $\delta>0$ we now conclude that for any $\lambda\in \Lambda_k$ we have 
\begin{equation}\label{eq:02htnfdgf}
\begin{split}
\sum_{(v_1,\ldots, v_k)\in V^k} \prod_{j=1}^k (d_{v_j})^{\lambda_j}&\leq n^{-\delta\cdot (q-k)}\cdot (2m)^2 (\sum_u d_u^2)^{q-2}\\
\end{split}
\end{equation}
Indeed, this is because every $\lambda\in \Lambda_k$ can be transformed into the vector $(1, 1, 2, 2,\ldots, 2)$ by running the following algorithm:
\begin{algorithm}[H]
\caption{Reducing arbitrary $\lambda\in \Lambda_k$ to $(1, 1, 2, \ldots, 2)$}\label{alg:partition}
\begin{algorithmic}[1]
\Procedure{ReduceLambda}{$\lambda$}\Comment{$\in \mathbb{Z}_+^q$}
\State $\lambda^0\gets \lambda$
\State $t\gets 0$
\While{$\max_{i\in [1:q]} \lambda^t_i>2$} 
\State $i^*\gets \text{argmax}_{i\in [1:q]} \lambda^t_i$
\State $\lambda^{t+1}_{i^*}\gets \lambda^t_{i^*}-2$
\State $\lambda^{t+1}_{k+t}\gets 2$
\State $\lambda^{t+1}_i\gets \lambda^t_i$ for $i\in [1:q]\setminus \{i^*\}$ 
\State $t\gets t+1$
\EndWhile
\State \textbf{return} $\{\lambda^t\}_{t=1}^{q-k}$
\EndProcedure 
\end{algorithmic}
\end{algorithm}
Let $\{\lambda^t\}_{t=1}^{q-k}:=\Call{ReduceLambda}{\lambda}$. It remains to note that by definition of a $n^{-\delta}$-balanced degree sequence we have for each $t=1,\ldots, q-k$
\begin{equation*}
\begin{split}
\sum_{(v_1,\ldots, v_{k+t})\in V^{k+t}} \prod_{j=1}^{k+t} (d_{v_j})^{\lambda^t_j}\leq n^{-\delta} \sum_{(v_1,\ldots, v_{k+t-1})\in V^{k+t-1}} \prod_{j=1}^{k+t-1} (d_{v_j})^{\lambda^t_j},
\end{split}
\end{equation*}
and in particular 
\begin{equation*}
\begin{split}
\sum_{(v_1,\ldots, v_k)\in V^k} \prod_{j=1}^k (d_{v_j})^{\lambda_j}&\leq n^{-\delta (q-k)} \sum_{(v_1,\ldots, v_q)\in V^q} \prod_{j=1}^q (d_{v_j})^{\lambda_j}\\
&=n^{-\delta (q-k)} (2m)^2 \left(\sum_{v\in V} d_v^2\right)^{q-2},
\end{split}
\end{equation*}
establishing~\eqref{eq:02htnfdgf}. We also have that $\Lambda_k\leq k^q$ by a crude bound, and thus putting~\eqref{eq:ogerg} together with~\eqref{eq:02htnfdgf}  gives  
\begin{equation*}
\begin{split}
\sum_{\substack{(u_1, \ldots, u_q)\in V^q: \\u_i\text{~{\bf not} distinct}}} d_{u_1} d_{u_q} \prod_{j=2}^{q-1} d_{u_j}^2&\leq \sum_{k=1}^{q-1} q! k^q n^{-\delta\cdot k} Z= O(q!\cdot q^q n^{-\delta}) Z=o(Z)\\
\end{split}
\end{equation*}
where $Z=\sum_{(u_1, \ldots, u_q)\in V^q} d_{u_1} d_{u_q} \prod_{j=2}^{q-1} d_{u_j}^2$.
Using this bound in ~\eqref{eq:oig42tg} (and restoring the factor of $\frac 1{(2m)^{q-1}}$) we now get~\eqref{eq:s-star} and thus complete the proof.
}
\end{proof}

The following lemma provides an upper bound on the expected runtime of the degree-based algorithm for enumerating cycles of length $q\geq 3$:

\begin{lemma}\label{lm:x-bound}
Let $G=(V, E), V=[n]$ be drawn from the Chung-Lu distribution with degree sequence $\d$. 
%Suppose that node id's are chosen uniformly at random. 
For any integer $q\geq 3$ let $X(q)$ be defined by~\eqref{eq:x-def}.  Then there exists an absolute constant $C>1$ such that 
$$
\expect[X(q)]\leq C (2m)^{-q+2} \left(\sum_{u\in V} d_u^{2-1/(q-1)}\right)^{q-1}.
$$
\end{lemma}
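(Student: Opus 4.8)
The plan is to apply the first-moment method. By linearity and the definition~\eqref{eq:x-def},
\[
\expect[X(q)]=\sum_{\substack{(u_1,\ldots,u_q)\in V^q\\ \text{distinct}}}\prob\left[(u_1,\ldots,u_q)\text{ is a path in }G\ \wedge\ u_1\succ u_j,\ j\in[2..q]\right],
\]
and I would substitute the path probability $\frac{d_{u_1}d_{u_q}}{2m}\prod_{j=2}^{q-1}\frac{d_{u_j}^2}{2m}$ from Claim~\ref{cl:prob-path}. The feature that makes this harder than the lower bound on $\expect[Y(q)]$ is that $\succ$ is the \emph{degree} ordering, determined by the random degrees $\deg(u)$ of $G$; unlike the id-ordering used for $Y(q)$, the event $\{u_1\succ u_j\text{ for all }j\}$ is correlated both across the $u_j$ and with the presence of the path. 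The whole point of the degree-based procedure is that forcing the heaviest node to an \emph{endpoint} (a position carrying only a single factor $d_{u_1}$ in Claim~\ref{cl:prob-path}) is what produces savings on skewed sequences, so the proof must convert actual-degree maximality into a statement about the expected degrees $d_u$.

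First I would reduce the actual-degree condition to an expected-degree condition, splitting the tuples according to whether every non-maximal expected degree is comparable to $d_{u_1}$. If $d_{u_j}\le 20\,d_{u_1}$ for all $j$, I keep the trivial bound $\prob[u_1\succ u_j\ \forall j\mid \text{path}]\le 1$; since every node then has expected degree within a constant factor of $d_{u_1}$, such tuples are absorbed into the main-term estimate below at the cost of a constant depending only on $q$. If instead some $d_{u_j}>20\,d_{u_1}$, then conditioned on the path (which perturbs each degree by only $O(1)$, as the edges of $G$ are independent) the degrees $\deg(u_1)$ and $\deg(u_j)$ are sums of independent Bernoullis with means $d_{u_1}$ and $d_{u_j}\ge 20\,d_{u_1}$, so Corollary~\ref{cr:lower-vs-upper} gives $\prob[\deg(u_1)\ge\deg(u_j)]\le 2e^{-c\,d_{u_j}}$. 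The factor $e^{-c d_{u_j}}$ dominates the $d_{u_j}^2$ weight this node carries in Claim~\ref{cl:prob-path} (since $e^{-cd_v}d_v^2=O(1)$), so the contribution of these tuples is negligible relative to the main term.

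For the main term I sum the path probabilities over all distinct tuples with $d_{u_j}\le 20\,d_{u_1}$ for every $j$. Fixing the maximal node $u_1$, the sum factors over the remaining $q-1$ nodes into $\frac{d_{u_1}}{(2m)^{q-1}}\,A(u_1)^{q-2}\,B(u_1)$, where $A(u_1)=\sum_{v:\,d_v\le 20 d_{u_1}}d_v^2$ collects the $q-2$ interior nodes and $B(u_1)=\sum_{v:\,d_v\le 20 d_{u_1}}d_v$ the second endpoint $u_q$. Bounding $B(u_1)\le 2m$, it remains to establish the key inequality
\[
\sum_{u}d_u\,\Big(\sum_{v:\,d_v\le 20 d_u}d_v^2\Big)^{q-2}\ \le\ C\,\Big(\sum_u d_u^{2-1/(q-1)}\Big)^{q-1},
\]
since together with the factors $2m$ and $(2m)^{-(q-1)}$ this yields exactly the claimed $(2m)^{-q+2}\big(\sum_u d_u^{2-1/(q-1)}\big)^{q-1}$.

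Proving this inequality is the main obstacle, and is precisely where the presence of several interior nodes (the exponent $q-2$ on the partial sum, versus a single power for triangles) forces a new ingredient beyond~\cite{BFNPSW14}. I would group nodes dyadically by degree, writing $\sigma_i=\sum_{d_u\in[2^i,2^{i+1})}d_u^2$, so that $\sum_u d_u^{2-1/(q-1)}\asymp\sum_i\tau_i$ with $\tau_i:=\sigma_i 2^{-i/(q-1)}$, the inner partial sum for $d_u\sim 2^i$ is $\asymp\sum_{i'\le i}\sigma_{i'}$, and the outer weight contributes $\sum_{d_u\sim 2^i}d_u\asymp\sigma_i 2^{-i}=\tau_i\,2^{-i(q-2)/(q-1)}$. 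The crucial step is to pull $2^{i/(q-1)}$ out of each partial sum, $\sum_{i'\le i}\sigma_{i'}=\sum_{i'\le i}\tau_{i'}2^{i'/(q-1)}\le 2^{i/(q-1)}\sum_{i'}\tau_{i'}$; the resulting factor $2^{i(q-2)/(q-1)}$ cancels the outer weight exactly, so each dyadic band is bounded by $\tau_i\big(\sum_{i'}\tau_{i'}\big)^{q-2}$, and summing over $i$ telescopes to $\big(\sum_i\tau_i\big)^{q-1}$ (the factor-of-two spread inside each band and the truncation at $20 d_u$ affect only the constant $C$). This weighted-H\"older/dyadic cancellation is the technical heart of the lemma; the remaining first-moment and concentration steps above are routine bookkeeping.
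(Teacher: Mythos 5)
Your overall skeleton coincides with the paper's: a first-moment computation via Claim~\ref{cl:prob-path}, a split of the tuples according to whether $d_{u_1}$ dominates every $d_{u_j}$ up to a constant factor, the trivial bound $\Pr[\cdot]\le 1$ on the balanced part, and Corollary~\ref{cr:lower-vs-upper} (with the shared-edge Bernoulli removed to decouple $\deg(u_1)$ from $\deg(u_j)$ conditional on the path) on the unbalanced part. Where you genuinely diverge is the main term. The paper never forms your constrained sums $A(u_1)=\sum_{v:\,d_v\le 20 d_{u_1}}d_v^2$; instead it uses a one-line pointwise redistribution: whenever $d_{u_1}>\phi d_{u_j}$ for all $j$, it writes $d_{u_1}^{(q-2)/(q-1)}>\phi\prod_{j=2}^{q-1}d_{u_j}^{1/(q-1)}$ and hence
\begin{equation*}
d_{u_1}\prod_{j=2}^{q-1}d_{u_j}^2\ \le\ \frac1{\phi}\prod_{j=1}^{q-1}d_{u_j}^{2-1/(q-1)},
\end{equation*}
after which the constraint can be dropped and the sum factorizes into $q-1$ identical unconstrained sums $\sum_u d_u^{2-1/(q-1)}$. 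Your dyadic argument proves the equivalent inequality $\sum_u d_u\,A(u)^{q-2}\le C\bigl(\sum_u d_u^{2-1/(q-1)}\bigr)^{q-1}$ directly, and it is correct: the band shift coming from the factor $20$ and the factor-of-two spread within bands cost only $2^{O(1)}$, so the constant stays absolute. The paper's route is shorter; yours makes transparent why $2-1/(q-1)$ is exactly the exponent at which the partial-sum weight $2^{i(q-2)/(q-1)}$ cancels the outer weight, which is a nice piece of insight the paper's computation hides.

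There is, however, a genuine gap in your treatment of the unbalanced tuples, and it is not ``routine bookkeeping.'' You damp only the single witness $j$ with $d_{u_j}>20\,d_{u_1}$ and argue that $e^{-cd_v}d_v^2=O(1)$ makes these tuples negligible. Formalizing that bookkeeping, the witness coordinate contributes $\sum_v d_v^2e^{-cd_v}=O(n)$ while the remaining $q-3$ interior coordinates keep their full weight $\sum_u d_u^2$; for $q=4$ this requires $n\,(2m)\,\sum_u d_u^2 \le C\bigl(\sum_u d_u^{5/3}\bigr)^3$, which is false in general --- take $n$ nodes of degree $1$ plus $n^{1/6}$ nodes of degree $\sqrt n$, where the left side exceeds the right by $\Theta(n^{1/6})$. (The deeper issue: if the witness is not the maximal node of the tuple, interior nodes of degree up to $\sqrt n$ remain completely undamped even though the event $\deg(u_1)\ge\deg(u_j)$ for \emph{all} $j$ is then exponentially unlikely; and even choosing a maximal witness, damping only its own $d^2$ weight does not control the other coordinates.) The paper's fix is exactly what you need: compare $u_1$ against $j^*=\mathrm{argmax}_{j\in[2..q]}\,d_{u_j}$, obtain $2e^{-\Omega(d_{u_{j^*}})}$, and then distribute the exponential over all coordinates via $e^{-\Omega(d_{u_{j^*}})}\le\prod_{j\in[1..q]}e^{-\Omega(d_{u_j}/q)}$, legitimate since $d_{u_{j^*}}\ge d_{u_j}$ for every $j$. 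Then every coordinate sum becomes $\sum_u d_u^{a}e^{-\Omega(d_u/q)}=O(n)$, so the unbalanced part is $O(n)=O(2m)$, which is dominated because the claimed bound is at least $2m$ (as $\sum_u d_u^{2-1/(q-1)}\ge\sum_u d_u=2m$). A small further point: your threshold $20$ is too tight for Corollary~\ref{cr:lower-vs-upper} as stated, since after the $O(1)$ shift from conditioning on the path and the factor-$2$ loss in the lower bound on the trimmed mean of $\deg(u_j)$ one needs a ratio of roughly $80$ --- this is why the paper sets $\phi=1/80$; it is a constant-tuning matter, not a structural one.
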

\begin{proof}
We have by~\eqref{eq:x-def}
\begin{equation*}
\begin{split}
\expect[X(q)]&\leq \sum_{\substack{(u_1, \ldots, u_q)\in V^q:\\ u_i\text{distinct}}} 
\prob[(u_1,\ldots, u_q)\text{~is a path in~}G\text{~and~}
\forall j\in [2..q]~\deg(u_1)\geq\deg(u_j)]
\end{split}
\end{equation*}
where $\deg(u)$ stands for the {\em actual} degree of the node $u\in V$ in the graph $G$. Note that while $\expect[\deg(u)]=d_u$, there may be deviations due to the sampling process. This fact introduces some complications in the analysis.

Similarly to~\cite{BFNPSW14}, we start by splitting the summation
above into two. For a constant $\phi$ (to be fixed later to $1/80$)
let
\begin{equation}\label{eq:x}
\begin{split}
\expect[X(q)]
\eat{
&\leq \sum_{\substack{(u_1, \ldots, u_q)\in V^q: \\ u_i\text{~distinct},\\ \deg(u_q)>\deg(u_j)}} \prob[(u_1,\ldots, u_q)\text{~is a path in~}G]\\
&=\sum_{\substack{(u_1, \ldots, u_q)\in V^q: \\ u_i\text{~distinct}, \\  d_{u_q}>  \delta d_{u_j}\text{~for all~}j=1,\ldots, q-1}} \prob[(u_1,\ldots, u_q)\text{~is a path in~}G\text{~and~}\deg(u_q)>\deg(u_j)\forall j\in [1:q-1]\\
&+\sum_{\substack{(u_1, \ldots, u_q)\in V^q: u_i\text{~distinct} \\ \exists j\in [1:q-1] \text{~s.t.~}  d_{u_q}\leq  \delta d_{u_j}}} \prob[(u_1,\ldots, u_q)\text{~is a path in~}G\text{~and~}\deg(u_q)>\deg(u_j)\forall j\in [1:q-1]\\
}
=S_1+S_2,
\end{split}
\end{equation}
where 
\begin{equation*}
\begin{split}
S_1&:=\sum_{\substack{(u_1, \ldots, u_q)\in V^q: u_i\text{~distinct},\\
\forall j\in [2..q]~d_{u_1}>  \phi d_{u_j}}}
\prob[(u_1,\ldots, u_q)\text{~is a path in~}G
\text{~and~}\forall j\in [2..q]~\deg(u_1)\geq\deg(u_j)] \\
S_2&:=\sum_{\substack{(u_1, \ldots, u_q)\in V^q: u_i\text{~distinct} \\ 
\exists j\in [2..q] \text{~s.t.~}  d_{u_1}\leq  \phi d_{u_j}}} 
\prob[(u_1,\ldots, u_q)\text{~is a path in~}G
\text{~and~}\forall j\in [2..q]~\deg(u_1)\geq\deg(u_j)].
\end{split}
\end{equation*}

We now bound the two summations separately.

\paragraph{Bounding $S_1$} 
To bound $S_1$ we use the fact that for any $q$-tuple of distinct
$u_1,\ldots, u_q$ by Claim~\ref{cl:prob-path}
\begin{equation*}
\begin{split}
&\prob[(u_1,\ldots, u_q)\text{~is a path in~}G\text{~and~}\forall j\in [2..q]~\deg(u_1)\geq\deg(u_j)\\
&\leq \prob[(u_1,\ldots, u_q)\text{~is a path in~}G]
=\frac{d_{u_1}\cdot d_{u_q}}{2m}\cdot \prod_{j=2}^{q-1} \frac{d_{u_j}^2}{2m}.
\end{split}
\end{equation*}
We thus get
\begin{equation}\label{eq:s1}
\begin{split}
S_1&\leq \sum_{\substack{(u_1, \ldots, u_q)\in V^q\\\forall j\in [2..q]~d_{u_1}>  \phi d_{u_j}}} 
    \frac{d_{u_1}\cdot d_{u_q}}{2m}\cdot \prod_{j=2}^{q-1} \frac{d_{u_j}^2}{2m}
\end{split}
\end{equation}
Since $\phi d_{u_j}< d_{u_1}$ for all potential paths $(u_1,\ldots, u_q)$ in the summation above, we have
\begin{equation*}
\begin{split}
    d_{u_1}^{\left( 1-\frac 1{q-1}\right)}&=d_{u_1}^{\frac{q-2}{q-1}} 
    > \prod_{j=2}^{q-1} (\phi d_{u_j})^\frac 1{q-1}
    = \phi^{\frac{q-2}{q-1}}\prod_{j=2}^{q-1} d_{u_j}^\frac 1{q-1}
    \geq \phi \prod_{j=2}^{q-1} d_{u_j}^\frac 1{q-1}.
\end{split}
\end{equation*}
It follows that 
\begin{equation*}
\begin{split}
d_{u_1}\cdot \prod_{j=2}^{q-1} d_{u_j}^2
  = d_{u_1}\cdot \prod_{j=2}^{q-1} d_{u_j}^{\left(2-\frac 1{q-1}\right)}
  \cdot \prod_{j=2}^{q-1} d_{u_j}^{\frac 1{q-1}}
  \leq \frac 1\phi \prod_{j=1}^{q-1} d_{u_j}^{\left(2-\frac 1{q-1}\right)}.
\end{split}
\end{equation*}

Substituting this bound in~\eqref{eq:s1}, we get 
\begin{align*}
%\begin{split}
S_1&\leq \sum_{\substack{(u_1, \ldots, u_q)\in V^q\\\forall j\in [2..q]~d_{u_1}>  \phi d_{u_j}}} 
    \frac{d_{u_1}\cdot d_{u_q}}{2m}\cdot \prod_{j=2}^{q-1} \frac{d_{u_j}^2}{2m}
    &\leq \frac 1\phi 
         \sum_{\substack{(u_1, \ldots, u_q)\in V^q\\\forall j\in [2..q]~d_{u_1}>  \phi d_{u_j}}} 
             d_{u_q}\cdot \prod_{j=1}^{q-1} \frac{d_{u_j}^{2-1/(q-1)}}{2m}\\
    &\leq \frac 1\phi 
         \sum_{\substack{(u_1, \ldots, u_q)\in V^q}}
             d_{u_q}\cdot \prod_{j=1}^{q-1} \frac{d_{u_j}^{2-1/(q-1)}}{2m}
    &\leq \frac 1\phi
         \left(\sum_{u_q\in V} d_{u_q}\right)\cdot 
         \prod_{j=1}^{q-1} \sum_{u_j\in V} \frac{d_{u_j}^{2-1/(q-1)}}{2m}\\
         &\leq  \frac 1\phi (2m)^{-q+2}
         \left(\sum_{u} d_u^{\left(2-\frac 1{q-1}\right)}\right)^{q-1}.
%\end{split}
\end{align*}

\paragraph{Bounding $S_2$} Recall that 
\begin{equation*}
\begin{split}
    S_2&=\sum_{\substack{(u_1, \ldots, u_q)\in V^q: \\ 
    \exists j\in [2..q] \text{~s.t.~}  d_{u_1}\leq  \phi d_{u_j}}} 
    \prob\left[(u_1,\ldots, u_q)\text{~is a path in~}G\text{~and~}
    \forall j\in [2..q]~\deg(u_1)\geq\deg(u_j)\right]\\
       &=\sum_{\substack{(u_1, \ldots, u_q)\in V^q: \\ 
    \exists j\in [2..q] \text{~s.t.~}  d_{u_1}\leq  \phi d_{u_j}}}
    \prob[\forall j\in [2..q]~\deg(u_1)\geq\deg(u_j)\mid 
    \E(u_1,\ldots, u_q)]\cdot \prob[\E(u_1,\ldots, u_q)],\\
\end{split}
\end{equation*}
where we let $\E(u_1,\ldots, u_q):=\{(u_1,\ldots, u_q)\text{~is a path in~}G\}$. We omit the argument of $\E$ below to simplify notation. Note that we have 
\begin{equation*}
\begin{split}
d_{u_1}\leq &\expect[\deg(u_1)\mid\E]\leq d_{u_1}+1\\
    d_{u_j}\leq &\expect[\deg(u_j)\mid\E]\leq d_{u_j}+2\text{~for all~}j\in [2..q-1]\\
d_{u_q}\leq &\expect[\deg(u_q)\mid\E]\leq d_{u_q}+1.\\
\end{split}
\end{equation*}
Furthermore, $\deg(u_j)$ is a sum of independent $0/1$ Bernoulli
random variables even conditional on the event $\E$. Now we would like
to apply Corollary~\ref{cr:lower-vs-upper} to random variables
$\deg(u_1)$ and $\deg(u_j)$ conditional on $\E$, but there is one more
issue: these random variables are dependent through the potential edge
$(u_1, u_j)$. To avoid this issue, we omit the $0/1$ Bernoulli
random variable corresponding to this potential edge from both
random variables $\deg(u_1)$ and $\deg(u_j)$.
Let $\hdeg(u_1)$ and  $\hdeg(u_j)$ be the modified random
variables. Namely, $\hdeg(u_1):=\deg(u_1)-\mathbf{1}_{(u_1, u_j)\in E}$ 
and $\hdeg(u_j):=\deg(u_q)-\mathbf{1}_{(u_1, u_j)\in E}$, where
$\mathbf{1}_{(u_1, u_j)\in E}$ is the random variable corresponding to
the sampling of the potential edge $(u_1, u_j)$. 

Note that conditional on $\E$ the random variables $\hdeg(u_1)$
and $\hdeg(u_j)$ are independent, and are both sums of independent Bernoulli $0/1$ random variables. 
Note that
$d_u\geq 1$ for every node $u\in V$, and since 
$d_{u_1}\leq  \phi d_{u_j}$, then 
$d_{u_j}\geq 1/\phi$. It follows that 
\begin{equation}\label{eq:111a-def}
\begin{split}
\expect[\hdeg(u_1)\mid\E]&=\expect[\deg(u_1)-\mathbf{1}_{(u_1, u_j)\in E}\mid\E]
    \leq  d_{u_1}+1 \leq \phi (d_{u_j}+\frac 1\phi)\leq 2\phi d_{u_j}
\end{split}
\end{equation}
On the other hand since $d_{u_1} <m$, 
\begin{equation}\label{eq:111b-def}
\begin{split}
\expect[\hdeg(u_j)\mid\E]&=\expect[\deg(u_j)-\mathbf{1}_{(u_1, u_j)\in E}\mid\E]
    \geq  d_{u_j}\left(1-\frac{d_{u_1}}{2m}\right) \geq \half d_{u_j}.\\
\end{split}
\end{equation}

Putting these bounds together with the assumption that $\phi<1/80$, we get that the preconditions of Corollary~\ref{cr:lower-vs-upper} are satisfied, and hence
\begin{equation}\label{eq:bound}
\begin{split}
\prob[\deg(u_1)\geq \deg(u_j)\mid\E]
&=\prob[\deg(u_1)-\mathbf{1}_{(u_1, u_j)\in E}\geq \deg(u_j)-\mathbf{1}_{(u_1, u_j)\in E}\mid\E]\\
&=\prob[\hdeg(u_1)\geq \hdeg(u_j)\mid\E]\leq 2e^{-\Omega(d_{u_j})}.
\end{split}
\end{equation} 
This allows us to bound 
$\prob[\forall j\in [2..q]~\deg(u_1)\geq\deg(u_j)\mid \E]$ as follows. 
\eat{
    Let ${\mathcal I}\subseteq [2..q-1]$ 
denote the set of indices $i$ such that 
$d_{u_q}\leq \delta d_{u_i}$\footnote{Note that we could strengthen the argument by defining $\mathcal{I}$ as  the (potentially larger) set of indices $i\in [1:q-1]$ with the same property, but the current definition is sufficient for our purposes.}.
}
Let $j^*$ be the index of the highest expected degree node in 
$\{u_2,\ldots,u_q\}$, that is, $j^*:=\text{argmax}_{j\in [2..q]} d_{u_j}$. Clearly,
\begin{equation*}
\begin{split}
\prob[\forall j\in [2..q]~\deg(u_1)\geq\deg(u_j)\mid \E]
     &\leq \prob[\deg(u_1)\geq\deg(u_{j^*})\mid \E].
\end{split}
\end{equation*}
Note that if there exists $j\in [2..q]$ such that $d_{u_1}\leq  \phi d_{u_j}$, 
then also $d_{u_1}\leq  \phi d_{u_{j^*}}$. 
Thus, applying~\eqref{eq:bound} with $j=j^*$ we get
\begin{equation*}
\begin{split}
\prob[\forall j\in [2..q]~\deg(u_1)\geq\deg(u_j)\mid \E]
&\leq \prob[\hdeg(u_1)\geq \hdeg(u_{j^*})\mid\E]\\
&\leq 2e^{-\Omega(d_{u_{j^*}})}.
\end{split}
\end{equation*}

Substituting this bound in the expression for $S_2$ and using the fact that  
$2e^{-\Omega(d_{u_{j^*}})}\leq 2\prod_{j\in [1..q]} e^{-\Omega\left(\frac1{q}d_{u_j}\right)}$ 
(since $d_{u_{j^*}} \geq d_{u_j}$ for all $j\in [1..q]$), we get
\begin{equation*}
\begin{split}
    S_2 &\leq\sum_{\substack{(u_1, \ldots, u_q)\in V^q: \\ 
    \exists j\in [2..q] \text{~s.t.~}  d_{u_1}\leq  \phi d_{u_j}}} 
    (2m)^{-q+1} \cdot
    d_{u_1}e^{-\Omega(\frac1{q}d_{u_1})}\cdot
    d_{u_q}e^{-\Omega(\frac1{q}d_{u_q})}\cdot 
    \prod_{j\in [2..q-1]} d_{u_j}^2 e^{-\Omega(\frac1{q}d_{u_j})} \\
    &\leq (2m)^{-q+1} \cdot
    \left( \sum_{u \in V}  d_{u}e^{-\Omega(\frac1{q}d_{u})} \right)^2\cdot
    \left( \sum_{u \in V}  d_{u}^2 e^{-\Omega(\frac1{q}d_{u})} \right)^{q-2}.
\end{split}
\end{equation*}

Recall that the first two moments of the exponential distribution are
constants and thus since $q$ is assumed to be constant we get that
both
\[
    \sum_{u \in V}  d_{u} e^{-\Omega(\frac1{q}d_{u})}~\text{ and }~
    \sum_{u \in V}  d_{u}^2 e^{-\Omega(\frac1{q}d_{u})}
\]
are constants and thus $S_2 = O\left((2m)^{-q+1}\right)=o(S_1)$.
\eat{
\begin{equation*}
\begin{split}
S_2&\leq \sum_{p=1}^{q-2} \sum_{\mathcal{I}\subseteq [2:q-1], |\mathcal{I}|=p} \sum_{\substack{(u_1, \ldots, u_q)\in V^q: \\  \text{~s.t.~}  d_{u_q}\leq  \delta d_{u_j}\forall j\in \mathcal{I}}} \frac{d_{u_1}\cdot d_{u_q}}{2m} \cdot 
 \prod_{j\in \mathcal{I}} e^{-\Omega(\frac1{q}d_{u_j})}\frac{d_{u_j}^2}{2m} \cdot \prod_{j\in [2:q-1]\setminus \mathcal{I}} \frac{d_{u_j}^2}{2m}\\
 \end{split}
 \end{equation*}
We now use the fact that 
$$
\sum_{u\in V} (d_u)^2 e^{-\Omega(d_u/q)}=O(qn)=O(qm)
$$
to upper bound $\prod_{j\in \mathcal{I}} e^{-\Omega(d_{u_j}/q)}\frac{d_{u_j}^2}{2m}$ by $(Cq)^p$ for an absolute constant $C>0$.  Substituting this bound into the expression above, we get
 
 \begin{equation*}
 \begin{split}
S_2&\leq \sum_{p=1}^{q-2} (Cq)^p \sum_{\mathcal{I}\subseteq [2:q-1], |\mathcal{I}|=p} \sum_{\substack{(u_1, \ldots, u_q)\in V^q: \\  \text{~s.t.~}  d_{u_q}\leq  \delta d_{u_j}\forall j\in \mathcal{I}}} \frac{d_{u_1}\cdot d_{u_q}}{2m} \cdot 
 \prod_{j\in [2:q-1]\setminus \mathcal{I}} \frac{d_{u_j}^2}{2m}\\
\end{split}
\end{equation*}

Now by the same argument as in the analysis of $S_1$ we get, for any fixed $\mathcal{I}$ 
\begin{equation*}
\begin{split}
&\frac{d_{u_q}}{2m} \cdot \prod_{j\in [2:q-1]\setminus \mathcal{I}} \frac{d_{u_j}^2}{2m}\leq \delta^{-1}\prod_{j\in [2:q]\setminus \mathcal{I}} \frac{d_{u_j}^{2-1/(q-1-p)}}{2m}\\
\end{split}
\end{equation*}

Putting the bounds above together, setting $\delta=1/80$ (which satisfies the preconditions of Corollary~\ref{cr:lower-vs-upper}) we get that there exists an absolute constant $C'>1$ such that
$$
S_2\leq \sum_{p=1}^{q-2} (2m)^{-q+2+p} \cdot (C'q)^p {q \choose p} \left(\sum_{u\in V} d_u^{2-1/(q-1-p)}\right)^{q-1-p}
$$

Putting the bounds on $S_1$ and $S_2$ together, we get 
\begin{equation}\label{eq:s1ps2}
S_1+S_2\leq (Cq)^q\sum_{p=0}^{q-2} (2m)^{-q+2+p} \cdot \left(\sum_{u\in V} d_u^{2-1/(q-1-p)}\right)^{q-1-p}=(2m)(Cq)^q\sum_{p=0}^{q-2} R_p,
\end{equation}
where $R_p=(2m)^{-q+1+p} \cdot \left(\sum_{u\in V} d_u^{2-1/(q-1-p)}\right)^{q-1-p}$ and $C>0$ is an absolute constant.
To obtain the final result, we now show that note that the quantities $R_p$ are monotone decreasing with $p\in [0:q-2]$. To show this, we relate $R_p$ to the expectation of an appropriately defined random variable. Let $J$  be a random variable equal to $u\in V$ with probability $d_u/(2m)$ (i.e. we select a random node with probability proportional to its expected degree). Then we have

\begin{equation*}
\begin{split}
R_p=&\frac1{(2m)^{q-1-p}}\left(\sum_{u\in V} d_u^{2-1/(q-1-p)}\right)^{q-1-p}=\left(\sum_{u\in V} \frac{d_u}{2m}\cdot d_u^{1-1/(q-1-p)}\right)^{q-1-p}\\
&=\left(\expect_{J} \left[d_J^{1-1/(q-1-p)}\right]\right)^{q-1-p}.\\
\end{split}
\end{equation*}
Since $d_J\geq 1$ for all $J$, and the function $x^z$ is monotone increasing in $z$ when $x\geq 1$, we have 
\begin{equation*}
\begin{split}
\left(\expect_{J} \left[d_J^{1-1/(q-1-p)}\right]\right)^{q-1-p}&\leq \left(\expect_{J} \left[d_J^{1-1/(q-1)}\right]\right)^{q-1-p}\\
&\leq \left(\expect_{J} \left[d_J^{1-1/(q-1)}\right]\right)^{q-1}\\
&=\frac1{(2m)^{q-1}}\left(\sum_{u\in V} d_u^{2-1/(q-1)}\right)^{q-1}.
\end{split}
\end{equation*}
Substituting this bound~\eqref{eq:s1ps2}, we get 
\begin{equation*}
S_1+S_2\leq (Cq)^q (2m)\sum_{p=0}^{q-2} R_p\leq (C'q)^q \frac1{(2m)^{q-2}}\left(\sum_{u\in V} d_u^{2-1/(q-1)}\right)^{q-1}
\end{equation*}
for an absolute constant $C'>0$. This completes the proof.
}
\end{proof}

%\subsection{Comparing bounds of Lemma~\ref{lm:x-bound} and Lemma~\ref{lm:y-bound}}
\subsection{Comparing the lower and upper bounds}
We show that the bound on $\expect[X(q)]$ is not much worse than our
bound on $\expect[Y(q)]$ {\em for any degree sequence} $\d$ that is
$n^{-\delta}$-balanced.
We later show that our bound gives polynomially smaller runtime if the
degree sequence $\d$ satisfies the truncated power law.

\begin{lemma}\label{lm:xleqy}
For any $q\geq 3$ and any $n^{-\delta}$-balanced degree sequence $\d$, for a constant
$\delta>0$, $\expect[X(q)]=O(\expect[Y(q)])$.
\end{lemma}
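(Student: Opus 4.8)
The plan is to combine the upper bound on $\expect[X(q)]$ established in Lemma~\ref{lm:x-bound} with the lower bound on $\expect[Y(q)]$ established in Lemma~\ref{lm:y-bound}, and to show that the former is at most a constant (depending only on $q$) times the latter. Recall from those lemmas that
\[
\expect[X(q)]\leq C (2m)^{-q+2} \left(\sum_{u\in V} d_u^{2-1/(q-1)}\right)^{q-1}
\]
and
\[
\expect[Y(q)]\geq (1-o(1))\cdot \frac1{q}  (2m)^{-q+3}\left(\sum_{u} d_u^2\right)^{q-2}.
\]
Since $q$ is a constant, the prefactors $C$ and $1/q$ are constants, so it suffices to show that the degree-moment expression in the upper bound is dominated by the one in the lower bound, i.e. that
\[
(2m)^{-q+2}\left(\sum_{u} d_u^{2-1/(q-1)}\right)^{q-1} = O\left((2m)^{-q+3}\left(\sum_{u} d_u^2\right)^{q-2}\right).
\]

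The key step is to invoke Claim~\ref{cl:hoelder} with the conjugate exponent $t=q-1$, a choice dictated by matching the exponent $2-1/(q-1)$ appearing in the $X$-bound to the form $2-1/t$ of the claim. This gives
\[
\sum_{u} d_u^{2-1/(q-1)} \leq \left(\sum_{u} d_u\right)^{1/(q-1)}\cdot \left(\sum_{u} d_u^2\right)^{(q-2)/(q-1)} = (2m)^{1/(q-1)}\cdot \left(\sum_{u} d_u^2\right)^{(q-2)/(q-1)},
\]
where we used $\sum_u d_u = 2m$. Raising both sides to the power $q-1$ yields
\[
\left(\sum_{u} d_u^{2-1/(q-1)}\right)^{q-1} \leq (2m)\cdot \left(\sum_{u} d_u^2\right)^{q-2},
\]
which, after multiplying through by $(2m)^{-q+2}$, is precisely the inequality required above (indeed with hidden constant $1$).

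Substituting this into the $X$-bound gives $\expect[X(q)]\leq C(2m)^{-q+3}(\sum_u d_u^2)^{q-2}$, and comparing with the lower bound on $\expect[Y(q)]$ shows $\expect[X(q)]\leq (1+o(1))\,Cq\cdot \expect[Y(q)] = O(\expect[Y(q)])$, as claimed.

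I expect no substantial obstacle: the whole argument is a single well-chosen application of H\"older's inequality, and once $t=q-1$ is fixed the powers of $2m$ and of $\sum_u d_u^2$ line up exactly, so the comparison goes through with no loss beyond constant factors. It is worth noting that the $n^{-\delta}$-balancedness hypothesis in the statement is inherited only from the lower bound of Lemma~\ref{lm:y-bound}; the moment comparison carried out here uses nothing beyond the positivity assumption $d_u\geq 1$ already present in Claim~\ref{cl:hoelder}, and so holds for every positive degree sequence.
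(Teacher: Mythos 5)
Your proof is correct and is essentially identical to the paper's: both bound $\expect[X(q)]$ via Lemma~\ref{lm:x-bound}, apply Claim~\ref{cl:hoelder} with $t=q-1$ together with $\sum_u d_u = 2m$ to obtain $\expect[X(q)] \leq C(2m)^{3-q}\left(\sum_u d_u^2\right)^{q-2}$, and compare against the lower bound of Lemma~\ref{lm:y-bound}. Your closing observation---that the $n^{-\delta}$-balancedness is needed only to invoke Lemma~\ref{lm:y-bound}, while the moment comparison itself holds for any degree sequence with $d_u\geq 1$---is also accurate and consistent with the paper's remark about the lemma's preconditions.
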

\begin{proof}
We start with the bound from Lemma~\ref{lm:x-bound}:
$$
\expect[X(q)]\leq C (2m)^{2-q} \left(\sum_{u\in V} d_u^{\left(2-\frac 1{q-1}\right)}\right)^{q-1}.
$$

By Claim~\ref{cl:hoelder} with $t=q-1$ we have 
\begin{equation*}
\begin{split}
\sum_{u} d_u^{\left(2-\frac 1{q-1}\right)}&\leq \left(\sum_{u} d_u\right)^{\frac1{q-1}}\cdot \left(\sum_{u} d_u^2\right)^{\frac{q-1-1}{q-1}}.\\
\end{split}
\end{equation*}

Substituting this into the bound above, we get that 
\begin{equation*}
\begin{split}
\expect[X(q)]&\leq C (2m)^{2-q} \left(\sum_{u\in V} d_u^{\left(2-\frac 1{q-1}\right)}\right)^{q-1}\\
&\leq C (2m)^{2-q} \left(\left(\sum_{u} d_u\right)^{\frac1{q-1}}\cdot 
\left(\sum_{u} d_u^2\right)^{\frac{q-2}{q-1}}\right)^{q-1}\\
&= C (2m)^{2-q} \left(\sum_{u} d_u\right)\cdot \left(\sum_{u} d_u^2\right)^{q-2}
= C (2m)^{3-q} \left(\sum_{u} d_u^2\right)^{q-2}.
\end{split}
\end{equation*}
The lemma follows by comparing the above bound to the bound on $\expect[Y(q)]$
provided by Lemma~\ref{lm:y-bound} (note that the preconditions are
satisfied, as $\d$ is $n^{-\delta}$-balanced for a constant $\delta>0$
by assumption of the lemma).
\end{proof}

We now compare the bounds from Lemma~\ref{lm:x-bound} and
Lemma~\ref{lm:y-bound} when the graph $G$ is drawn
from the Chung-Lu distribution with a degree sequence that satisfies
the truncated power law. 
\begin{lemma}\label{lm:xy-powerlaw}
Let $G$ be a random graph drawn from the Chung-Lu distribution with degree
sequence $\d$ that satisfies
the truncated power law  with exponent $\alpha$, for a constant
$\alpha\in (1, 2)$. 
The following bounds hold for any constant $q\geq 3$. 

{\bf (1)} $\expect[Y(q)]= \Omega\left(n^{\alpha-1+\frac1{2}(2-\alpha)q}\right)$

{\bf (2a)}  $\expect[X(q)] = O\left(n^{\half+\half(2-\alpha)(q-1)}\right)$,
for $\alpha\in (1, 2-\frac 1{q-1})$.

{\bf (2b)} $\expect[X(q)]=O\left(n\log n\right)$,
for $\alpha\in [2-\frac 1{q-1},2)$.
\end{lemma}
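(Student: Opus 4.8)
The plan is to reduce all three estimates to a single computation of the power sums $\Phi_s:=\sum_{u}d_u^s$ of the degree sequence, and then substitute these values into the bounds of Lemma~\ref{lm:y-bound} and Lemma~\ref{lm:x-bound}. Because $q$ is a constant, every exponent that appears is a constant, so the asymptotic notation hides only constants depending on $q$ and $\alpha$.

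The key step is the estimate that, for a degree sequence obeying the truncated power law with exponent $\alpha\in(1,2)$ and any constant $s\geq 0$,
\[
\Phi_s=\sum_{u}d_u^s=\begin{cases}\Theta(n), & s<\alpha,\\ \Theta(n\log n), & s=\alpha,\\ \Theta\!\left(n^{1+(s-\alpha)/2}\right), & s>\alpha.\end{cases}
\]
I would prove this directly from the definition by bucketing nodes dyadically: the bucket of nodes with degree in $[2^j,2^{j+1})$ has size $\Theta(n/2^{\alpha j})$ and each of its nodes contributes $\Theta(2^{sj})$, so $\Phi_s=\Theta\!\left(n\sum_{j=0}^{(1/2)\log n}2^{(s-\alpha)j}\right)$. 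The geometric sum is dominated by its first term when $s<\alpha$, by its last term (degree $\approx\sqrt n$) when $s>\alpha$, and equals $\Theta(\log n)$ (the number of buckets, each term being $\Theta(1)$) exactly at the threshold $s=\alpha$. In particular, since $\alpha>1$ we get $2m=\Phi_1=\Theta(n)$, and since $\alpha<2$ we get $\Phi_2=\Theta(n^{2-\alpha/2})$.

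For \textbf{(1)} I would first note that, by the fact recorded after the definition of the truncated power law, the sequence is $n^{-\delta}$-balanced with $\delta=1-\alpha/2>0$, so the hypotheses of Lemma~\ref{lm:y-bound} hold. Substituting $2m=\Theta(n)$ and $\Phi_2=\Theta(n^{2-\alpha/2})$ into its conclusion gives $\expect[Y(q)]=\Omega\!\left(n^{3-q}\,\big(n^{2-\alpha/2}\big)^{q-2}\right)$, and collecting exponents, $3-q+(2-\alpha/2)(q-2)=\alpha-1+\tfrac12(2-\alpha)q$, which is the claimed bound. For \textbf{(2)} I set $s=2-\tfrac1{q-1}$, the exponent appearing in Lemma~\ref{lm:x-bound}, and split on the sign of $s-\alpha$. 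In case \textbf{(2a)}, $\alpha<2-\tfrac1{q-1}$ forces $s>\alpha$, so $\Phi_s=\Theta(n^{1+(s-\alpha)/2})$; feeding $(2m)^{2-q}=\Theta(n^{2-q})$ and $\Phi_s^{\,q-1}$ into Lemma~\ref{lm:x-bound} and simplifying the exponent yields $q-\tfrac12-\tfrac{\alpha(q-1)}2=\tfrac12+\tfrac12(2-\alpha)(q-1)$, as claimed.

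In case \textbf{(2b)}, $\alpha\geq 2-\tfrac1{q-1}$ gives $s\leq\alpha$, so $\Phi_s=\Theta(n)$ throughout the interior and $\Phi_s=\Theta(n\log n)$ at the single threshold $\alpha=2-\tfrac1{q-1}$; substituting into Lemma~\ref{lm:x-bound} gives $\expect[X(q)]=O(n^{2-q}\cdot n^{q-1})=O(n)$ in the interior and $O(n^{2-q}(n\log n)^{q-1})=O(n\cdot\mathrm{polylog}(n))$ at the boundary, matching the stated $O(n\log n)$ up to the logarithmic factor at the single extreme value of $\alpha$. The computations are otherwise routine, so the only real work is establishing the bucketing estimate and carrying out the exponent arithmetic; the one place that needs care is the threshold $s=\alpha$, where the geometric series degenerates and contributes the logarithmic factor, and the verification that the interior of case \textbf{(2b)} genuinely collapses to the linear bound.
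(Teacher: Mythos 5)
Your proof is correct and follows essentially the same route as the paper's: verify that the power law implies balancedness (Claim~\ref{cl:balanced} gives exactly your $\delta=1-\alpha/2$), estimate the relevant power sums by dyadic bucketing over degree scales, and substitute into Lemma~\ref{lm:y-bound} and Lemma~\ref{lm:x-bound}; your unified three-regime formula for $\sum_u d_u^s$ is a repackaging of the paper's case analysis, and your exponent arithmetic checks out.

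One remark: your handling of case \textbf{(2b)} is in fact finer than the paper's. The paper bounds the geometric sum by $O(\log n)$ terms of size $O(1)$ uniformly over $\alpha\in[2-\frac 1{q-1},2)$ and its proof therefore ends with $O\left(n(\log n)^{q-1}\right)$, which does not literally match the $O(n\log n)$ claimed in the lemma statement. As you observed, via Lemma~\ref{lm:x-bound} alone the stated bound is justified in the interior of the range (where your $\Theta(n)$ estimate even yields $O(n)$), while at the single threshold $\alpha=2-\frac 1{q-1}$ both your argument and the paper's give only $O\left(n(\log n)^{q-1}\right)$. This discrepancy is harmless downstream: Corollary~\ref{cor:xy-powerlaw} actually works with the $n(\log n)^{q-1}$ bound rather than the lemma's stated $n\log n$, so the polynomial separation between $\expect[X(q)]$ and $\expect[Y(q)]$ is unaffected.
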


\begin{proof} We first prove the bound on $\expect[Y(q)]$. As shown
later in Claim~\ref{cl:balanced} the condition that the degree
sequence $\d$ satisfies
the truncated power law implies that the degree sequence $\d$ is 
$n^{-\delta}$-balanced for a constant $\delta>0$, 
and hence preconditions of Lemma~\ref{lm:y-bound} are satisfied.
To apply the bound of Lemma~\ref{lm:y-bound} we bound $\sum_u d_u^2$.
\[
\sum_{u} d_u^2=
\sum_{j=0}^{\frac1{2}\log_2 n} \left(\frac{n}{2^{\alpha j}}\right) 2^{2j}
=n\cdot \sum_{j=0}^{\frac1{2}\log_2 n} 2^{(2-\alpha)j}
= \Theta\left(n\cdot n^{1-\half \alpha}\right).
\]

The number of edges in a graph with degree sequence that satisfies the
power law with exponent $\alpha \in (1,2)$ satisfies 
\begin{equation*}
    m=\half\sum_{u\in V} d_u=\Theta\left(\sum_{j\geq 0} 2^j\cdot \frac{n}{2^{\alpha j}}\right)
     =\Theta\left(\sum_{j\geq 0} 2^{(1-\alpha)j}\cdot n\right)=\Theta(n).
\end{equation*}

Plugging both bounds in the bound of
Lemma~\ref{lm:y-bound} we get
$$
\expect[Y(q)] = \Omega\left(n^{-q+3}  n^{(2-\half \alpha)(q-2)}\right)
= \Omega\left(n^{\alpha-1+\half (2-\alpha)q}\right).
$$

To bound $\expect[X(q)]$ using Lemma~\ref{lm:x-bound} we need first to
bound the following summation.
\begin{equation}\label{eq:sumx}
\begin{split}
    \sum_{u\in V} d_u^{\left(2-\frac 1{q-1}\right)}&
    =\sum_{j=0}^{\half \log_2 n} \frac{n}{2^{\alpha j}} \cdot 2^{\left(2-\frac 1{q-1}\right)j}
    =n\cdot \sum_{j=0}^{\half \log_2 n}  2^{\left(2-\alpha-\frac 1{q-1}\right)j}.
\end{split}
\end{equation}

The sum~\eqref{eq:sumx} above is dominated either by the first term
or by the last term, depending on whether $\alpha$ is less than or
greater than $2-\frac 1{q-1}$. 

\noindent{\bf Case 1:} $\alpha\in \left(1, 2-\frac 1{q-1}\right)$. In
this case the sum~\eqref{eq:sumx} is dominated by the last term and
thus bounded by
$$
n\cdot \sum_{j=0}^{\half \log_2 n}  2^{\left(2-\alpha-\frac 1{q-1}\right)j}
%n\cdot \sum_{j=0}^{\frac1{2}\log_2 n} 2^{(2-1/(q-1)-\alpha)j}=O\left(n\cdot \frac{n^{\frac1{2}(2-1/(q-1)-\alpha)}-1}{2-1/(q-1)-\alpha}\right)
=O\left(n^{1+\half \left(2-\alpha-\frac 1{q-1}\right)}\right) 
=O\left(n^{2-\half \left(\alpha+\frac 1{q-1}\right)}\right). 
$$

Substituting this into the bound provided by Lemma~\ref{lm:x-bound},
using the fact that $m=\Theta(n)$ we get
\begin{equation*}
\begin{split}
\expect[X(q)]&
    = O\left(n^{-q+2} n^{\left( 2-\half \left(\alpha+\frac 1{q-1}\right)(q-1)\right)}\right)
    = O\left(n^{\half+\half(2-\alpha)(q-1)}\right)
\end{split}
\end{equation*}

%When $\alpha=2-1/(\ell-1)$,  the exponent of $n$ is $\frac1{2}(1+1/(\ell-1))$ as required.

\noindent{\bf Case 2:} $\alpha\in \left[2-\frac 1{q-1},2\right)$. In
this case the sum~\eqref{eq:sumx} is dominated by the first term which is constant and
since we have $\half \log n$ summands the sum~~\eqref{eq:sumx} is
$O(n\log n)$.
\eat{
%\noindent{\bf Case 2:} $\alpha\in [2-1/(q-1), 2)$ Then we have 
$$
n\cdot \sum_{j=0}^{\frac1{2}\log_2 n} 2^{(2-1/(q-1)-\alpha)j}=O\left(n\log n\right).
$$
}

Substituting this into the bound provided by Lemma~\ref{lm:x-bound},
using the fact that $m=\Theta(n)$ we get
\begin{equation*}
\begin{split}
\expect[X(q)]&
    = O\left(n^{-q+2} \left( n\log n\right)^{q-1}\right)
    = O\left(n \left(\log n\right)^{q-1}\right).
\end{split}
\end{equation*}
\end{proof}

Applying these bounds we conclude the following 
\begin{corollary}\label{cor:xy-powerlaw}
If $G$ is a random graph drawn from the Chung-Lu distribution with degree
sequence $\d$ that satisfies
the truncated power law  with exponent $\alpha$, for a constant
$\alpha\in (1, 2)$,
then  $\expect[X(q)]= o\left(\expect[Y(q)]\right)$.
\eat{
For any constant $q$ and any $\alpha\in (1, 2)$ the expected number of paths enumerated by the degree based algorithm is smaller by a polynomial factor than the expected number of paths constructed by the id based (naive) enumeration algorithm. Specifically, for $\alpha\in (1, 2-1/(q-1))$ one has $\expect[Y(q)]/\expect[X(q)]=\Omega(n^{\frac1{2}(3-\alpha)})$, while for $\alpha\in [2-1/(q-1), 2)$ one has 
$\expect[Y(q)]/\expect[X(q)]=\Omega\left(\frac1{\log n}n^{\frac1{2}(2-\alpha)q}\right)$.
}
\end{corollary}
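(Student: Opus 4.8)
The plan is to read the corollary off directly from the three bounds established in Lemma~\ref{lm:xy-powerlaw}, by comparing the polynomial exponents of $\expect[X(q)]$ and $\expect[Y(q)]$ in the two regimes of $\alpha$ that the lemma distinguishes. Since Lemma~\ref{lm:xy-powerlaw}(1) gives $\expect[Y(q)]=\Omega(n^{\alpha-1+\frac12(2-\alpha)q})$ throughout $\alpha\in(1,2)$, it suffices to verify that in each regime the matching upper bound on $\expect[X(q)]$ carries a strictly smaller exponent, the gap being a genuine power of $n$ that swamps any constant or polylogarithmic factors (which we may discard because $q$ is constant).

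First I would treat Case~1, $\alpha\in(1,2-\frac1{q-1})$, where Lemma~\ref{lm:xy-powerlaw}(2a) gives $\expect[X(q)]=O(n^{\frac12+\frac12(2-\alpha)(q-1)})$. Subtracting this exponent from that of $\expect[Y(q)]$, the terms involving $q$ largely cancel (the $q$ versus $q-1$ discrepancy contributes only $\frac12(2-\alpha)$), and a short computation shows the difference simplifies to $\frac12(\alpha-1)$. Hence $\expect[Y(q)]/\expect[X(q)]=\Omega(n^{\frac12(\alpha-1)})$, which is a genuine polynomial factor precisely because $\alpha>1$, yielding $\expect[X(q)]=o(\expect[Y(q)])$ in this regime.

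Next I would treat Case~2, $\alpha\in[2-\frac1{q-1},2)$, where Lemma~\ref{lm:xy-powerlaw}(2b) bounds $\expect[X(q)]$ by $n$ times a polylogarithmic factor (the precise power of $\log n$ is immaterial since $q$ is constant), so the exponent of $\expect[X(q)]$ is $1$ up to logarithms. It then suffices to show the exponent of $\expect[Y(q)]$ strictly exceeds $1$. Writing $\alpha-1+\frac12(2-\alpha)q-1=\frac12(2-\alpha)(q-2)$, this quantity is strictly positive for every $\alpha<2$ and every $q\geq3$. Thus $\expect[Y(q)]/\expect[X(q)]=\Omega\!\left(n^{\frac12(2-\alpha)(q-2)}/(\log n)^{q-1}\right)$, again a polynomial separation, so $\expect[X(q)]=o(\expect[Y(q)])$ here as well.

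Combining the two cases covers all $\alpha\in(1,2)$ and establishes the corollary. I expect no real obstacle beyond the bookkeeping of exponents; the only points requiring care are that the separation in Case~1 hinges on the strict inequality $\alpha>1$ (so that $\frac12(\alpha-1)>0$), while in Case~2 it hinges on $q\geq3$ together with $\alpha<2$ (so that $\frac12(2-\alpha)(q-2)>0$ dominates the logarithmic factors). In both places the standing assumption that $q$ is a constant is exactly what lets us absorb the constant and polylogarithmic factors into the $o(\cdot)$.
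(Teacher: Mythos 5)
Your proposal is correct and follows essentially the same route as the paper's proof: divide the lower bound of Lemma~\ref{lm:xy-powerlaw}(1) by the upper bounds (2a) and (2b) in the two ranges of $\alpha$, obtaining ratios $\Omega\left(n^{\frac12(\alpha-1)}\right)$ and $\Omega\left(n^{\frac12(2-\alpha)(q-2)}(\log n)^{1-q}\right)$ respectively (the paper writes the second exponent as $\alpha-2+\frac12(2-\alpha)q$, which is identical to yours), both of which grow polynomially because $\alpha\in(1,2)$ is a fixed constant and $q\geq 3$. Your remark that the precise power of the logarithm is immaterial is apt, as it also absorbs the paper's small discrepancy between the statement of (2b), which says $O(n\log n)$, and the bound $O\left(n(\log n)^{q-1}\right)$ actually derived in its proof.
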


\begin{proof}
Using the bounds of Lemma~\ref{lm:xy-powerlaw} above. 
If $\alpha\in \left(1, 2-\frac 1{q-1}\right)$
the improvement of the degree based algorithm over the id based (naive) algorithm is 
\begin{equation*}
\begin{split}
    \frac{\expect[Y(q)]}{\expect[X(q)]}
    &\geq \frac{n^{\alpha-1+\half (2-\alpha)q}}
    {n^{\half+\half(2-\alpha)(q-1)}}
    =n^{\half(\alpha-1)}
\end{split}
\end{equation*}
If $\alpha\in \left[2-\frac 1{q-1},1\right)$
the improvement of the degree based algorithm over the id based (naive) algorithm is 
\begin{equation*}
\begin{split}
    \frac{\expect[Y(q)]}{\expect[X(q)]}
    &\geq \frac{n^{\alpha-1+\half (2-\alpha)q}}
    {n \left(\log n\right)^{q-1}}
    =n^{\alpha-2 +\half(2-\alpha)q}\cdot  \left(\log n\right)^{1-q}
\end{split}
\end{equation*}
\end{proof}

Note that if If $\alpha\in \left(2-\frac 1{q-1},1\right)$ then the
second term vanishes.
\if 0 When $\alpha=2-1/(\ell-1)$, the exponent of $n$ is $\frac1{2}(1+1/(\ell-1))$, as required.\fi

\section{Power law and balanced sequences}
In this section we show that a degree sequence that satisfies the
truncated power law is also balanced.

\begin{claim}\label{cl:balanced}
    Any degree sequence $\d$ that satisfies the truncated power law
    with exponent $\alpha$, for any $\alpha\in (1, 2)$ that is bounded
    away from $1$ and $2$ by constants, 
    is $\lambda$-balanced for $\lambda=O\left(n^{\half \alpha-1}\right)$.
\end{claim}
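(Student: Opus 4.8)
The plan is to estimate each power sum $S_s := \sum_u d_u^s$ via a dyadic decomposition of the degree scale and then bound the ratio $S_{a+b}/(S_a S_b)$ appearing in the definition of $\lambda$-balancedness. For $0 \le j \le \half \log_2 n$ let $B_j$ be the set of nodes whose degree lies in $[2^j, 2^{j+1})$; by the truncated power law $|B_j| = \Theta(n/2^{\alpha j})$, and every $u \in B_j$ has $d_u = \Theta(2^j)$. Hence, for any fixed exponent $s \ge 1$,
\[
S_s = \sum_{j=0}^{\half\log_2 n} \sum_{u\in B_j} d_u^s = \Theta\!\left(n \sum_{j=0}^{\half\log_2 n} 2^{(s-\alpha)j}\right),
\]
the factor-two spread of degrees inside each band affecting only the hidden constant (legitimate here, since the exponents arising in our application are bounded by a constant).

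First I would evaluate this geometric sum in the two regimes that occur. Because $\alpha$ is bounded away from $1$ and $2$ by constants, the ratio $2^{s-\alpha}$ is bounded away from $1$ in both directions, so the sum is always dominated — up to an absolute constant — by one of its endpoints. For the single relevant exponent below $\alpha$, namely $s=1$, we have $1-\alpha<0$ bounded away from $0$, the series converges, and $S_1 = \Theta(n) = \Theta(2m)$. For every integer $s \ge 2$ we have $s-\alpha \ge 2-\alpha > 0$ bounded away from $0$, so the series is dominated by its top term $j=\half\log_2 n$, giving $\sum_j 2^{(s-\alpha)j} = \Theta(n^{(s-\alpha)/2})$ and therefore $S_s = \Theta\!\left(n^{1+(s-\alpha)/2}\right)$. (For $s=2$ this reproduces the estimate $\sum_u d_u^2=\Theta(n^{2-\alpha/2})$ used in Lemma~\ref{lm:xy-powerlaw}.)

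Next I would substitute these estimates into $S_{a+b}/(S_a S_b)$ and split into cases according to whether $a$ and $b$ equal $1$ or are at least $2$. When $a,b\ge 2$ all three sums are in the increasing regime, and the exponent of $n$ in the ratio is
\[
\Big(1+\tfrac{a+b-\alpha}{2}\Big)-\Big(1+\tfrac{a-\alpha}{2}\Big)-\Big(1+\tfrac{b-\alpha}{2}\Big)=\tfrac{\alpha}{2}-1 .
\]
When exactly one of them equals $1$, say $a=1$ and $b\ge 2$, the exponent works out to $-\tfrac12$, and when $a=b=1$ it is $-\tfrac{\alpha}{2}$; since $\alpha>1$, both of these are strictly smaller than $\tfrac{\alpha}{2}-1$. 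Thus in every case $S_{a+b}/(S_aS_b)=O(n^{\alpha/2-1})$, with the bound attained (up to constants) precisely when $a,b\ge2$. Taking $\lambda$ to be the largest over these finitely many structurally distinct cases yields $\lambda=O(n^{\alpha/2-1})$, and since $\alpha<2$ this is $o(1)$, consistent with the requirement $\lambda\in(0,1)$ for large $n$.

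The main obstacle is not the arithmetic but justifying the clean $\Theta$-estimates of the power sums: everything hinges on the geometric ratios $2^{1-\alpha}$ and $2^{2-\alpha}$ being bounded away from $1$, which is exactly what the hypothesis that $\alpha$ is bounded away from the endpoints $1$ and $2$ provides — without it, neither the top nor the bottom band would dominate by a constant factor and the exponent bookkeeping would pick up stray $\log n$ or degrading factors. A secondary point worth checking is that the factor-two variation of degrees within a band contributes only a constant for each fixed exponent, so that the worst case $a,b\ge2$ governs and pins the exponent at exactly $\alpha/2-1$.
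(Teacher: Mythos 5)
Your proof is correct and takes essentially the same route as the paper's own: the same dyadic band estimates giving $\sum_u d_u = \Theta(n)$ and $\sum_u d_u^s = \Theta\left(n^{1+(s-\alpha)/2}\right)$ for integer $s\geq 2$, followed by the identical three-case analysis ($a,b\geq 2$; exactly one of $a,b$ equal to $1$; $a=b=1$) producing the exponents $\alpha/2-1$, $-1/2$, and $-\alpha/2$, of which the first dominates since $\alpha>1$. The only (minor, and welcome) difference is that you flag explicitly where the hypothesis that $\alpha$ is bounded away from $1$ and $2$ is used and that the hidden constants are uniform only because the exponents in the application are bounded, points the paper leaves implicit.
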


\begin{proof}
The intuition behind the proof is simple: the bound that needs to be
satisfied by a balanced sequence
holds because a degree sequence that
satisfies the truncated power law contains about $n^{\half(1-\alpha)}$
nodes of degree $\sqrt{n}$ (the largest), which means that the edge mass
is somewhat spread among these nodes, leading to the result of the lemma. We now give the details.

Recall that by definition of the truncated power law, for  each
$j=0,\ldots, \half \log_2 n$, the number of nodes with degree
$\Theta(2^j)$ is $\Theta(n/2^{\alpha j})$.
For any integer $s\geq 2$, we have 
\begin{equation*}
\begin{split}
\sum_{u} d_u^s&
    =\Theta\left(\sum_{j=0}^{\frac1{2}\log_2 n} 2^{s\cdot j}\cdot\frac{n}{2^{\alpha j}}\right)
    =\Theta\left(\sum_{j=0}^{\frac1{2}\log_2 n} 2^{(s-\alpha)\cdot j}\cdot n\right).
\end{split}
\end{equation*}
Since $s\geq 2$ and $\alpha$ is bounded away from $2$ by assumption,
the summation is dominated by the last term. Namely,
\begin{equation*}
\begin{split}
\sum_{u} d_u^s&
    =\Theta\left(\sum_{j=0}^{\half\log_2 n} 2^{(s-\alpha)\cdot j}\cdot n\right)
    =\Theta\left(n^{s/2}\cdot n^{1-\alpha/2}\right)
\end{split}
\end{equation*}

\eat{
We now apply the derivation above with $q=a+b$. Note that $q\geq 2$, as required, since $a, b\geq 1$. This gives 
\begin{equation}\label{eq:apb}
\begin{split}
\sum_{u} d_u^{a+b}&=\Theta(n^{(a+b)/2}\cdot n^{1-\alpha/2}).\\
\end{split}
\end{equation}
}

We now show that for 
for any integers $a, b\geq 1$,
$\sum_{u} d_u^{a+b}\leq \lambda \cdot (\sum_{u} d_u^a)(\sum_{u} d_u^b)$.

We distinguish  the following three cases:\\

\noindent {\bf Case 1: $a\geq 2$ and $b\geq 2$.} In this case the derivation above with $q=a$ and $q=b$ gives 
\begin{equation*}
\begin{split}
\sum_{u} d_u^a&=\Theta(n^{a/2}\cdot n^{1-\alpha/2})~~~~
\sum_{u} d_u^b=\Theta(n^{b/2}\cdot n^{1-\alpha/2}).\\
\end{split}
\end{equation*}
It follows that

\begin{equation*}
\begin{split}
\frac{\sum_{u} d_u^{a+b}}{(\sum_{u} d_u^a)(\sum_{u} d_u^b)}
&=\Theta\left(\frac{n^{(a+b)/2}\cdot n^{1-\alpha/2}}{n^{a/2}\cdot n^{1-\alpha/2}\cdot n^{b/2}\cdot n^{1-\alpha/2}}\right)\\
&=\Theta\left(n^{\alpha/2-1}\right),
\end{split}
\end{equation*}
which gives the result.

\noindent {\bf Case 2: $a=1$, $b\geq 2$.} Since $\alpha$ is bounded
away from $1$ by a constant we have 
\begin{equation}\label{eq:num-edges}
    \sum_{u} d_u=\sum_{j=0}^{\frac1{2}\log_2 n} 2^j\cdot \frac{n}{2^{\alpha j}}=\Theta(n)
\end{equation}
On the other hand, since $b\geq 2$ and $a+b\geq 2$, 
\begin{equation*}
\begin{split}
\sum_{u} d_u^{a+b}&=\Theta(n^{(a+b)/2}\cdot n^{1-\alpha/2})~~~~
\sum_{u} d_u^{b}=\Theta(n^{b/2}\cdot n^{1-\alpha/2})\\
\end{split}
\end{equation*}
Putting the estimates above together, we get
\begin{equation*}
\begin{split}
\frac{\sum_{u} d_u^{a+b}}{(\sum_{u} d_u^a)(\sum_{u} d_u^b)}
&=\Theta\left(\frac{n^{(a+b)/2}\cdot n^{1-\alpha/2}}{n\cdot n^{b/2}\cdot n^{1-\alpha/2}}\right)
=\Theta\left(n^{-1/2}\right).
\end{split}
\end{equation*}
The result follows since $n^{\alpha/2-1}\geq n^{-1/2}$ by the assumption that $\alpha\in (1, 2)$.
Note that the case $a \geq 2$, $b=1$ is symmetric.

\noindent {\bf Case 3: $a=1$, $b=1$.} Then we have 
\begin{equation*}
\begin{split}
\sum_{u} d_u^{a+b}&=\Theta(n^{(a+b)/2}\cdot n^{1-\alpha/2})~~~~
\sum_{u} d_u^{a}=\Theta(n)~~~~
\sum_{u} d_u^{b}=\Theta(n)
\end{split}
\end{equation*}
by the estimates above, and hence 
\begin{equation*}
\begin{split}
\frac{\sum_{u} d_u^{a+b}}{(\sum_{u} d_u^a)(\sum_{u} d_u^b)}&=\Theta\left(\frac{n^{(a+b)/2}\cdot n^{1-\alpha/2}}{n\cdot n}\right)\\
&=\Theta\left(n^{-\alpha/2}\right)
    %=\Theta\left(n^{\alpha/2-1+(1-\alpha)}\right)=\Omega(n^{\alpha/2-1})
\end{split}
\end{equation*}
The result follows since $n^{\alpha/2-1}\geq n^{-\alpha/2}$ by the assumption that $\alpha\in (1, 2)$.
\end{proof}

%\bibliographystyle{plain}
%\bibliography{main}   

\eat{

}

\end{document}